\theoremstyle{plain}
\newtheorem{thm}{Theorem}[section]
\newtheorem{cor}[thm]{Corollary}
\newtheorem{lem}[thm]{Lemma}
\newtheorem{prop}[thm]{Proposition}
\theoremstyle{definition}
\newtheorem*{defns}{Definition}
\theoremstyle{remark}
\newtheorem*{rems}{Remark}
\newcommand{\vtsp}{\hspace{0.1em}}
\newlength{\phantomheight}
\DeclareMathOperator{\Dom}{Dom}
\DeclareMathOperator{\Ker}{Ker}
\DeclareMathOperator{\Ran}{Ran}
\DeclareMathOperator{\Span}{Sp}
\DeclareMathOperator{\sgn}{sgn}
\DeclareMathOperator{\Trace}{Tr}
\newcommand{\supp}{\mathrm{supp}}
\newcommand{\R}{\mathbb{R}}
\newcommand{\N}{\mathbb{N}}
\newcommand{\C}{\mathbb{C}}
\newcommand{\Z}{\mathbb{Z}}
\renewcommand{\epsilon}{\varepsilon}
\newcommand{\ipd}[2]{\langle{#1},{#2}\rangle}
\newcommand{\bigipd}[2]{\bigl\langle{#1},\vtsp{#2}\bigr\rangle}
\newcommand{\abs}[1]{\lvert{#1}\rvert}
\newcommand{\lrabs}[1]{\left\lvert{#1}\right\rvert}
\newcommand{\bigabs}[1]{\bigl\lvert{#1}\bigr\rvert}
\newcommand{\norm}[1]{\lVert{#1}\rVert}
\newcommand{\bignorm}[1]{\bigl\lVert{#1}\bigr\rVert}
\renewcommand{\d}{\,d}
\DeclareMathOperator{\spec}{spec}
\DeclareMathOperator{\Hom}{Hom}
\DeclareMathOperator{\Cl}{Cl}
\DeclareMathOperator{\trace}{tr}
\newcommand{\sphere}[1][]{\mathbb{S}_{#1}^2}  %sphere
\newcommand{\tsphere}{\mathbb{S}^3}  %sphere
\newcommand{\disc}[1][1]{\ifthenelse{\equal{#1}{1}}{\mathbb{D}}{\mathbb{D}_{#1}}}
\newcommand{\flux}[1]{\Phi(#1)}
\newcommand{\Dirac}[1]{\mathcal D_{#1}}
\newcommand{\sDirac}[2]{\mathcal D^{(#1)}_{#2}}
\newcommand{\Pauli}{\mathcal P}
\newcommand{\sD}{\mathcal D}  %Dirac on the sphere
\newcommand{\sA}{\mathcal A}  %Dirac operator form of potential for non-constant part of $\beta$
\newcommand{\oP}{\mathbf{P}}
\newcommand{\oU}{\mathbf{U}}
\newcommand{\oQ}{\mathbf{Q}}
\newcommand{\oK}{\mathbf{K}}
\newcommand{\oL}{\mathbf{L}}
\newcommand{\oI}{\mathbf{I}}
\newcommand{\Zazm}{\mathbb{M}_\epsilon}
\newcommand{\Zbig}{\mathbb{M}_R'}
\newcommand{\PKazm}[1][\pm]{\Pi^{#1}_{\epsilon}}
\newcommand{\PKbig}{\Pi_{R}'}
\newcommand{\PKrem}{\Pi_{\epsilon,R}}
\newcommand{\ESf}[1][]{\mathcal B^{#1}_{\text{ES}}}
\newcommand{\cfn}[1]{\mathsf{N}_{#1}}
\newcommand{\cfnk}{\mathsf{N}_{B}^{(k)}}
\newcommand{\Mcfnk}{\mathsf{M}_{B}^{(k)}}
\newcommand{\cfD}[2][]{\ifthenelse{\equal{#2}{}}{\mathsf{n}_{#1}}{\mathsf{n}_{#1}(#2)}}
\newcommand{\dfD}[2][]{\mathsf{n}_{#1}(#2)}
\newcommand{\spin}{\mathcal H}  %spinor space
\newcommand{\bspin}{\spin^2}  %bi-spinor space
\newcommand{\gaI}{J}
\newcommand{\gae}[2][\epsilon]{s^{#2}_{#1}}
\newcommand{\gaIe}[2][\epsilon]{J^{#2}_{#1}}
\newcommand{\ca}{a}
\newcommand{\Co}[1]{C_{#1}}  %global constant depending on $B$
\newcommand{\fipd}[3]{\langle{#2},{#3}\rangle_{#1}}
\newcommand{\fnorm}[2]{\lvert{#2}\rvert_{#1}}
\newcommand{\bigfnorm}[2]{\bigl\lvert{#2}\bigr\rvert_{#1}}
\newcommand{\stproj}[1]{z_{#1}}
\newcommand{\confwS}[1]{\Omega_{#1}}
\newcommand{\confwR}{\widetilde{\Omega}}
\newcommand{\dscmt}{\Delta}
\newcommand{\spbun}[1]{\ifthenelse{\equal{#1}{}}{\Psi}{\Psi^{(#1)}}}
\newcommand{\sLbun}[2]{\ifthenelse{\equal{#1}{}}{L_{#2}}{L^{(#1)}_{#2}}}
\newcommand{\detLbun}[1]{\ifthenelse{\equal{#1}{}}{L}{L^{(#1)}}}
\newcommand{\spc}[1][]{\widetilde{\nabla}_{#1}}
\newcommand{\LCc}[1][]{\nabla_{#1}}
\newcommand{\spf}[2]{\omega_{#1,#2}}
\newcommand{\volR}{\mathbf{v}_{\R^2}}
\newcommand{\volS}{\mathbf{v}_{\sphere}}
\newcommand{\formS}[1]{\Omega^{#1}(\sphere)}
\newcommand{\cont}[2][]{\ifthenelse{\equal{#1}{}}{C^{#2}}{C^{#2}_{#1}}}
\newcommand{\iof}[1]{\rho_{#1}}
\newcommand{\vb}[1]{e_{#1}}
\newcommand{\fb}[1]{\theta_{#1}}
\newcommand{\puS}[1]{\chi_{\delta,#1}}
\newcommand{\pxiS}[1]{\xi_{\delta,#1}}
\newcommand{\petaR}[1]{\eta_{\delta,#1}}
\title{Asymptotics for Erd\H{o}s-Solovej Zero Modes in Strong Fields}
\author{Daniel M.~Elton}%\\\footnote{e-mail:\texttt{d.m.elton@lancs.ac.uk}}}
\begin{document}

\maketitle

\begin{abstract}
We consider the strong field asymptotics for the occurrence of zero modes of certain Weyl-Dirac operators on $\R^3$.
In particular we are interested in those operators $\Dirac{B}$ for which the associated magnetic field $B$
is given by pulling back a $2$-form $\beta$ from the sphere $\sphere$ to $\R^3$ 
using a combination of the Hopf fibration and inverse stereographic projection. If $\int_{\sphere}\beta\neq0$ we show that
\[
\sum_{0\le t\le T}\dim\Ker\Dirac{tB}
=\frac{T^2}{8\pi^2}\,\biggl\lvert\int_{\sphere}\beta\biggr\rvert\,\int_{\sphere}\abs{\beta}+o(T^2)
\]
as $T\to+\infty$. The result relies on Erd\H{o}s and Solovej's characterisation of the spectrum of $\Dirac{tB}$ in terms of a family of Dirac operators on $\sphere$, 
together with information about the strong field localisation of the Aharonov-Casher zero modes of the latter. 
%together with information about Aharonov-Casher zero modes and approximate zero modes of $2$-dimensional Dirac operators. 

\medskip

\noindent
\emph{2010 Mathematics Subject Classification:} 35P20, 81Q10, 35Q40, 35J46.\\
\emph{Keywords:} Weyl-Dirac operator, zero modes. 
\end{abstract}

\section{Introduction}

Suppose $B$ is a (smooth) magnetic field on $\R^3$, 
viewed either as a divergence free vector field $B=(B_1,B_2,B_3)$ or as a closed $2$-form
\[
B=B_1\d x_2\wedge d x_3+B_2\d x_3\wedge d x_1+B_3\d x_1\wedge d x_2.
\]
Choose a corresponding magnetic potential (or $1$-form) $A=A_1\d x_1+A_2\d x_2+A_3\d x_3$ which generates $B$ in the sense that $B=dA$
(such potentials exist by Poincar\'e's Lemma).
A Weyl-Dirac operator operator can then be defined by
\begin{equation}
\label{WDR3:eq}
\Dirac{\R^3,B}%=\sigma.(-i\nabla-A),
=\sum_{j=1}^3\sigma_j\vtsp(-i\nabla_j-A_j),
\end{equation}
where $\sigma_1$, $\sigma_2$ and $\sigma_3$ are the Pauli matrices and $\nabla=(\nabla_1,\nabla_2,\nabla_3)$ denotes the usual gradient operator on $\R^3$. The operator $\Dirac{\R^3,B}$ acts on $2$ component spinor-fields which, on $\R^3$, can be viewed simply as $\C^2$ valued functions. Standard arguments (see \cite[Theorem 4.3]{Th} for example) show that $\Dirac{\R^3,B}$ is essentially self-adjoint on $\C^\infty_0$. We also use $\Dirac{\R^3,B}$ to denote the corresponding closure which is an unbounded self-adjoint operator on $L^2(\R^3,\C^2)$.

We are interested in the question of when $0$ is an eigenvalue of $\Dirac{\R^3,B}$ or, equivalently, of determining when $\Dirac{\R^3,B}$ has a non-trivial kernel.

\begin{defns}
Any eigenfunction of $\Dirac{\R^3,B}$ corresponding to $0$ is called a \emph{zero mode}.
\end{defns}

\begin{rems}
The potential $A$ (and hence the operator $\Dirac{\R^3,B}$) is not uniquely determined by $B$.
However if $dA=B=dA'$ then $A-A'=d\phi$ for some $\phi\in\cont{\infty}(\R^3)$ (using Poincar\'e's Lemma). Multiplication by $e^{i\phi}$ then establishes a unitary equivalence between the operators $\Dirac{\R^3,B}$ defined using the potentials $A$ and $A'$. 
It follows that spectral properties of $\Dirac{\R^3,B}$, and in particular the existence of zero modes, depend only on $B$.
\end{rems}

Zero modes have been studied in a number of contexts in mathematical physics including the stability of matter (\cite{FLL}, \cite{LY}) and chiral gauge theories (\cite{AMN1}, \cite{AMN2}). 
Most early work concentrated on the construction of explicit examples, including the original example (\cite{LY}), examples with arbitrary multiplicity (\cite{AMN2}), compact support (\cite{E1}) and a certain rotational type of symmetry (\cite{ES}; further details below). Some subsequent work moved toward studying the set of all zero mode producing fields (or potentials) within a given class; in particular, this set is nowhere dense (\cite{BE1}, \cite{BE2}) and is  generically a co-dimension $1$ sub-manifold (\cite{E2}; slightly different classes of potentials were considered in these works). 

To further our understanding of which fields produce zero modes it is reasonable to consider the problem in various asymptotic regimes. 
We focus on the strong field regime (which, via a simple rescaling of the zero mode equation, is equivalent to the semi-classical regime).
For a fixed field $B$ define a counting function $\cfn{B}$ by
\[
\cfn{B}(T)=\sum_{0\le t\le T}\dim\Ker\Dirac{\R^3,tB}
\]
for any $T\in\R^+$. The behaviour of $\cfn{B}(T)$ as $T\to+\infty$ is more regular than that of $\dim\Ker\Dirac{\R^3,tB}$ and clearly gives information about the occurrence of zero modes for strong fields. 

In \cite{ET} an upper bound of the form $\cfn{B}(T)\le C\norm{A}_{L^3}^3T^3$ was obtained, valid for any $T\ge0$ and potential $A\in L^3$ (with $B=dA$). The purpose of the present work is to determine the precise leading order asymptotic behaviour of $\cfn{B}(T)$ as $T\to+\infty$ for a large class of symmetric magnetic fields first considered in \cite{ES}. Before defining this class we need to introduce some supporting ideas and notation.

Let $\formS{2}$ denote the set of $2$-forms on $\sphere$ and let $\volS\in\formS{2}$ denote the standard volume $2$-form. 
Any $\beta\in\formS{2}$ can then be written as $\beta=f\volS$ for a unique $f\in C^\infty(\sphere)$. The \emph{flux} of $\beta$ is defined to be
\[
\flux{\beta}=\frac1{2\pi}\int_{\sphere}\beta=\frac1{2\pi}\int_{\sphere}f\volS.
\]
We also define $\abs{\beta}$ to be the (not necessarily smooth) $2$-form given by $\abs{\beta}=\abs{f}\volS$. 

\begin{defns}
Let $h:\tsphere\to\sphere$ and $\pi:\tsphere\setminus\{(0,0,0,-1)\}\to\R^3$ denote the Hopf fibration and stereographic projection respectively. Set
\[
\ESf[\prime]=\left\{(\pi^{-1})^*h^*\beta:\beta\in\formS{2},\;\flux{\beta}\neq0\right\}
\]
(where ${}^*$ denotes pullback). 
Define $\ESf$ similarly except without the condition $\flux{\beta}\neq0$.
\end{defns}

Elements of $\ESf$ are closed $2$-forms on $\R^3$ and can thus be viewed as magnetic fields (note that, all $2$-forms on $\sphere$ are closed). 
Furthermore fields $B\in\ESf$ are smooth and satisfy bounds of the form $\abs{B(x)}=O(\abs{x}^{-4})$ as $\abs{x}\to\infty$, while it is always possible to find a smooth potential $A$ with $B=dA$ which satisfies bounds of the form $\abs{A(x)}=O(\abs{x}^{-3})$ as $\abs{x}\to\infty$. It follows that fields in $\ESf$ (and their associated potentials) fall into the classes considered in \cite{BE1}, \cite{BE2} and \cite{E2}.

\medskip

Our main result is the following.

\begin{thm}
\label{mainres1:thm}
Let $B\in\ESf[\prime]$ with $B=(\pi^{-1})^*h^*\beta$ for $\beta\in\formS{2}$. Then
\begin{equation}
\label{mainres:eq}
\cfn{B}(T)=\tfrac12\,\abs{\flux{\beta}}\,\flux{\abs{\beta}}\,T^2+o(T^2)
\quad\text{as $T\to+\infty$.}
\end{equation}
\end{thm}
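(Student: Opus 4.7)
The plan is to reduce the counting problem on $\R^3$ to a counting problem for a family of Dirac operators on $\sphere$ using the Erd\H{o}s--Solovej characterisation, and then to extract the leading-order $T^2$ asymptotics from a precise (semiclassical) Aharonov--Casher-type analysis on $\sphere$. First I would record the Erd\H{o}s--Solovej decomposition in the form most useful here: for $B=(\pi^{-1})^*h^*\beta\in\ESf$ the Weyl--Dirac operator $\Dirac{\R^3,tB}$ is unitarily equivalent (modulo a conformal factor coming from stereographic projection and the Hopf fibration) to an infinite direct sum $\bigoplus_{k\in\Z}\sDirac{k}{t\beta}$ of twisted Dirac operators on $\sphere$, where the $k$-th operator acts on spinors twisted by the $k$-th power of the Hopf line bundle and sees the magnetic $2$-form $t\beta$. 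In particular
\[
\dim\Ker\Dirac{\R^3,tB}
=\sum_{k\in\Z}\dim\Ker\sDirac{k}{t\beta}.
\]
Summing over $t\in[0,T]$ therefore gives $\cfn{B}(T)=\sum_{k\in\Z}\Mcfnk(T)$ with $\Mcfnk(T)=\sum_{0\le t\le T}\dim\Ker\sDirac{k}{t\beta}$, reducing the problem to understanding spherical counting functions.

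Next I would analyse each $\sDirac{k}{t\beta}$. The total flux through $\sphere$ of the relevant effective magnetic $2$-form is affine in $t$, of the form $t\,\flux{\beta}+k$ (up to a normalisation depending on conventions); by a sign change in $k$ we may assume $\flux{\beta}>0$. For each fixed $k$, the Aharonov--Casher theorem on $\sphere$ identifies $\dim\Ker\sDirac{k}{t\beta}$ as essentially the integer part of this effective flux, with non-triviality concentrated at the discrete values $t^{(k)}_n$ where the flux crosses an integer $n$. Crucially, in the strong field regime $t\to+\infty$ one needs more: the Aharonov--Casher zero modes localise in regions where $\beta$ has a definite sign, so the range of $k$ for which $\sDirac{k}{t\beta}$ produces zero modes at a given scale of $t$ is controlled not by $\flux{\beta}$ but by $\int_{\sphere}\abs{\beta}$. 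More precisely I expect (and would need to establish) that, for large $t$, $\dim\Ker\sDirac{k}{t\beta}$ is a step function in $k$ and $t$ whose support is essentially the strip $-\tfrac{t}{4\pi}\int|\beta|\lesssim k\lesssim \tfrac{t}{4\pi}\int|\beta|-t\flux{\beta}$ (or the reflected strip), with heights given by the associated integer parts.

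Summing these contributions yields $\cfn{B}(T)$ as a lattice-point count in a planar region of the $(t,k)$-plane, weighted by the effective flux height. The slope of the strip boundary is governed by $\flux{|\beta|}$ (through $\int|\beta|$) and the height grows linearly with $t$ at a rate governed by $\flux{\beta}$, so standard lattice-point asymptotics give, to leading order,
\[
\cfn{B}(T)=\frac{T^2}{8\pi^2}\,\abs{\flux{\beta}}\int_{\sphere}\abs{\beta}\cdot\frac{1}{2\pi}\cdot 2\pi+o(T^2)
=\tfrac12\,\abs{\flux{\beta}}\,\flux{\abs{\beta}}\,T^2+o(T^2),
\]
matching \eqref{mainres:eq}. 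The error is captured by boundary lattice fluctuations and by rounding of the Aharonov--Casher integer parts, both of which are $O(T)$ contributions and hence absorbed into $o(T^2)$.

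The main obstacle will be step two: establishing a sufficiently uniform strong-field version of Aharonov--Casher on $\sphere$ which gives the exact dimension of $\Ker\sDirac{k}{t\beta}$ (not merely a lower bound from the index or an upper bound from total flux) with error terms uniform in $k$. The issue is that the naive index/Aharonov--Casher count on $\sphere$ involves the total flux $\flux{t\beta}+k$, which for large $t$ and moderate $k$ grossly overestimates the kernel dimension when $\beta$ changes sign: the positive-chirality zero modes live only where the effective field is positive, and their number saturates near $\tfrac{t}{4\pi}\int\beta_+-\text{integer shift}$ rather than at the flux. Carrying this localisation through rigorously, and combining it with the requisite lattice-point book-keeping in $(t,k)$ to extract the precise $o(T^2)$ remainder, will be the technical core of the argument.
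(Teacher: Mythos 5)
The reduction to $\sphere$ you propose is not the one that actually holds, and the error is not cosmetic. You claim that $\Dirac{\R^3,tB}$ is unitarily equivalent to a direct sum $\bigoplus_{k\in\Z}\sDirac{k}{t\beta}$ and hence that $\dim\Ker\Dirac{\R^3,tB}=\sum_k\dim\Ker\sDirac{k}{t\beta}$. The Erd\H{o}s--Solovej result (\cite[Theorem 8.1]{ES}, stated here as Theorem \ref{ESres1:thm}) says something quite different: after passing to $\tsphere$, the eigenvalues of $\Dirac{\tsphere,tb}$ are not eigenvalues of a family of operators on $\sphere$ but rather the numbers $-\tfrac12\pm\sqrt{4\lambda^2+(k-t)^2}$ with $\lambda$ ranging over $\spec^+(\sDirac{k}{t\alpha})$, together with the explicit set $\Sigma_k$. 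Consequently $0\in\spec\Dirac{\R^3,tB}$ exactly when, for some $k$, either $0\in\Sigma_k$ or $4\lambda^2+(k-t)^2=\tfrac14$ for some $\lambda\in\spec^+(\sDirac{k}{t\alpha})$. This forces $t\in\tau_k$ and $\abs{\lambda}\le\tfrac14$, so zero modes of the $\R^3$ operator are produced by \emph{small but generally nonzero} eigenvalues of the spherical operators, not by their kernels.

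This undermines the rest of your sketch. In the correct normalisation (writing $\beta=\tfrac12\volS+d\alpha$ with $\flux{\beta}=1$), the operator $\sDirac{k}{t\alpha}$ lives on the bundle $\spbun{k}$ and so has total flux $k$ for every $t$; by Aharonov--Casher on $\sphere$ its kernel has dimension $\abs{k}$, independent of $t$. There is no affine-in-$t$ flux $t\flux{\beta}+k$, and $\dim\Ker\sDirac{k}{t\alpha}$ is not a step function of $t$ whose jumps you can count. The quantity that does vary and carries the information is the number of eigenvalues of $\sDirac{k}{t\alpha}$ in $[-\tfrac14,\tfrac14]$, and the factor $\flux{\abs{\beta}}$ enters through an asymptotic count of these \emph{approximate} zero modes (Theorem \ref{basicappzmS2:thm}), which is the genuine point where strong-field localisation and $\int\abs{\beta}$ appear. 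Your intuition that $\int\abs{\beta}$ should govern the count for sign-changing $\beta$ is the right one, but the mechanism you propose (a uniform strong-field Aharonov--Casher giving the exact kernel dimension of $\sDirac{k}{t\beta}$) is not what happens: the kernel is rigid, and the delicate analysis is about small nonzero eigenvalues and their perturbation in $t$, controlled via Proposition \ref{spinalign1:prop}, Corollary \ref{muntrelmunk:cor}, and the linearisation/Weyl-inequality argument of Sections \ref{linear:sec}--\ref{mainarg:sec}. (As a small additional point, the final arithmetic in your sketch introduces an extraneous $\tfrac1{2\pi}\cdot 2\pi$ and does not reduce to $\tfrac12\abs{\flux{\beta}}\flux{\abs{\beta}}T^2$; the correct bookkeeping is $\int_{\sphere}\abs{\beta}=2\pi\flux{\abs{\beta}}$ and $\abs{\int_{\sphere}\beta}=2\pi\abs{\flux{\beta}}$, so $\tfrac{T^2}{8\pi^2}\cdot 2\pi\abs{\flux{\beta}}\cdot 2\pi\flux{\abs{\beta}}=\tfrac12\abs{\flux{\beta}}\flux{\abs{\beta}}T^2$.)
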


The lower asymptotic bound in \eqref{mainres:eq}, together with the explicit form of $\cfn{B}(T)$ for the special case of the ``constant'' field $\beta=\volS$, were obtained in \cite{T}. It is also clear where the argument for the upper bound in \cite{ET} may gain an order in $T$, although it remains unclear whether the $O(T^3)$ upper bound might yet be sharp for some magnetic field $B$. 

Fields in $\ESf$ are invariant under the symmetry of $\R^3$ induced by the rotation of $\tsphere$ along the $\mathbb{S}^1$ fibres of the Hopf fibration. The main work in \cite{ES} is to show how this symmetry can be used to express the spectrum of $\Dirac{\R^3,tB}$ in terms of the spectra of a family of Dirac operators on $\sphere$ (see Section \ref{RtoS:sec} for further details). To calculate $\cfn{B}(T)$ we need to consider eigenvalues of the latter with modulus up to $1/4$. 
Aharonov-Casher zero modes (see Theorem \ref{ACS2:thm}) correspond to an eigenvalue of $0$ and 
contribute $\tfrac12\abs{\flux{\beta}}^2$ to the leading order coefficient on the right hand side of \eqref{mainres:eq}; when $\beta$ has a variable sign the remaining part of this coefficient comes from ``approximate zero modes'' which arise from the localising effects of strong fields (see Section \ref{azmlb:sec} for further details).

\medskip

This paper is organised as follows.
Some background on Dirac operators on $\sphere$ is outlined in Section \ref{DiracOpSphere:sec} while the key results we require from \cite{ES} 
are stated at the start of Section \ref{RtoS:sec}. The proof of Theorem \ref{mainres1:thm} is then reduced to determining the large $k$ 
asymptotics of a spectral quantity $\cfnk$ relating to a family of Dirac operators on $\sphere$; see \eqref{ZM2:eq} and Theorem \ref{typeIIest1:thm}.

The relatively straightforward lower bound in Theorem \ref{typeIIest1:thm} is covered in Section \ref{azmlb:sec}.
Necessary information about the asymptotic number of approximate zero modes for Dirac operators on $\sphere$ 
is given in Theorem \ref{basicappzmS2:thm} and justified in Section \ref{S2ctfn:sec} using equivalent results for the plane (from \cite{E3}).
Section \ref{azmlb:sec} concludes with further estimates relating to approximate zero modes;
some of the arguments rely on ideas from differential geometry and are deferred to Section \ref{S2spfldest:sec}.

The remaining sections are dedicated to the justification of the upper bound in Theorem \ref{typeIIest1:thm}.
In Section \ref{linear:sec} the quantity $\cfnk$ is expressed as the number of eigenvalues of a (non-self-adjoint) 
operator $\oL$ within a particular set; see Proposition \ref{evalLinfo:prop}. 
In turn this is estimated from the singular values of $\oL$ via Weyl's inequality; 
Section \ref{estsing:sec} is devoted to estimating the singular values while the argument is tied up in Section~\ref{mainarg:sec}.

\subsubsection*{Notation}

We use $\spec(T)$ to denote the set of eigenvalues of an operator $T$ with entries repeated according to geometric multiplicity.
The subset of positive eigenvalues is denoted by $\spec^+(T)$. 
General positive constants are denoted by $\Co{}$, with numerical subscripts used when we wish to keep track of specific constants in subsequent discussions.
The open disc in $\R^2$ with radius $r$ and centre $0$ is denoted $\disc[r]$, while $I_2$ denotes the $2\times 2$ identity matrix.

\section{Dirac operators on $\sphere$}
\label{DiracOpSphere:sec}

In order to discuss Dirac operators on $\sphere$ we firstly recall some
notions from Riemannian geometry as well as the idea of a spin$^c$ structure (spin$^c$ spinor bundles, Clifford multiplication and spin$^c$ connections). 
A fuller introduction can be found in \cite{F} (see also \cite{ES} for a discussion in a similar spirit to that presented here). 

Let $\fipd{\sphere}{\cdot}{\cdot}$ denote the standard Riemannian metric on (the tangent bundle of) $\sphere$, with corresponding norm $\fnorm{\sphere}{\cdot}$. The same symbols will be used for the induced metric on the exterior bundle $\wedge^*T^*\sphere$. For $n=0,1,2$ let $\formS{n}$ denote the set of $n$-forms (that is, sections of the $n$-form bundle $\wedge^nT^*\sphere$). 
Note that, $\int_{\sphere}\volS=4\pi$ while $\abs{\beta}=\fnorm{\sphere}{\beta}\volS$ for any $\beta\in\formS{2}$.

A \emph{spin$^c$ spinor bundle} $\spbun{}$ on $\sphere$ is a hermitian vector bundle over $\sphere$ with fibre $\C^2$ on which we can define Clifford multiplication. The latter is a unitary map $\sigma:T^*\sphere\to\Hom(\spbun{})$ which satisfies
\[
\sigma(\omega)\sigma(\rho)+\sigma(\rho)\sigma(\omega)=2\fipd{\sphere}{\omega}{\rho}I
\]
for all $1$-forms $\omega$ and $\rho$; here $\Hom(\spbun{})$ denotes the set of endomorphisms on $\spbun{}$ with inner product given by $\fipd{\Hom(\spbun{})}{A}{B}=\frac12\trace(A^*B)$, and $I\in\Hom(\spbun{})$ is the identity. (Clifford multiplication gives a unitary representation of the Clifford algebra $\Cl(T_x^*\sphere)$ on $\C^2$ which is isomorphic to the standard representation and varies smoothly with $x\in\sphere$.) 
%actually the next bit really follows from implementing this representation and using the natural linear isomorphism $\wedge^*V\to\Cl(V)$!!!
Clifford multiplication extends naturally as a linear isomorphism $\sigma:\wedge^*T^*\sphere\to\Hom(\spbun{})$; in particular 
\begin{equation}
\label{sjsvcomm1:eq}
\sigma(\omega)\sigma(\volS)+\sigma(\volS)\sigma(\omega)=0
\end{equation}
for any $1$-form $\omega$, while $\sigma(\volS)^2=I$. The latter expression allows us to write $\spbun{}=\sLbun{}{+}\oplus\sLbun{}{-}$ where the line bundles $\sLbun{}{\pm}$ are defined by $\xi\in\sLbun{}{\pm}$ iff $\sigma(\volS)\xi=\pm\xi$. We use $\fipd{\spbun{}}{\cdot}{\cdot}$ and $\fnorm{\spbun{}}{\cdot}$ to denote the (fibrewise) inner-product and norm on $\spbun{}$, while $\Gamma(\spbun{})$ is the space of \emph{spinors} (sections of $\spbun{}$).

Associated to a spin$^c$ spinor bundle $\Psi$ is a line bundle which (for $\sphere$) is given as $\detLbun{}=\Psi\wedge\Psi$ (the determinant bundle of $\Psi$). This line bundle determines $\Psi$ up to isomorphism (note that, $H^2(\sphere;\Z)\cong\Z$ which has no $2$-torsion). On $\sphere$ there are infinitely many mutually non-isomorphic spin$^c$ spinor bundles which we denote as $\spbun{k}$ for $k\in\Z$, labelled so that the first Chern number of the associated line bundle satisfies $c_1(\detLbun{k})[\volS]=2k$. 

\medskip

Fix $k\in\Z$. A \emph{spin$^c$ connection} on $\spbun{k}$ is a connection $\spc$ which is compatible with hermitian structure on $\spbun{k}$ and the Clifford multiplication. For $\xi,\eta\in\Gamma(\spbun{k})$ and $X\in T\sphere$ the former compatibility means
\[
X\fipd{\spbun{k}}{\xi}{\eta}=\fipd{\spbun{k}}{\spc[X]\xi}{\eta}+\fipd{\spbun{k}}{\xi}{\spc[X]\eta},
\]
while the latter means $[\spc[X],\sigma(\omega)]=\sigma(\LCc[X]\omega)$
for all forms $\omega$; here $\LCc$ is the Levi-Civita connection on $\sphere$ (for the metric $\fipd{\sphere}{\cdot}{\cdot}$). 
As $\LCc[X]\volS=0$ we get
\begin{equation}
\label{spcsvcomm:eq}
[\spc[X],\sigma(\volS)]
%=\sigma(\LCc[X]\volS)
=0.
\end{equation}

A spin$^c$ connection $\spc$ on $\spbun{k}$ is uniquely determined by a choice of (hermitian) connection on $\detLbun{k}$. It follows that the set of all spin$^c$ connections is an affine space modelled on $i\formS{1}$ (note that, $\detLbun{k}$ has structure group $U(1)$ with Lie algebra $i\R$). In particular, given $\spc$ any other spin$^c$ connection on $\spbun{k}$ can be written as $\spc-i\alpha$ for some $\alpha\in\formS{1}$.

The curvature of the connection $\spc$ can be viewed as the $\Hom(\spbun{k})$ valued $2$-form given by
\[
\widetilde{R}(X,Y)\xi
=\spc[X]\spc[Y]\xi-\spc[Y]\spc[X]\xi-\spc[{[X,Y]}]\xi
\]
for all $X,Y\in T\sphere$ and $\xi\in\spbun{k}$. The \emph{magnetic $2$-form} of $\spc$ is then defined to be $\beta=\frac{i}2\Trace(\widetilde{R})\in\formS{2}$. The first Chern class of $\detLbun{k}$ is the cohomology class of $\frac1\pi\beta$ so
\[
\flux{\beta}
=\frac1{2\pi}\int_{\sphere}\beta%=\frac12\int_{\sphere}\frac{\Trace(\widetilde{R})}{2\pi i}
=\tfrac12\,c_1(\detLbun{k})[\volS]=k;
\]
that is, the total flux of any magnetic $2$-form on $\spbun{k}$ must be equal to $k$. This flux condition is also sufficient for a $2$-form to be the magnetic $2$-form of a spin$^c$ connection on $\spbun{k}$. More precisely if $\beta'\in\formS{2}$ with $\flux{\beta'}=k$ then $\beta'=\beta+d\alpha$ for some $\alpha\in\formS{1}$ (this follows from the Hodge decomposition theorem and 
the fact that the harmonic $2$-forms on $\sphere$ are simply the constant multiples of $\volS$). 
A straightforward calculation then shows $\beta'$ is the magnetic $2$-form associated to the spin$^c$ connection $\spc'=\spc-i\alpha$. The choice of $\alpha$ is only unique up to the addition of a closed $1$-form. 

\medskip

The \emph{Dirac operator} corresponding to a 
given a spin$^c$ connection $\spc$ on $\spbun{k}$ is defined as $\Dirac{}=-i\Trace\sigma\spc$. 
% Given a spin$^c$ connection $\spc$ on $\spbun{k}$ we define a \emph{Dirac operator} $\Dirac{}=-i\Trace\sigma\spc$. 
If $\{\vb{1},\vb{2}\}$ is a local orthonormal frame (of vector fields) 
with corresponding dual frame $\{\fb{1},\fb{2}\}$ (of $1$-forms) we can equivalently write
\[
\Dirac{}
=-i\sigma(\fb{1})\spc[\vb{1}]-i\sigma(\fb{2})\spc[\vb{2}].
\]
The operator $\Dirac{}$ maps $\Gamma(\spbun{k})\to\Gamma(\spbun{k})$. Taking closures $\Dirac{}$ becomes a(n unbounded) self-adjoint operator on the $L^2$ sections of $\spbun{k}$; we denote the latter by $\spin$. Since $\Dirac{}$ is a first order elliptic differential operator on a compact manifold it has a compact resolvent and discrete spectrum. Furthermore \eqref{sjsvcomm1:eq} and \eqref{spcsvcomm:eq} give
\begin{equation}
\label{Diracsvolacomm:eq}
\Dirac{}(\sigma(\volS)\,\cdot\,)=-\sigma(\volS)\Dirac{},
\end{equation}
so the spectrum of $\Dirac{}$ is symmetric about $0$. Combined with the Aharonov-Casher theorem (\cite{AC}; see \cite{ES} for the $\sphere$ version) we then have the following.

\begin{thm}
\label{ACS2:thm}
For any Dirac operator $\Dirac{}$ on $\spbun{k}$ we have $\dim\Ker\Dirac{}=\abs{k}$, while the spectrum of $\Dirac{}$ is symmetric about $0$.
\end{thm}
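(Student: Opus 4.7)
The statement has two parts, which I would address separately.

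The spectral symmetry is immediate from \eqref{Diracsvolacomm:eq} combined with $\sigma(\volS)^2=I$: if $\Dirac{}\xi=\lambda\xi$ with $\xi\neq 0$, then
\[
\Dirac{}(\sigma(\volS)\xi)=-\sigma(\volS)\Dirac{}\xi=-\lambda\,\sigma(\volS)\xi,
\]
and $\sigma(\volS)\xi$ is nonzero since $\sigma(\volS)$ is a unitary involution. Hence $\sigma(\volS)$ induces a unitary bijection between the $\lambda$- and $(-\lambda)$-eigenspaces of $\Dirac{}$, so their dimensions agree.

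The identity $\dim\Ker\Dirac{}=|k|$ is the Aharonov--Casher theorem on $\sphere$, and the cleanest route is to cite \cite{ES}. If asked to reconstruct the proof, I would start from the chirality decomposition $\spin=\Gamma(\sLbun{k}{+})\oplus\Gamma(\sLbun{k}{-})$ coming from the $\pm 1$-eigenspaces of $\sigma(\volS)$, and note that \eqref{Diracsvolacomm:eq} forces $\Dirac{}$ to exchange the two summands. Writing $\Dirac{}=\Dirac{}^++\Dirac{}^-$ with $\Dirac{}^\pm:\Gamma(\sLbun{k}{\pm})\to\Gamma(\sLbun{k}{\mp})$ and $(\Dirac{}^+)^*=\Dirac{}^-$ yields $\Ker\Dirac{}=\Ker\Dirac{}^+\oplus\Ker\Dirac{}^-$, so it suffices to compute each factor.

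Next, using $\sphere\cong\mathbb{CP}^1$ with its natural complex structure, in local holomorphic coordinates the operators $\Dirac{}^\pm$ take the form of twisted Cauchy--Riemann operators on the line bundles $\sLbun{k}{\pm}$ (whose degrees are controlled by $k$). The Hodge-decomposition argument already spelled out in the paragraph preceding the theorem shows that any magnetic $2$-form with flux $k$ differs from a reference one by $d\alpha$ for some $\alpha\in\formS{1}$; one can then absorb this perturbation via a (complex) gauge transformation $\xi\mapsto e^{\phi}\xi$, reducing the zero-mode problem to counting holomorphic (resp.\ antiholomorphic) sections of line bundles of degree $\pm k$ on $\mathbb{CP}^1$. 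A direct count gives $\max(0,k)$ and $\max(0,-k)$ respectively, summing to $|k|$.

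The main obstacle is controlling the gauge dependence: $\Dirac{}$ depends on the full spin$^c$ connection, while the answer depends only on the flux $k$. The crux is to check that the multiplier $e^{\phi}$ trivializing the magnetic perturbation defines a bounded invertible map on the relevant $L^2$-spaces of spinors --- automatic once $\phi$ is smooth and $\sphere$ is compact --- so that the analytic zero-mode count matches the complex-analytic section count on $\mathbb{CP}^1$.
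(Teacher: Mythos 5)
Your proposal correctly reproduces the paper's treatment: the spectral symmetry comes from the anticommutation $\Dirac{}\sigma(\volS)=-\sigma(\volS)\Dirac{}$ exactly as you argue, and the kernel count is the Aharonov--Casher theorem, which the paper handles as a citation to \cite{AC}/\cite{ES}. The reconstruction you sketch is a genuinely different fleshing-out from the remark the paper attaches to the theorem: the paper points to the Atiyah--Singer index theorem (giving $\dim\Ker\Dirac{+}-\dim\Ker\Dirac{-}=\tfrac12 c_1(\detLbun{k})[\volS]=k$) plus a separate vanishing theorem, whereas you reduce both at once to counting holomorphic sections of line bundles on $\mathbb{CP}^1$ via a complex, non-unitary gauge transformation $e^{\phi}$. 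The two are equivalent, and your route is closer to the original Aharonov--Casher argument: the index becomes Riemann--Roch, and the vanishing is automatic because negative-degree line bundles on $\mathbb{CP}^1$ have no holomorphic sections. You correctly identify the one analytic point that makes the complex gauge trick legitimate on the $L^2$ level (smoothness of $\phi$ on the compact $\sphere$ gives boundedness and invertibility of $e^{\phi}$). One small slip in the details: the chirality line bundles $\sLbun{k}{\pm}$ have degrees $k\mp 1$, not $\pm k$ --- their degrees sum to $\deg\detLbun{k}=2k$ and differ by $2$ --- though your final count $\max(0,k)+\max(0,-k)=\abs{k}$ is nonetheless right, since $\dim H^0(\mathbb{CP}^1,\mathcal{O}(k-1))=\max(0,k)$ and the antiholomorphic count for $\mathcal{O}(k+1)$ gives $\max(0,-k)$.
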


\begin{rems}
For the decomposition $\spbun{k}=\sLbun{k}{+}\oplus\sLbun{k}{-}$ (induced by $\sigma(\volS)$) \eqref{Diracsvolacomm:eq} leads to
\[
\Dirac{}=\begin{pmatrix}0&\Dirac{-}\\\Dirac{+}&0\end{pmatrix}
\]
with $\Dirac{\pm}:\Gamma(\sLbun{k}{\pm})\to\Gamma(\sLbun{k}{\mp})$. 
The Aharonov-Casher theorem can then be viewed as a combination of the Atiyah-Singer index theorem and a vanishing theorem for $\Dirac{}$; the former gives
\[
\dim\Ker\Dirac{+}-\dim\Ker\Dirac{-}=\tfrac12c_1(\detLbun{k})[\volS]=k,
\]
while the latter forces either $\Ker\Dirac{+}$ or $\Ker\Dirac{-}$ to be trivial.
\end{rems}

A straightforward calculation shows that the Dirac operator associated to the spin$^c$ connection $\spc'=\spc-i\alpha$ is $\Dirac{}'=\Dirac{}-\sigma(\alpha)$. Dirac operators also satisfy a simple gauge transformation rule; if $\psi\in C^\infty(\sphere)=\formS{0}$ then
\[
e^{i\psi}\Dirac{}(e^{-i\psi}\cdot)=\Dirac{}-\sigma(d\psi),
\]
the Dirac operator corresponding to the spin$^c$ connection $\spc-id\psi$. In particular the Dirac operators corresponding to the spin$^c$ connections $\spc$ and $\spc-id\psi$ are unitarily equivalent and hence have the same spectrum. It follows that the spectrum of a Dirac operator on $\sphere$ is determined entirely by the magnetic $2$-form of the corresponding spin$^c$ connection (note that $H^1(\sphere)=0$ so $d\formS{0}$ is precisely the set of closed $1$-forms). 

Let $\spc^{(k)}$ denote a spin$^c$ connection on $\spbun{k}$ corresponding to the ``constant'' magnetic $2$-form $\frac{k}2\volS$ and let $\sDirac{k}{}$ denote the corresponding Dirac operator.
If $\beta\in\formS{2}$ is any other $2$-form with $\flux{\beta}=\flux{\frac{k}2\volS}=k$ we can find $\alpha\in\formS{1}$ with $\beta=\frac{k}2\volS+d\alpha$ (as above). The spin$^c$ connection $\spc^{(k)}-i\alpha$ then has magnetic $2$-form $\beta$ and corresponding Dirac operator 
\begin{equation}
\label{pertformDirac:eq1}
\sDirac{k}{\alpha}=\sDirac{k}{}-\sigma(\alpha). 
\end{equation}
This operator is uniquely determined by $\beta$ up to gauge (and hence unitary) equivalence. 
We can view $\alpha$ as generating the ``non-constant'' part of $\beta$.

\bigskip

The situation for Dirac operators on $\tsphere$ is rather simpler. All Spin$^c$ bundles on $\tsphere$ are isomorphic to the trivial bundle $\tsphere\times\C^2$, while any closed $2$-form $b\in\Omega^2(\tsphere)$ gives rise to a self-adjoint Dirac operator $\Dirac{\tsphere,b}$, which is unique up to unitary equivalence; see \cite{ES} for further details.

\section{Reduction to $\sphere$}

\label{RtoS:sec}

Let $\beta\in\formS{2}$ with $\flux{\beta}=1$. From the above discussion we can write $\beta=\frac12\volS+d\alpha$ for some $\alpha\in\formS{1}$. Also set $b=h^*\beta$, the closed $2$-form on $\tsphere$ obtained by pulling back $\beta$ using the Hopf fibration $h:\tsphere\to\sphere$. For $t\in\R$ the magnetic field $tb$ is invariant under rotations of $\tsphere$ along the level sets of $h$. 
This symmetry is inherited by the Dirac operator $\Dirac{\tsphere,tb}$, which allows the spectrum of $\Dirac{\tsphere,tb}$ to be expressed in terms of the spectra of a family of Dirac operators on $\sphere$. 
The following is a restatement of \cite[Theorem 8.1]{ES} 
(note that the metric $\tfrac14\fipd{\sphere}{\cdot}{\cdot}$ is used in \cite{ES} so eigenvalues of Dirac operators on $\sphere$ must include an extra factor of $2$ here). 

\begin{thm}
\label{ESres1:thm}
For any $t\in\R$ the spectrum of $\Dirac{\tsphere,tb}$ is
\[
\bigcup_{k\in\Z}\;\Sigma_k\cup
\bigl\{-\tfrac12+\sqrt{4\lambda^2+(k-t)^2},\,-\tfrac12-\sqrt{4\lambda^2+(k-t)^2}\,:\,\lambda\in\spec^+\bigl(\sDirac{k}{t\alpha}\bigr)\bigr\}
\]
where $\Sigma_k$ contains the number $-\tfrac12-\sgn(k)\vtsp(k-t)$ counted with multiplicity $\abs{k}$ (so $\Sigma_0=\emptyset$). 
The multiplicity of an eigenvalue of $\Dirac{\tsphere,tb}$ is equal to the number of times it appears in the above list when the elements of $\Sigma_k$ and $\spec^+(\sDirac{k}{t\alpha})$ are counted with their relevant multiplicities.
\end{thm}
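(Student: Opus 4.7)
The key tool is the $U(1)$ rotational symmetry of the magnetic field $tb=h^*(t\beta)$ along the fibres of the Hopf fibration $h:\tsphere\to\sphere$. My plan is to decompose $\Dirac{\tsphere,tb}$ into a direct sum of operators indexed by the $U(1)$-weight $k\in\Z$, identify each summand with an explicit $2\times 2$ block operator built from $\sDirac{k}{t\alpha}$, and then read off the spectrum by an elementary diagonalisation. To set this up I would choose a $U(1)$-invariant magnetic potential $a$ for $tb$ on $\tsphere$ (possible because $tb$ is both invariant and exact) and let $V$ denote the unit vector field generating the Hopf $U(1)$-action. Since the metric, the trivial spin$^c$ structure on $\tsphere\times\C^2$, and the potential $a$ are all $U(1)$-invariant, the appropriately lifted Lie derivative along $V$ commutes with $\Dirac{\tsphere,tb}$, yielding a Hilbert decomposition $L^2(\tsphere,\C^2)=\bigoplus_{k\in\Z}\mathcal H_k$ into isotypic components. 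The associated-bundle construction supplies a canonical isometry between $\mathcal H_k$ and the $L^2$-sections of $\spbun{k}$ over $\sphere$; under this isometry the magnetic $2$-form of the induced spin$^c$ connection equals $\tfrac k2\volS+t\,d\alpha$, so up to gauge equivalence the horizontal contribution to $\Dirac{\tsphere,tb}|_{\mathcal H_k}$ is exactly $2\sDirac{k}{t\alpha}$ (the factor $2$ reflecting the metric convention of \cite{ES}).

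The vertical contribution is where the explicit computation is required. The vertical derivative on the $k$-th weight space acts as a scalar proportional to $k-t$, and when combined with the O'Neill curvature terms for the Riemannian submersion $h$ produces a vertical operator of the form $\pm\sigma(V)(k-t)-\tfrac12 I_2$. Identifying $\sigma(V)$ with the chirality operator $\sigma(\volS)$ on $\spbun{k}$ and working in the chirality splitting $\spbun{k}=\sLbun{k}{+}\oplus\sLbun{k}{-}$ (in which $\sDirac{k}{t\alpha}$ is off-diagonal), the restricted operator takes the block form
\[
\Dirac{\tsphere,tb}\big|_{\mathcal H_k}=\begin{pmatrix}-\tfrac12-(k-t) & 2D_-\\ 2D_+ & -\tfrac12+(k-t)\end{pmatrix},
\]
where $\sDirac{k}{t\alpha}=\bigl(\begin{smallmatrix}0&D_-\\ D_+&0\end{smallmatrix}\bigr)$.

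Finally I would diagonalise this block. For each $\lambda\in\spec^+(\sDirac{k}{t\alpha})$ the pair of eigenvectors with eigenvalues $\pm\lambda$ (related by $\sigma(\volS)$) span a two-dimensional reducing subspace on which the block has eigenvalues $-\tfrac12\pm\sqrt{4\lambda^2+(k-t)^2}$, each appearing with the stated multiplicity. On $\Ker\sDirac{k}{t\alpha}$, which has dimension $\abs{k}$ and lies entirely in $\sLbun{k}{\sgn(k)}$ by the vanishing theorem recalled after Theorem \ref{ACS2:thm}, the off-diagonal entries annihilate and only one diagonal entry contributes, producing the $\abs{k}$-fold eigenvalue $-\tfrac12-\sgn(k)(k-t)$ comprising $\Sigma_k$. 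The principal obstacle is the precise determination of the vertical operator, in particular the $-\tfrac12 I_2$ shift from the O'Neill formulas for the Hopf submersion and the correct sign conventions for $\sigma(V)$ and for the $U(1)$-weight; this is the substance of \cite[Theorem 8.1]{ES}, which we simply quote.
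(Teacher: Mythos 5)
Your proposal is essentially correct, but it's worth noting that the paper does not prove this theorem at all: it is stated explicitly as ``a restatement of \cite[Theorem 8.1]{ES}'' (modulo the metric rescaling that introduces the factors of $2$), and the paper simply cites it. So the comparison here is between your reconstruction and the Erd\H{o}s--Solovej argument rather than anything written in the present paper.

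That said, your sketch correctly captures the structure of the ES proof: decompose $L^2(\tsphere,\C^2)$ into $U(1)$-isotypic components via the Hopf action, identify the weight-$k$ component with $L^2$ sections of $\spbun{k}$ on $\sphere$ through the associated-bundle construction, and observe that the restricted Dirac operator is a $2\times2$ block built from $\sDirac{k}{t\alpha}$ plus a vertical/curvature contribution. Your block form
\[
\begin{pmatrix}-\tfrac12-(k-t) & 2D_-\\ 2D_+ & -\tfrac12+(k-t)\end{pmatrix}
\]
does reproduce the stated spectrum: on the $2$-dimensional reducing subspace attached to $\lambda\in\spec^+(\sDirac{k}{t\alpha})$ the eigenvalues are $-\tfrac12\pm\sqrt{4\lambda^2+(k-t)^2}$, and on $\Ker\sDirac{k}{t\alpha}$ (which by the vanishing theorem lies in $\sLbun{k}{\sgn(k)}$) only the diagonal entry $-\tfrac12-\sgn(k)(k-t)$ survives, with multiplicity $\abs{k}$. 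You are also candid that the one genuinely non-trivial step --- identifying the vertical contribution as $-\sigma(V)(k-t)-\tfrac12I_2$ via the O'Neill formulas and matching sign/weight conventions --- is the heart of \cite[Theorem 8.1]{ES} and is quoted rather than re-derived. Since the paper itself defers entirely to ES at this point, your treatment is, if anything, more explicit than the source you were asked to match.
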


Set $B=(\pi^{-1})^*b=(\pi^{-1})^*h^*\beta\in\ESf[\prime]$. From \cite[Theorem 8.7]{ES} we have the following link between the Dirac operators $\Dirac{\tsphere,tb}$ and $\Dirac{\R^3,tB}$.

\begin{thm}
\label{ESres2:thm}
For any $t\in\R$ we have $\dim\Ker\Dirac{\R^3,tB}=\dim\Ker\Dirac{\tsphere,tb}$.
\end{thm}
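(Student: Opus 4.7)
The plan is to exploit the conformal invariance of the Dirac operator, using the fact that inverse stereographic projection $\pi^{-1}:\R^3\to\tsphere\setminus\{(0,0,0,-1)\}$ is a conformal diffeomorphism. Writing $g_{\tsphere}=\Omega^2\vtsp g_{\R^3}$ after pullback, with $\Omega(x)=2/(1+\abs{x}^2)$, in dimension three the Dirac operators associated to conformally related metrics satisfy an intertwining of the form $\Dirac{\R^3,tB}=\Omega^{-2}\vtsp U^{-1}\Dirac{\tsphere,tb}\vtsp U\vtsp\Omega$, where $U$ is the unitary bundle isomorphism between the trivial spin$^c$ bundle on $\tsphere$ (pulled back) and the spinor bundle on $\R^3$. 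To apply this intertwining I first need to check that the magnetic data matches correctly: since $B=(\pi^{-1})^*b$, the spin$^c$ connection on $\tsphere\times\C^2$ with magnetic $2$-form $tb$ pulls back under $\pi^{-1}$ to a connection with magnetic $2$-form $tB$, and after passing through the gauge transformation provided by Poincar\'e's Lemma on $\R^3$ it agrees (up to unitary equivalence) with the connection used in \eqref{WDR3:eq}.

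Once the intertwining is in place, zero modes correspond formally via $\psi\longleftrightarrow \Omega\vtsp U\psi$. The sphere-to-plane direction is straightforward: any smooth section $\psi\in\Ker\Dirac{\tsphere,tb}$ is bounded on $\tsphere$, and a change-of-variables calculation shows
\[
\int_{\R^3}\fnorm{\C^2}{\Omega U\psi}^2\d x
=\int_{\tsphere}\fnorm{\spin}{\psi}^2\Omega^{-1}\volS[\tsphere]<\infty,
\]
since $\Omega^{-1}$ remains locally bounded in $\tsphere$-coordinates on any compact set and its apparent blow-up in $\R^3$-coordinates is compensated by the Jacobian (which behaves like $\Omega^3$). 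Thus $\Omega\vtsp U\psi$ lies in $\Ker\Dirac{\R^3,tB}$, and the map is clearly injective.

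The subtle direction, and the step I expect to be the main obstacle, is surjectivity: starting from $\phi\in\Ker\Dirac{\R^3,tB}\subset L^2(\R^3,\C^2)$, we set $\psi=\Omega^{-1}U^{-1}\phi$ on $\tsphere\setminus\{(0,0,0,-1)\}$ and must show that $\psi$ extends to a smooth spinor on all of $\tsphere$, since the conformal intertwining only tells us $\Dirac{\tsphere,tb}\psi=0$ away from the missing point. The strategy is: (i) use elliptic regularity for the smooth, bounded potential $A$ (which can be chosen with $\abs{A(x)}=O(\abs{x}^{-3})$ as noted in the introduction) to upgrade $L^2$ membership of $\phi$ to decay and local boundedness estimates at infinity; (ii) translate these into $L^2_{\loc}$ and then $L^\infty_{\loc}$ control of $\psi$ near the puncture, using $\Omega^{-1}=\tfrac12(1+\abs{x}^2)$; and (iii) invoke a removable singularity theorem for first-order elliptic systems with smooth coefficients, which asserts that a distributional solution on a punctured ball lying in $L^2_{\loc}$ across the puncture is in fact smooth there. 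Counting dimensions, the resulting bijection of kernels yields $\dim\Ker\Dirac{\R^3,tB}=\dim\Ker\Dirac{\tsphere,tb}$.
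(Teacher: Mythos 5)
The paper does not prove this theorem at all: it is quoted verbatim from Erd\H{o}s and Solovej, \cite[Theorem 8.7]{ES}, and the sentence preceding the theorem statement makes this clear. So your proposal is not ``an alternative route to the paper's proof'' --- it is an attempt to reconstruct the cited proof. With that caveat, what you outline is indeed the standard strategy (and essentially the one in \cite{ES}): transport a zero mode across the conformal diffeomorphism $\pi^{-1}$ using conformal covariance of the Dirac operator, check $L^2$ membership in one direction and removability of the puncture in the other. Two points need fixing.

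First, the displayed intertwining relation has the exponents of $\Omega$ inverted. With $g_{\tsphere}=\Omega^2 g_{\R^3}$ the standard conformal covariance formula in dimension $n=3$ reads $\Dirac{\tsphere} = \Omega^{-2}(\beta\,\Dirac{\R^3}\,\beta^{-1})\,\Omega$ in stereographic coordinates (with $\beta$ the spinor-bundle identification), which solved for $\Dirac{\R^3}$ gives $\Dirac{\R^3}=\Omega^{2}\,\beta^{-1}\Dirac{\tsphere}\beta\,\Omega^{-1}$, not $\Omega^{-2}\,U^{-1}\Dirac{\tsphere}U\,\Omega$. Note that the zero-mode correspondence $\psi\longleftrightarrow\Omega\,U\psi$ you then state, and the change-of-variables identity $\int_{\R^3}\lvert\Omega U\psi\rvert^2\,dx=\int_{\tsphere}\lvert\psi\rvert^2\,\Omega^{-1}\volS[\tsphere]$, are what you get from the \emph{correct} formula. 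So the error is confined to the one displayed equation, but it should be repaired because as written the operator identity and the kernel correspondence contradict each other: the formula you wrote would give $\phi=\Omega^{-1}U^{-1}\psi$, not $\phi=\Omega U\psi$.

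Second, the surjectivity argument is made more complicated than necessary, in a way that obscures what the crucial observation is. You propose to (i) derive pointwise decay of $\phi$, (ii) obtain $L^\infty_{\loc}$ control of $\psi$ near the puncture, and then (iii) apply a removable singularity theorem. But $L^2$ control is already automatic and is all you need: since $\psi=\Omega^{-1}\phi$ in coordinates (up to the unitary $U$),
\[
\int_{\tsphere}\fnorm{\spin}{\psi}^2\,\volS[\tsphere]
=\int_{\R^3}\Omega^{-2}\,\lvert\phi\rvert^2\,\Omega^3\,dx
=\int_{\R^3}\Omega\,\lvert\phi\rvert^2\,dx
\le 2\,\norm{\phi}_{L^2(\R^3)}^2<\infty,
\]
because $\Omega\le 2$. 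The removable-singularity step then works at the $L^2$ level because the fundamental solution of a first-order elliptic operator in a $3$-manifold blows up like $r^{-2}$ near a point, which fails to be in $L^2_{\loc}$ in dimension three; hence a distributional solution of $\Dirac{\tsphere}\psi=0$ on a punctured ball that is in $L^2$ across the puncture extends. Elliptic regularity on the compact manifold $\tsphere$ then gives smoothness and membership in the operator domain. Stating this cleanly makes the surjectivity argument one line of computation plus one appeal to removability, and avoids the need for the quantitative decay of $A$ and $\phi$ you invoked. As a sketch of the argument in \cite{ES} your proposal is on the right track, but it should be presented as a reconstruction of that cited theorem rather than as a replacement for a proof the paper itself does not supply.
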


Consider the disjoint partition of $\R$ given by the intervals
\[
\widetilde{\tau}_k=\begin{cases}
(k-\tfrac12,k+\tfrac12]&\text{if $k>0$,}\\
[-\tfrac12,\tfrac12]&\text{if $k=0$,}\\
[k-\tfrac12,k+\tfrac12)&\text{if $k<0$,}     
\end{cases}
\]
for $k\in\Z$. Also let $\tau_k=(k-1/2,k+1/2)$ and $\overline{\tau}_k=[k-1/2,k+1/2]$ 
denote the interior and closure of $\widetilde{\tau}_k$ respectively.
To identify the contribution to $\cfn{B}$ 
coming from $t\in\tau_k$ and $t\in\widetilde{\tau}_k$ set
\[
\Mcfnk=\sum_{t\in\tau_k}\dim\Ker\Dirac{\R^3,tB}
\quad\text{and}\quad
\cfnk=\sum_{t\in\widetilde{\tau}_k}\dim\Ker\Dirac{\R^3,tB}.
\]

From Theorems \ref{ESres1:thm} and \ref{ESres2:thm} it is clear that $\Ker\Dirac{\R^3,tB}$ is non-trivial precisely when there exists $k\in\Z$ such that either $0\in\Sigma_k$ or $4\lambda^2+(k-t)^2=1/4$ for some $\lambda\in\spec^+(\sDirac{k}{t\alpha})$, with corresponding agreement of multiplicities. 
In the latter case we have $\lambda>0$ which forces $(k-t)^2<1/4$ or $t\in\tau_k$. It follows that
\[
\Mcfnk
=\#\bigl\{(t,\lambda):\text{$\lambda\in\spec^+\bigl(\sDirac{k}{t\alpha}\bigr)$ and $4\lambda^2+(k-t)^2=\tfrac14$}\bigr\}.
\]
We also know that $0$ is contained in the spectrum of $\sDirac{k}{t\alpha}$ with multiplicity $\abs{k}$ for any $t\in\R$ (see Theorem \ref{ACS2:thm}), while $0^2+(k-t)^2=1/4$ has two solutions ($t=k\pm1/2$). Furthermore the spectrum of $\sDirac{k}{t\alpha}$ is symmetric about $0$. Combining these observations we get
\[
\#\bigl\{(t,\lambda):\text{$\lambda\in\spec\bigl(\sDirac{k}{t\alpha}\bigr)$ and $4\lambda^2+(k-t)^2=\tfrac14$}\bigr\}
=2\Mcfnk+2\abs{k}.
\]
On the other hand $0\in\Sigma_k$, with multiplicity $\abs{k}$, iff $t\in\widetilde{\tau}_k\setminus\tau_k$. 
It follows that $\cfnk-\Mcfnk=\abs{k}$ and so
\begin{equation}
\label{ZM2:eq}
\cfnk=\frac12\#\bigl\{(t,\lambda):\text{$\lambda\in\spec\bigl(\sDirac{k}{t\alpha}\bigr)$ and $4\lambda^2+(k-t)^2=\tfrac14$}\bigr\}.
\end{equation}
Clearly in calculating the right hand side of \eqref{ZM2:eq} we need only consider $t\in\overline{\tau}_k$ and eigenvalues of $\sDirac{k}{t\alpha}$ in $[-1/4,1/4]$. In addition to the eigenvalue $0$ with multiplicity $\abs{k}$ (the Aharonov-Casher zero modes) there may be small non-zero eigenvalues (the approximate zero modes). The total number of these eigenvalues can be determined asymptotically in $\abs{k}$ (see Theorem \ref{basicappzmS2:thm}) which ultimately leads to the following.

\begin{thm}
\label{typeIIest1:thm}
We have $\cfnk=\flux{\abs{\beta}}\,\abs{k}+o(\abs{k})$ as $\abs{k}\to\infty$.
\end{thm}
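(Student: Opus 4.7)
The starting point is the identity \eqref{ZM2:eq}, which I would read as a count of intersection points between the eigenvalue curves of the analytic family $t\mapsto\sDirac{k}{t\alpha}=\sDirac{k}{}-t\sigma(\alpha)$ and the ellipse $4\lambda^2+(k-t)^2=\tfrac14$, with $t$ restricted to $\overline{\tau}_k=[k-\tfrac12,k+\tfrac12]$ and $\lambda\in[-\tfrac14,\tfrac14]$. I would split this count into an Aharonov--Casher zero-eigenvalue contribution and a contribution from small nonzero eigenvalues.

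The zero-mode contribution is explicit: Theorem \ref{ACS2:thm} gives $0\in\spec(\sDirac{k}{t\alpha})$ with multiplicity $\abs{k}$ for every $t\in\R$, while $(k-t)^2=\tfrac14$ has exactly the two solutions $t=k\pm\tfrac12$, contributing $\tfrac12\cdot 2\abs{k}=\abs{k}$ to $\cfnk$. For the nonzero contribution I would exploit the spectral symmetry \eqref{Diracsvolacomm:eq} to pair positive and negative eigenvalue curves, so that $\cfnk-\abs{k}$ equals the number of intersections of the positive eigenvalue curves $\lambda_j(t)>0$ with the upper half-ellipse $\lambda=\tfrac12\sqrt{\tfrac14-(k-t)^2}$ over $\overline{\tau}_k$.

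For the lower bound I would apply Theorem \ref{basicappzmS2:thm} to the operator $\sDirac{k}{k\alpha}$, whose magnetic $2$-form is $k\beta$; it supplies asymptotically $\tfrac12(\flux{\abs{\beta}}-1)\abs{k}$ small positive approximate zero modes arising from the strong-field localisation effects mentioned in the introduction. A perturbation argument, using $\abs{t-k}\le\tfrac12$ on $\overline{\tau}_k$ together with a uniform-in-$t$ bound on the size of these eigenvalues, shows each such curve stays well below $\tfrac14$ for $\abs{k}$ large and must therefore cross the upper half-ellipse at least twice. Summing two intersections per curve yields a nonzero contribution of $(\flux{\abs{\beta}}-1)\abs{k}+o(\abs{k})$, and hence the lower bound $\cfnk\ge\flux{\abs{\beta}}\abs{k}+o(\abs{k})$.

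The upper bound is the main obstacle: a priori an eigenvalue curve could oscillate through the half-ellipse many times, producing surplus intersections. To rule this out I would (as outlined in Sections \ref{linear:sec}--\ref{mainarg:sec}) recast the intersection condition $4\lambda^2+(k-t)^2=\tfrac14$ as a spectral problem for an auxiliary non-self-adjoint operator $\oL$ (Proposition \ref{evalLinfo:prop}), and then bound the number of its eigenvalues in the relevant set by the singular values of $\oL$ via Weyl's inequality. The singular-value asymptotics of $\oL$ are once again governed by $\flux{\abs{\beta}}$ through Theorem \ref{basicappzmS2:thm}, which provides a matching upper bound and completes the argument.
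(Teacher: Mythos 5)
Your proposal follows exactly the paper's strategy: the lower bound comes from continuity of the eigenvalue branches $\mu_n$ combined with Theorem \ref{basicappzmS2:thm}, while the upper bound uses the linearisation $\oL$ of Proposition \ref{evalLinfo:prop}, singular-value estimates, and Weyl's inequality, with Theorem \ref{basicappzmS2:thm} again controlling the error. The one simplification you are missing is that the lower bound needs no uniform-in-$t$ control or separate treatment of zero and positive branches: applying the intermediate value theorem directly to $g(t)=4\mu_n^2(t)+(t-k)^2$, which satisfies $g(k)<\tfrac14$ whenever $\abs{\mu_n(k)}\le\epsilon<\tfrac14$ and $g(k\pm\tfrac12)\ge\tfrac14$ automatically, yields two solutions of $g(t)=\tfrac14$ in $\overline{\tau}_k$ regardless of the sign behaviour of $\mu_n(t)$, so $\cfnk\ge\cfD[k,\alpha]{\epsilon}$ follows immediately and the perturbation estimate is not required for this half of the argument.
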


The lower bound for $\cfnk$ contained in Theorem \ref{typeIIest1:thm} was given in \cite{T} and is included here for completeness (see Section \ref{azmlb:sec}). The justification of the upper bound for $\cfnk$ appears in Section \ref{mainarg:sec}.

\smallskip

Given Theorem \ref{typeIIest1:thm} the proof of our main result is now straightforward.

\begin{proof}[Proof of Theorem \ref{mainres1:thm}]
We can extend the definition of $\cfn{B}(T)$ to cover $T<0$ by summing over $T\le t\le0$ in this case. Together with the scaling properties of \eqref{mainres:eq} it thus suffices to restrict to the case $\flux{\beta}=1$ and prove
\begin{equation}
\label{mainres:v1:eq}
\cfn{B}(T)=\tfrac12\,\flux{\abs{\beta}}\,T^2+o(T^2)
\quad\text{as $T\to\pm\infty$.}
\end{equation}
Now let $T>0$ and pick $k_T\in\Z$ with $T\in\widetilde{\tau}_{k_T}$. Then 
$\bigcup_{k=1}^{k_T-1}\widetilde{\tau}_k\subset[0,T]\subset\bigcup_{k=0}^{k_T}\widetilde{\tau}_k$ so
\[
\sum_{k=1}^{k_T-1}\cfnk\le\cfn{B}(T)\le\sum_{k=0}^{k_T}\cfnk.
\]
Using Theorem \ref{typeIIest1:thm} and the fact that $\abs{k_T-T}\le1/2$ we get
\[
\sum_{k=0}^{k_T}\cfnk
=\sum_{k=0}^{k_T}\vtsp[\flux{\abs{\beta}}\,k+o(k)]
=\tfrac12\,\flux{\abs{\beta}}\,k_T^2+o(k_T^2)
=\tfrac12\,\flux{\abs{\beta}}\,T^2+o(T^2)
\]
as $T\to+\infty$. Since the removal of the first and last terms from the sum will not change this asymptotic \eqref{mainres:v1:eq} for $T>0$ now follows. A similar argument clearly deals with the case $T<0$.
\end{proof}

\section{The lower bound}

\label{azmlb:sec}

Throughout the next four sections we consider a 
fixed $\beta\in\formS{2}$ with $\flux{\beta}=1$ and write $\beta=\frac12\volS+d\alpha$ for some $\alpha\in\formS{1}$. 
For each $k\in\Z$ and $\epsilon,R>0$ set
%We need to consider small eigenvalues of the Dirac operator $\sDirac{k}{t\alpha}$ on $\spbun{k}$ for $t\in\overline{\tau}_k$. 
%Taking $t=k$ we set
\[
\cfD{\epsilon}
=\cfD[k,\alpha]{\epsilon}
=\#\bigl\{\lambda\in\spec\bigl(\sDirac{k}{k\alpha}\bigr):\abs{\lambda}\le\epsilon\bigr\}
\]
(counting according to multiplicity) and
\[
\dfD{\epsilon,R}
=\dfD[k,\alpha]{\epsilon,R}
=\begin{cases}
\cfD{R}-\cfD{\epsilon}&\text{if $R\ge\epsilon$,}\\
0&\text{if $R<\epsilon$.}
\end{cases}
\]
%for each $\epsilon,R>0$.
Since $\sDirac{k}{k\alpha}$ has $\abs{k}$ zero modes (recall Theorem \ref{ACS2:thm}) we have $\cfD{\epsilon}\ge\abs{k}=\abs{\flux{k\beta}}$;
a strict inequality (for suitable $\epsilon$) reflects the presence of approximate zero modes. 
In general there will be $O(\abs{k})$ approximate zero modes whenever $\beta$ has variable sign; 
more precisely we have the following.

\begin{thm}
\label{basicappzmS2:thm}
Suppose $\epsilon_k=Ce^{-c\abs{k}^\rho}$ for some $C,c>0$ and $0<\rho<1$, while $R_k=o(\abs{k}^{1/2})$ as $\abs{k}\to\infty$. 
Then 
\[
\liminf_{\abs{k}\to\infty}\frac1{\abs{k}}\,\cfD[k,\alpha]{\epsilon_k}\ge\flux{\abs{\beta}}
\quad\text{and}\quad
\limsup_{\abs{k}\to\infty}\frac1{\abs{k}}\,\cfD[k,\alpha]{R_k}\le\flux{\abs{\beta}}.
\]
Consequently $\cfD[k,\alpha]{\epsilon_k}=\flux{\abs{\beta}}\vtsp\abs{k}+o(\abs{k})$ and $\dfD[k,\alpha]{\epsilon_k,R_k}=o(\abs{k})$ as $\abs{k}\to\infty$.
\end{thm}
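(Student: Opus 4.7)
The plan is to transfer the problem from $\sphere$ to $\R^2$ via stereographic projection and then invoke planar asymptotics established in \cite{E3}. Stereographic projection provides a conformal diffeomorphism $\sphere\setminus\{\text{north pole}\}\to\R^2$, and because two-dimensional spin$^c$ Dirac operators transform under conformal changes of the metric by an intertwining that preserves the kernel and bounds low-lying eigenvalues in a controlled way, $\sDirac{k}{k\alpha}$ on $\sphere$ can be identified, up to a bounded conformal factor, with a planar Dirac operator $\Dirac{\R^2,kB_0}$, where $B_0$ is the stereographic pullback of $\beta$. This $B_0$ is smooth, rapidly decaying, and satisfies $\frac{1}{2\pi}\int_{\R^2}B_0=\flux{\beta}=1$ together with $\frac{1}{2\pi}\int_{\R^2}\abs{B_0}=\flux{\abs{\beta}}$. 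Counting eigenvalues of $\sDirac{k}{k\alpha}$ in a window $[-\epsilon,\epsilon]$ is therefore equivalent, to leading order in $\abs{k}$, to counting eigenvalues of the planar operator in a comparable window, since the low-lying modes concentrate where the field is strong and the conformal factor is uniformly bounded there.

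Under this reduction, both inequalities of the theorem become consequences of the corresponding planar results in \cite{E3}. Those results assert that for a suitably regular magnetic field $B_0$ on $\R^2$ with $\int B_0\neq 0$, the number of eigenvalues of the planar Dirac operator in an exponentially small window $[-Ce^{-c\abs{k}^\rho},Ce^{-c\abs{k}^\rho}]$ is bounded below by $\flux{\abs{\beta}}\,\abs{k}+o(\abs{k})$, while the number in $[-R_k,R_k]$ with $R_k=o(\abs{k}^{1/2})$ is bounded above by the same quantity. Combining these with the trivial lower bound $\cfD[k,\alpha]{\epsilon_k}\ge\abs{k}$ from the exact Aharonov-Casher zero modes and the monotonicity $\cfD[k,\alpha]{R_k}\ge\cfD[k,\alpha]{\epsilon_k}$ sandwiches both counts against $\flux{\abs{\beta}}\,\abs{k}$, yielding $\cfD[k,\alpha]{\epsilon_k}=\flux{\abs{\beta}}\,\abs{k}+o(\abs{k})$ and hence $\dfD[k,\alpha]{\epsilon_k,R_k}=o(\abs{k})$.

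The substantive content in the planar bounds is a quasi-mode construction in the regions where $B_0$ has definite sign, supplemented by Agmon-type tunnelling estimates. On each connected component of $\{B_0>0\}$ one constructs roughly $\frac{\abs{k}}{2\pi}\int B_0$ positive-chirality approximate zero modes using an Aharonov-Casher ansatz $z^je^{-k\phi}$ (with $\Delta\phi=B_0$ locally), and analogously on $\{B_0<0\}$ for negative chirality; cutting off near the nodal set produces quasi-modes whose residual Dirac energy is at most $Ce^{-c\abs{k}^\rho}$. The matching upper bound comes from a Bochner-Weitzenbock style inequality $\Dirac{\R^2,kB_0}^2\ge\abs{k}\abs{B_0}-C$, which confines low-lying modes to thin neighbourhoods of $\{B_0=0\}$ on which a variational count supplies the required estimate. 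The main technical obstacle will be keeping tunnelling and cut-off errors below both $\epsilon_k$ (for the lower bound) and $R_k^2$ (for the upper bound) uniformly near the nodal set, where $B_0$ may vanish to various orders; the conditions $\rho<1$ and $R_k=o(\abs{k}^{1/2})$ supply precisely the slack needed to absorb these errors.
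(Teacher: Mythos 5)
The single-chart reduction at the heart of your argument has a genuine gap. Stereographic projection from one pole gives a conformal diffeomorphism $\sphere\setminus\{\mathrm{pt}\}\to\R^2$ with conformal factor $\confwR(x)=2(1+\abs{x}^2)^{-1}$, which is bounded above but tends to zero at infinity. So the identification of $\sDirac{k}{k\alpha}$ with a planar Dirac operator $\Dirac{\R^2,kB_0}$ is \emph{not} ``up to a bounded conformal factor'' on all of $\R^2$: the factor degenerates exactly where the planar field $B_0$ decays, and since $\beta$ can be concentrated arbitrarily close to the excised pole, there is no way to ensure that the low-lying modes live where the conformal factor is controlled. The paper avoids this by using \emph{two} stereographic projections $\stproj{\pm}$, one from each pole, and restricting attention to (slightly enlarged) hemispheres $\sphere[\delta,\pm]$, where \eqref{confwRbnd:eq} gives two-sided bounds $1-\delta<\confwR\le2$. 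This yields Proposition \ref{cntfncorrs:prop}, comparing eigenvalue counts of $\sDirac{k}{\alpha}$ with those of two Pauli operators $\Pauli_{\disc[r_\delta],A_\pm}$ on discs with Dirichlet conditions. A related point: the results of \cite{E3} that the paper actually invokes are asymptotics for the Pauli operator on a \emph{bounded region} (Dirichlet on a disc), not for a Dirac operator on all of $\R^2$; your proposal mis-attributes the planar input.

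There is a second, smaller omission. Your lower bound can proceed by gluing compactly supported test sections from the two open hemispheres (disjoint supports), as in \eqref{cntfncorrslow:eq}, with no loss. But the upper bound requires a partition of unity $\puS{\pm}$ subordinate to the overlapping cover $\sphere[\delta,\pm]$, and the IMS-type localisation error $\abs{d\puS{\pm}}^2=O(\delta^{-2})$ shifts the eigenvalue threshold by $\Co{4}\delta^{-2}$ as in \eqref{cntfncorrsup:eq}. It is precisely to absorb this and the factors from the two-sided conformal bound that one needs $R_k=o(\abs{k}^{1/2})$ (so that $16(R_k^2+\Co{4}\delta^{-2})=o(\abs{k})$, the hypothesis of \cite[Theorem 1.1]{E3}), and then one sends $\delta\to0^+$ using the boundedness of $\beta$ and $\abs{\sphere[\delta,+]\cap\sphere[\delta,-]}=O(\delta)$. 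Your proposal gestures at these constraints but doesn't set up the two-chart overlap that makes them necessary. Finally, the paragraph speculating on the internal proof mechanism of \cite{E3} (Aharonov-Casher ansatz, Agmon estimates, Bochner-Weitzenb\"ock) is not needed here: the paper cites \cite[Theorems 1.1 and 1.2]{E3} as black boxes, and the task in this section is purely the sphere-to-disc comparison, not a re-derivation of the planar asymptotics.
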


The proof of this result is given in Section \ref{S2ctfn:sec} where it is reduced to a similar result for the Pauli operator on a disc in $\R^2$.

\medskip

From \eqref{pertformDirac:eq1} we get
\begin{equation}
\label{pertformDirac:eq2}
\sDirac{k}{t\alpha}=\sDirac{k}{}-t\sigma(\alpha). 
\end{equation}
It follows that $t\mapsto\sDirac{k}{t\alpha}$ defines a self-adjoint holomorphic family of operators. 
Using standard perturbation theory (see \cite{K}) we can then choose real-analytic functions $\mu_n$ for $n\in\Z$ so that
the full set of eigenvalues of $\sDirac{k}{t\alpha}$ (including multiplicities) is $\{\mu_n(t):n\in\Z\}$ for any $t\in\R$.
We can now rewrite \eqref{ZM2:eq} as
\begin{equation}
\label{ZM3:eq}
\cfnk=\frac12\,\#\bigl\{(n,t)\in\Z\times\R\,:\,4\mu^2_n(t)+(t-k)^2=\tfrac14\bigr\}.
\end{equation}

\begin{proof}[Proof of lower bound in Theorem \ref{typeIIest1:thm}]
Fix $\epsilon\in(0,1/4)$ and suppose 
$\abs{\mu_n(k)}\le\epsilon$ for some $n\in\Z$. Then $4\mu^2_n(k)+(k-k)^2\le4\epsilon^2<1/4$. 
However $\mu_n$ is continuous and $4\mu^2_n(k\pm1/2)+(k\pm1/2-k)^2\ge1/4$, so  
there are at least two values of $t$ with $4\mu^2_n(t)+(t-k)^2=1/4$. 
From \eqref{ZM3:eq} it follows that
\begin{equation}
\label{keyestlb:ieq}
\cfnk\ge\#\bigl\{n\in\Z:\abs{\mu_n(k)}\le\epsilon\bigr\}
=\cfD[k,\alpha]{\epsilon}.
\end{equation}
The lower bound in Theorem \ref{typeIIest1:thm} now follows from Theorem \ref{basicappzmS2:thm}.
\end{proof}

The complication with obtaining the upper bound in Theorem \ref{typeIIest1:thm} is that, for each $n\in\Z$ with $\mu_n(k)<1/4$, we need upper bounds on the number of values of $t$ with $4\mu^2_n(t)+(t-k)^2=1/4$; in general there is no reason why this can't be more than two. 
We need some information about how rapidly $\mu_n(t)$ can change with respect to $t$. 

\begin{prop}
\label{spinalign1:prop}
For $j=1,2$ suppose $\lambda_j$ is an eigenvalue of $\sDirac{k}{t\alpha}$ with normalised eigenfunction $\xi_j$. 
Then $\abs{\ipd{\xi_1}{\sigma(\alpha')\xi_2}}\le\pi(\abs{\lambda_1}+\abs{\lambda_2})\,\norm{\alpha'}_{L^\infty}$ for any $\alpha'\in\Omega^1(\sphere)$.
\end{prop}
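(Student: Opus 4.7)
My plan is to reduce the matrix element to two commutator-type expressions via a Hodge decomposition of $\alpha'$, and then bound the resulting scalar potentials through the Green's function for the Laplacian on $\sphere$. Since $H^1(\sphere)=0$, I can write $\alpha' = df + *dg$ for uniquely determined $f,g\in C^\infty(\sphere)$ of zero mean, where $*$ denotes the Hodge star. Writing $\sD:=\sDirac{k}{t\alpha}$ for brevity, the Leibniz rule $[\sD, f]=-i\sigma(df)$ for the Dirac operator gives $\sigma(df)=i[\sD,f]$. In dimension two the Clifford identity $\sigma(*\omega)=i\sigma(\volS)\sigma(\omega)$ (a direct consequence of \eqref{sjsvcomm1:eq} together with $\sigma(\volS)^2=I$), combined with the anticommutation relation \eqref{Diracsvolacomm:eq}, transforms $\sigma(*dg)$ into an \emph{anti}commutator:
\[
\sigma(*dg)=i\sigma(\volS)\sigma(dg)=-\sigma(\volS)[\sD,g]=\sD(g\sigma(\volS))+(g\sigma(\volS))\sD.
\]

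Substituting the eigen-equations $\sD\xi_j=\lambda_j\xi_j$ and using self-adjointness of $\sD$ then gives
\[
\ipd{\xi_1}{\sigma(\alpha')\xi_2}=i(\lambda_1-\lambda_2)\ipd{\xi_1}{f\xi_2}+(\lambda_1+\lambda_2)\ipd{\xi_1}{g\sigma(\volS)\xi_2}.
\]
Since $\sigma(\volS)$ is unitary and $\norm{\xi_j}=1$, Cauchy-Schwarz yields $\abs{\ipd{\xi_1}{f\xi_2}}\le\norm{f}_{L^\infty}$ and similarly for $g$; combined with the elementary inequality $\abs{\lambda_1\pm\lambda_2}\le\abs{\lambda_1}+\abs{\lambda_2}$ this produces
\[
\abs{\ipd{\xi_1}{\sigma(\alpha')\xi_2}}\le(\abs{\lambda_1}+\abs{\lambda_2})(\norm{f}_{L^\infty}+\norm{g}_{L^\infty}).
\]

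The remaining and quantitatively crucial step is to establish $\norm{f}_{L^\infty},\norm{g}_{L^\infty}\le\tfrac{\pi}{2}\norm{\alpha'}_{L^\infty}$, from which the factor $\pi$ in the stated bound follows by adding. Applying the codifferential to $\alpha'=df+*dg$ (and noting $\delta*d=0$ on the $2$-manifold $\sphere$) gives $\delta\alpha'=\Delta_H f$, so $f$ equals $\Delta_H^{-1}\delta\alpha'$ on the mean-zero subspace. Writing this through the Green's function $G(x,y)$ of the Laplacian on $\sphere$ and integrating by parts yields the representation
\[
f(x)=\int_{\sphere}\fipd{\sphere}{d_y G(x,y)}{\alpha'(y)}\volS(y),
\]
and the analogous identity for $g$ with $\alpha'$ replaced by $-*\alpha'$. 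Thus $\norm{f}_{L^\infty},\norm{g}_{L^\infty}\le\norm{\alpha'}_{L^\infty}\cdot\sup_x\int_{\sphere}\fnorm{\sphere}{d_y G(x,y)}\volS$. The main obstacle is then the direct geometric calculation that this supremum equals exactly $\pi/2$: using the explicit radial form $G'(r)=-\tfrac{1}{4\pi}\cot(r/2)$ (with $r=\dist(x,y)$) and the area element $2\pi\sin r\,dr$, the integrand simplifies via half-angle identities to $\tfrac{1}{2}(1+\cos r)$, which integrates to $\pi/2$ over $[0,\pi]$. This supplies the precise constant in the proposition.
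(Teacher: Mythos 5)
Your proof is correct and takes a genuinely different route from the paper; the two arguments are in effect dual to one another. You Hodge-decompose the test form $\alpha'=df+*dg$, convert $\sigma(df)$ into the commutator $i[\sD,f]$ and $\sigma(*dg)$ into an anticommutator with $\sD$, and then push the Green's-function estimate onto the scalar potentials, proving $\norm{f}_{L^\infty},\norm{g}_{L^\infty}\le\tfrac{\pi}{2}\norm{\alpha'}_{L^\infty}$. The paper instead forms the ``spin current'' $1$-form $\spf{\xi_1}{\xi_2}$ dual to $\rho\mapsto\fipd{\spbun{k}}{\xi_1}{\sigma(\rho)\xi_2}$, computes $\delta\spf{\xi_1}{\xi_2}$ and $d\spf{\xi_1}{\xi_2}$ pointwise from the eigenvalue equations (Proposition \ref{spfdivcurl:prop}), and then applies the $L^1$ bound for $1$-forms of Corollary \ref{1formL1est:cor}, which is itself proved from the same Green's-function representation (Proposition \ref{s1fintrep:prop}). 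In both cases the sharp constant comes from the same kernel integral $\int_0^\pi(1+\cos r)\,dr=\pi$, so the two approaches are equally tight. Yours is arguably more elementary in that it avoids introducing the spin current and Proposition \ref{spfdivcurl:prop}; the paper's route has the side benefit of exposing the local vanishing-theorem interpretation recorded in the remark after the proposition.

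One small slip: the identity $\sigma(*\omega)=i\sigma(\volS)\sigma(\omega)$ is not a consequence of \eqref{sjsvcomm1:eq} and $\sigma(\volS)^2=I$ alone (those relations do not involve the Hodge star); the correct relation in the paper's convention is $\sigma(*\omega)=\sigma(\volS)\sigma(\omega)$, i.e.\ \eqref{sjsvcomm2:eq} read together with \eqref{sjsvcomm1:eq}, with no factor of $i$. Since you ultimately take absolute values, the spurious unimodular factor does not affect the bound, but the justification of the Clifford identity should be corrected.
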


The proof of this result is given in Section \ref{S2spfldest:sec}.

\begin{rems}
If $\xi\in\Ker\sDirac{k}{t\alpha}$ Proposition \ref{spinalign1:prop} gives $\ipd{\xi}{\sigma(\alpha')\xi}=0$ for any $\alpha'\in\Omega^1(\sphere)$, 
which forces the value of $\xi$ to lie in either $\sLbun{k}{+}$ or $\sLbun{k}{-}$ at each point of $\sphere$. 
This result can be viewed as a local version of the vanishing theorem underlying the Aharonov-Casher theorem.
\end{rems}

\begin{cor}
\label{muntrelmunk:cor}
Set $\ca=2\pi\norm{\alpha}_{L^\infty}$. For any $n\in\Z$ and $t\in\R$ we have
\[
e^{-\ca\abs{t-k}}\abs{\mu_n(k)}
\le\abs{\mu_n(t)}
\le e^{\ca\abs{t-k}}\abs{\mu_n(k)}.
\]
\end{cor}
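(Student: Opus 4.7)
The strategy is a Gronwall argument based on a Hellmann--Feynman style identity for $\mu_n'(t)$, together with Proposition \ref{spinalign1:prop} applied with $\lambda_1=\lambda_2=\mu_n(t)$ and $\alpha'=\alpha$. The inequality is trivial if $\mu_n\equiv 0$, so assume otherwise.

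First, I would invoke Kato's theory for the self-adjoint holomorphic family $t\mapsto\sDirac{k}{t\alpha}=\sDirac{k}{}-t\sigma(\alpha)$ to select, together with the real-analytic eigenvalue $\mu_n$, a real-analytic family $\xi_n(t)$ of normalised eigenfunctions, i.e.\ $\sDirac{k}{t\alpha}\xi_n(t)=\mu_n(t)\xi_n(t)$ and $\norm{\xi_n(t)}=1$. Differentiating $\bigipd{\xi_n(t)}{\sDirac{k}{t\alpha}\xi_n(t)}=\mu_n(t)$, using self-adjointness and $\partial_t\sDirac{k}{t\alpha}=-\sigma(\alpha)$, gives the Hellmann--Feynman identity
\[
\mu_n'(t)=-\bigipd{\xi_n(t)}{\sigma(\alpha)\xi_n(t)}.
\]

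Next, I would apply Proposition \ref{spinalign1:prop} to the pair $\xi_1=\xi_2=\xi_n(t)$ with $\lambda_1=\lambda_2=\mu_n(t)$ and $\alpha'=\alpha$; this yields
\[
\abs{\mu_n'(t)}
\le \pi\cdot 2\abs{\mu_n(t)}\cdot\norm{\alpha}_{L^\infty}
= \ca\,\abs{\mu_n(t)}.
\]
Thus $\abs{\frac{d}{dt}\mu_n(t)^2}=2\abs{\mu_n(t)\mu_n'(t)}\le 2\ca\,\mu_n(t)^2$ for all $t\in\R$. Standard Gronwall then gives $\mu_n(t)^2\le e^{2\ca\abs{t-k}}\mu_n(k)^2$ and, running the inequality in reverse, $\mu_n(t)^2\ge e^{-2\ca\abs{t-k}}\mu_n(k)^2$. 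Taking positive square roots yields the two claimed bounds.

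The only subtle point is ensuring the Hellmann--Feynman identity is applicable on all of $\R$; this is precisely what real-analytic branch selection via Kato delivers. Once that is in hand, the estimate on $\abs{\mu_n'}$ from Proposition \ref{spinalign1:prop} is the key input and the rest is elementary. I expect no serious obstacle beyond a careful reference to the analytic perturbation framework that produced $\mu_n$.
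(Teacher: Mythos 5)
Your proof is correct and follows essentially the same route as the paper's: real-analytic selection of normalised eigenfunctions via Kato, the first-order (Hellmann--Feynman) perturbation identity $\mu_n'(t)=-\ipd{\xi_n(t)}{\sigma(\alpha)\xi_n(t)}$, Proposition~\ref{spinalign1:prop} with $\xi_1=\xi_2$ and $\lambda_1=\lambda_2=\mu_n(t)$ to get $\abs{\mu_n'}\le\ca\abs{\mu_n}$, then integration. Your Gronwall argument on $\mu_n^2$ is a slightly cleaner way of carrying out the integration step (it avoids having to separate out the case where $\mu_n$ vanishes), but it is not a different proof.
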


\begin{proof}
Fix $n$. Since $\sDirac{k}{t\alpha}$ is a self-adjoint holomorphic family we can choose a normalised eigenfunction $\xi(t)$ for $\mu_n(t)$ which is real-analytic in $t$ (see \cite{K}). Applying standard first order perturbation theory to \eqref{pertformDirac:eq2} then gives
\[
\frac{d}{dt}\mu_n(t)=-\ipd{\xi(t)}{\sigma(\alpha)\xi(t)}.
\]
Thus $\abs{d\mu_n/dt}\le a\abs{\mu_n}$ by Proposition \ref{spinalign1:prop}.
Integration completes the result.
\end{proof}

Let $\epsilon>0$ and suppose $\abs{\mu_n(k)}\le\epsilon$ (ultimately we will use $\epsilon$ to control the size of the approximate zero modes of $\sDirac{k}{k\alpha}$). For sufficiently small $\epsilon$ Corollary \ref{muntrelmunk:cor} provides enough control over the behaviour of $\mu_n(t)$ when $t\in\overline{\tau}_k$ to ensure that there are precisely two values of $t$ with $4\mu^2_n(t)+(t-k)^2=1/4$. Therefore the issue of extra values of $t$ can only arise when $\epsilon<\abs{\mu_n(k)}\le1/4$. For reasonable choices of $\epsilon$ Theorem \ref{basicappzmS2:thm} shows there are at most $o(\abs{k})$ such eigenvalues; we need to show that these eigenvalues lead to at most $o(\abs{k})$ extra values of $t$.

\section{Linearisation}

\label{linear:sec}

Our aim (Proposition \ref{evalLinfo:prop}) is to re-express the quantity $\cfnk$ as the number of eigenvalues of some 
(compact non-self-adjoint) operator $\oL$ within a prescribed set. 
In essence this is achieved by using \eqref{pertformDirac:eq2} and \eqref{ZM3:eq} to view $\cfnk$ 
as the number of real eigenvalues of a quadratic spectral pencil 
and then linearising this pencil by moving to a suitably chosen $2\times 2$ system.

\medskip

Introduce a shifted parameter $s=t-k+1$. 
Then $t\in\overline{\tau}_k$ iff $s\in\gaI$ where $\gaI=[1/2,3/2]$.
Also set $\sD=\sDirac{k}{(k-1)\alpha}$ and $\sA=\sigma(\alpha)$ so \eqref{pertformDirac:eq2} becomes
\[
\sDirac{k}{t\alpha}=\sD-s\sA.
\]
  
Let $\oI=I\otimes I_2$ denote the identity on $\bspin=\spin\otimes\C^2$.
Introduce further operators $\oP$ and $\oQ=\oQ_0+\oQ_1$ on $\bspin$ where
\begin{gather*}
\oP=2\sD\otimes\sigma_3+I\otimes\sigma_1-\tfrac12\oI=\begin{pmatrix}2\sD-\tfrac12I&I\\I&-2\sD-\tfrac12I\end{pmatrix},\\
\oQ_0=I\otimes\sigma_1=\begin{pmatrix}0&I\\I&0\end{pmatrix}
\quad\text{and}\quad
\oQ_1=2\sA\otimes\sigma_3=\begin{pmatrix}2\sA&0\\0&-2\sA\end{pmatrix}.
\end{gather*}
In particular
\[
\oP-s\oQ=\begin{pmatrix}2(\sD-s\sA)&(1-s)I\\(1-s)I&-2(\sD-s\sA)\end{pmatrix}-\tfrac12\oI.
\]
The operators $\oP$ and $\oQ$ are self-adjoint with $\oP$ unbounded and $\oQ$ bounded. 
In particular $\Dom\oP=(\Dom\sD)^2$ while $\oP-s\oQ$ has a compact resolvent for any $s\in\R$ (as $\sD-s\sA$ does).
Also
\begin{equation}
\label{PsQ2:eq}
(\oP-s\oQ+\tfrac12\oI)^2=\bigl[4(\sD-s\sA)^2+(s-1)^2I\bigr]\otimes I_2.
\end{equation}
Taking $s=0$ we get $(\oP+\oI/2)^2\ge\oI$ so $\abs{\oP}\ge\oI/2$ where $\oP=\abs{\oP}\oU$ is the polar decomposition of $\oP$.
It follows that $\abs{\oP}^{-1/2}$ is an injective compact operator with $\norm{\abs{\oP}^{-1/2}}\le\sqrt2$.
Define a further compact operator by 
\[
\oL=\oU\abs{\oP}^{-1/2}\oQ\abs{\oP}^{-1/2}.
\]

Let $\Co{1}=4e^{\ca/2}$. 
For $0\le\epsilon\le1/\Co{1}$ set $\gae{\pm}=1\pm(1-\Co{1}\epsilon)/2$ so $\gaI=[\gae[0]{-},\gae[0]{+}]$. 
Also set $\gaIe{+}=[\gae{+},\gae[0]{+}]$ and $\gaIe{-}=[\gae[0]{-},\gae{-}]$.

\begin{prop}
\label{evalLinfo:prop}
We have
\begin{equation}
\label{ZM5:eq}
\cfnk=\frac12\,\#\bigl\{\lambda\in\spec(\oL)\,:\,\lambda^{-1}\in\gaI\bigr\}.
\end{equation}
Furthermore if $0<\epsilon\le1/\Co{1}$ then 
\begin{equation}
\label{azmLest:eq}
\#\bigl\{\lambda\in\spec(\oL)\,:\,\lambda^{-1}\in\gaIe{\pm}\bigr\}
\ge\cfD{\epsilon}.
\end{equation}
\end{prop}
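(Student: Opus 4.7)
The plan is to leverage \eqref{PsQ2:eq} to convert the counting identity in \eqref{ZM3:eq} into a statement about the self-adjoint pencil $\oP - s\oQ$, and then to use the polar decomposition to translate this into an eigenvalue problem for $\oL$. The key step will be establishing the kernel bijection $\psi \mapsto |\oP|^{1/2}\psi$ from $\Ker(\oP - s\oQ)$ to $\Ker(\oL - s^{-1})$ for $s \neq 0$. Since $|\oP| \ge \oI/2$, $\oP$ is boundedly invertible and $\oU$ commutes with every function of $|\oP|$; substituting $\oP = \oU|\oP|$ into $\oP\psi = s\oQ\psi$ and setting $\phi = |\oP|^{1/2}\psi$ reduces the equation to $\phi = s\oL\phi$, yielding a linear isomorphism and hence $\dim\Ker(\oP - s\oQ) = \dim\Ker(\oL - s^{-1})$.

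Next I would use \eqref{PsQ2:eq} to decompose $\spec(\oP - s\oQ)$: diagonalising the $2 \times 2$ block structure shows that each eigenvalue $\mu$ of $\sD - s\sA$ with multiplicity $m$ produces eigenvalues $\pm\sqrt{4\mu^2 + (s-1)^2} - \tfrac12$ of $\oP - s\oQ$, each with multiplicity $m$. In particular
\[
\dim\Ker(\oP - s\oQ) = \#\{n : 4\mu_n^2(s+k-1) + (s-1)^2 = \tfrac14\}.
\]
Summing over $s \in \gaI$ (equivalently over $t \in \overline{\tau}_k$; the restriction $(t-k)^2 \le 1/4$ is automatic from the equation) and invoking \eqref{ZM3:eq} yields $\sum_{s \in \gaI}\dim\Ker(\oL - s^{-1}) = 2\cfnk$, which is \eqref{ZM5:eq}.

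For \eqref{azmLest:eq}, for each $n$ with $|\mu_n(k)| \le \epsilon$ I would produce crossings $s_\pm$ of the equation $h_n(s) := 4\mu_n^2(s+k-1) + (s-1)^2 = \tfrac14$ lying in $\gaIe{\pm}$ respectively, via the intermediate value theorem. Corollary \ref{muntrelmunk:cor} gives $|\mu_n(t)| \le e^{\ca/2}\epsilon$ throughout $\overline{\tau}_k$, and combined with $|\mu_n(k)| \le \epsilon$ this yields $h_n(1) \le 4\epsilon^2 \le \tfrac14$ while $h_n(1/2), h_n(3/2) \ge \tfrac14$, producing $s_+ \in [1, 3/2]$ and $s_- \in [1/2, 1]$ with $h_n(s_\pm) = \tfrac14$. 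Rearranging the equation gives $(s_\pm - 1)^2 \ge \tfrac14 - 4e^\ca\epsilon^2 = (1 - \Co{1}^2\epsilon^2)/4$, and the elementary identity $(1 - \Co{1}^2\epsilon^2) - (1 - \Co{1}\epsilon)^2 = 2\Co{1}\epsilon(1 - \Co{1}\epsilon) \ge 0$ (valid since $\epsilon \le 1/\Co{1}$) then forces $|s_\pm - 1| \ge (1 - \Co{1}\epsilon)/2$, so $s_\pm \in \gaIe{\pm}$. Each $n$ thus contributes at least one pair $(n, s_+)$ to $\{(n,s) : s \in \gaIe{+}, h_n(s) = \tfrac14\}$ (and similarly to the analogous set for $\gaIe{-}$), and by the multiplicity formula of the previous paragraph this establishes \eqref{azmLest:eq}. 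The main obstacle I anticipate is the multiplicity bookkeeping, since $\oL$ is non-self-adjoint: one must verify that geometric multiplicities of $\oL$-eigenvalues match kernel dimensions of the pencil, which the explicit bijection handles cleanly once the commutation $[\oU, f(|\oP|)] = 0$ is noted.
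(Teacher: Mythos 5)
Your proof is correct and follows essentially the same route as the paper: it linearises via the polar decomposition and the identity $\oI - s\oL = \oU\abs{\oP}^{-1/2}(\oP - s\oQ)\abs{\oP}^{-1/2}$ to equate kernel dimensions, uses \eqref{PsQ2:eq} to count $\dim\Ker(\oP - s\oQ)$, and uses an intermediate value argument for \eqref{azmLest:eq}. The only minor variation is that you obtain $\dim\Ker(\oP - s\oQ) = \#\{n : 4\mu_n^2(s+k-1)+(s-1)^2 = \tfrac14\}$ by explicitly diagonalising the $2\times 2$ blocks on each $\xi_n\otimes\C^2$, whereas the paper factors $(\oP-s\oQ+\tfrac12\oI)^2-\tfrac14\oI = (\oP-s\oQ)(\oP-s\oQ+\oI)$ and invokes conjugation by $I\otimes\sigma_2$ to split the two kernels equally; both yield the same conclusion.
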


Approximate zero modes correspond to the eigenvalues of $\oL$ with reciprocals in $\gaIe{+}$ and $\gaIe{-}$; \eqref{azmLest:eq} is the corresponding restatement of \eqref{keyestlb:ieq}.

\begin{proof}
From \eqref{ZM3:eq} and \eqref{PsQ2:eq} we get
\begin{align*}
\cfnk&=\frac12\sum_{s\in J}\#\bigl\{n\in\Z\vtsp:\vtsp4\mu^2_n(s+k-1)+(s-1)^2=\tfrac14\bigr\}\\
&=\frac14\sum_{s\in J}\dim\Ker\bigl[(\oP-s\oQ+\tfrac12\oI)^2-\tfrac14\oI\bigr].
\end{align*}
Now $(I\otimes\sigma_2)(\oP-s\oQ)(I\otimes\sigma_2)=-(\oP-s\oQ)-\oI$ (note that $\sigma_2^2=I_2$ while $\sigma_2\sigma_j\sigma_2=-\sigma_j$ for $j=1,3$).
It follows that
%Thus $\spec(\oP-s\oQ)$ is symmetric about $-1/2$, so
\begin{align*}
\dim\Ker\bigl[(\oP-s\oQ+\tfrac12\oI)^2-\tfrac14\oI\bigr]
&=\dim\Ker(\oP-s\oQ)+\dim\Ker(\oP-s\oQ+\oI)\\
&=2\dim\Ker(\oP-s\oQ).
\end{align*}
However $\oI-s\oL=\oU\abs{\oP}^{-1/2}(\oP-s\oQ)\abs{\oP}^{-1/2}$ 
so $\dim\Ker(\oI-s\oL)=\dim\Ker(\oP-s\oQ)$ for any $s$ (recall that $\abs{\oP}^{-1/2}$ is injective).
Combining the above gives \eqref{ZM5:eq}.

\smallskip

Now $\abs{\gae{\pm}-1}=(1-\Co{1}\epsilon)/2\le1/2$. If $\abs{\mu_n(k)}\le\epsilon\le1/\Co{1}$ for some $n\in\Z$ then
\[
\abs{\mu_n(k+\gae{\pm}-1)}\le e^{\ca/2}\vtsp\abs{\mu_n(k)}
\le\tfrac14\Co{1}\epsilon
\]
using Corollary \ref{muntrelmunk:cor}. It follows that
\begin{align*}
4\mu^2_n(k+\gae{\pm}-1)+(\gae{\pm}-1)^2
\le\tfrac14(\Co{1}\epsilon)^2+\tfrac14(1-\Co{1}\epsilon)^2
\le\tfrac14.
\end{align*}
However $4\mu^2_n(k+\gae[0]{\pm}-1)+(\gae[0]{\pm}-1)^2\ge1/4$ while $\mu_n$ is continuous. 
Thus there is at least one $s\in\gaIe{\pm}$ with $4\mu^2_n(k+s-1)+(s-1)^2=1/4$.
Since $\cfD{\epsilon}=\#\{n\in\Z:\abs{\mu_n(k)}\le\epsilon\}$ estimate \eqref{azmLest:eq} now follows.
\end{proof}

\section{Estimates for singular values}

\label{estsing:sec}

Define compact self-adjoint operators by 
\[
\oK_j=\abs{\oP}^{-1/2}\oQ_j\abs{\oP}^{-1/2},\quad j=0,1
\]
and $\oK=\oK_0+\oK_1$. Then $\oL=\oU\oK$ so $\oL^*\oL=\oK^2$;
in particular, the singular values of $\oL$ are simply the moduli of the eigenvalues of $\oK$.
In order to study the latter we treat $\oK_1$ as a perturbation of $\oK_0$; in turn, the spectrum of $\oK_0$ can be determined from that of $\sD$.

\medskip

For any $d\in\R$ let $X_d$ denote the symmetric $2\times2$ matrix
\[
X_d=\begin{pmatrix}2d-\tfrac12&1\\1&-2d-\tfrac12\end{pmatrix}.
\]
The eigenvalues of $X_d$ are $-1/2+\dscmt$ and $-1/2-\dscmt$ where $\dscmt=\sqrt{4d^2+1}\ge1$. Thus
\begin{equation}
\label{normXdiest:eq}
\norm{\abs{X_d}^{-1/2}}=\bigl(\dscmt-\tfrac12\bigr)^{-1/2}\le\min\{\sqrt2,\,\abs{d}^{-1/2}\}.
\end{equation}
Define a quadratic polynomial by
\[
p_d(\lambda)=\lambda^2+\dscmt^{-1}\,\lambda+\tfrac14-\dscmt^2.
\]
Then $p_d(0)\le-3/4$ so $p_d$ has one root of each sign; let $\kappa^\pm(d)$ denote the reciprocal of the root with sign $\pm1$.
Note that $\kappa^+(0)=2$ and $\kappa^-(0)=-2/3$.

\begin{lem}
\label{behkappa:lem}
The eigenvalues of the $2\times2$ matrix $\abs{X_d}^{-1/2}\sigma_1\abs{X_d}^{-1/2}$ are $\kappa^+(d)$ and $\kappa^-(d)$.
Furthermore $\pm\kappa^\pm(d)\le\min\{2,\abs{d}^{-1}\}$ and $\abs{\kappa^\pm(d)-\kappa^\pm(0)}\le16d^2$.
\end{lem}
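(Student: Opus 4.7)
The plan is to first identify the eigenvalues and then deal with the two numerical estimates in turn. For the eigenvalue identification I would exploit that $|X_d|^{-1/2}\sigma_1|X_d|^{-1/2}$ is similar (via conjugation by $|X_d|^{1/2}$) to $\sigma_1|X_d|^{-1}$, so its eigenvalues are precisely the $\mu\in\R$ solving $\det(\sigma_1-\mu|X_d|)=0$. Expanding that determinant gives
\[
\det(\sigma_1-\mu|X_d|)=\det|X_d|\vtsp\mu^2+2(|X_d|)_{12}\,\mu-1,
\]
so I need $\det|X_d|$ and $(|X_d|)_{12}$ explicitly. These I would extract from Cayley--Hamilton: since $|X_d|$ has eigenvalues $\Delta\pm\tfrac12$, one has $|X_d|^2-2\Delta|X_d|+(\Delta^2-\tfrac14)I=0$, and combining this with $|X_d|^2=X_d^2$ yields $|X_d|=(X_d^2+(\Delta^2-\tfrac14)I)/(2\Delta)$. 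A direct calculation gives $(X_d^2)_{12}=-1$, hence $(|X_d|)_{12}=-1/(2\Delta)$ and $\det|X_d|=\Delta^2-\tfrac14$. Substituting, the eigenvalue equation becomes $(\Delta^2-\tfrac14)\mu^2-\mu/\Delta-1=0$; dividing by $-\mu^2$ and setting $\lambda=1/\mu$ reproduces $p_d(\lambda)=0$, so $\mu\in\{\kappa^+(d),\kappa^-(d)\}$.

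For the bound $\pm\kappa^\pm(d)\le\min\{2,|d|^{-1}\}$ I would work with the closed form $\lambda_\pm(d)=(-\Delta^{-1}\pm u)/2$, where $u=\sqrt{\Delta^{-2}+4\Delta^2-1}$, and prove $\lambda_+(d)\ge\max\{\tfrac12,|d|\}$ (from which the bounds on $\kappa^+=1/\lambda_+$ follow, and those on $-\kappa^-=1/|\lambda_-|\le1/\lambda_+$ follow a fortiori since $|\lambda_-|=\lambda_++\Delta^{-1}$). The inequality $\lambda_+\ge\tfrac12$, after squaring, reduces to $(\Delta-1)(2\Delta^2+2\Delta+1)\ge0$, which holds since $\Delta\ge1$; and $\lambda_+\ge|d|$ for $d\neq0$ reduces (squaring and using $\Delta^2=4d^2+1$) to $4\Delta^2-1\ge4d^2+4|d|/\Delta$, which follows from $4|d|/\Delta\le 2$ and $4\Delta^2-1-4d^2=12d^2+3\ge2$.

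For the continuity estimate $|\kappa^\pm(d)-\kappa^\pm(0)|\le16d^2$ I would introduce the reciprocal polynomial
\[
q_d(\kappa)=(\Delta^2-\tfrac14)\kappa^2-\kappa/\Delta-1,
\]
which has roots $\kappa^\pm(d)$ and hence factors as $q_d(\kappa)=(\Delta^2-\tfrac14)(\kappa-\kappa^+(d))(\kappa-\kappa^-(d))$. Since $q_d(0)=-1$ and the leading coefficient is positive, the roots have opposite signs, so $\kappa^+(d)>0>\kappa^-(d)$. A direct subtraction gives
\[
q_d(\kappa)-q_0(\kappa)=4d^2\kappa^2+\frac{4d^2}{\Delta(\Delta+1)}\,\kappa,
\]
and evaluating at $\kappa=\kappa^\pm(0)\in\{2,-2/3\}$ produces an explicit $O(d^2)$ expression. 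Rearranging the factorisation yields
\[
\kappa^\pm(d)-\kappa^\pm(0)=-\frac{q_d(\kappa^\pm(0))}{(\Delta^2-\tfrac14)\vtsp(\kappa^\pm(0)-\kappa^\mp(d))},
\]
and the sign information $\kappa^+(d)>0>\kappa^-(d)$ gives the bounds $|2-\kappa^-(d)|\ge2$ and $|\tfrac23+\kappa^+(d)|\ge\tfrac23$ for the denominator; combined with $\Delta^2-\tfrac14\ge\tfrac34$ and the elementary estimates $|q_d(2)|\le 20d^2$, $|q_d(-\tfrac23)|\le 16d^2/9$, both continuity bounds come out strictly less than $16d^2$ (numerically, $40d^2/3$ and $32d^2/9$ respectively).

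The main obstacle I anticipate is keeping the constants tight enough in the last step: the numerator $q_d(\kappa^\pm(0))$ and the denominator $(\Delta^2-\tfrac14)(\kappa^\pm(0)-\kappa^\mp(d))$ both depend on $d$ in a coupled way, so one must use the sign of $\kappa^\mp(d)$ to avoid a circular argument (otherwise one needs to know $\kappa^\mp(d)$ is close to $\kappa^\mp(0)$, which is what is being proved). Once the factorisation is written down and the signs of the roots are established from $q_d(0)<0$, however, the remaining computation is a short and routine piece of arithmetic.
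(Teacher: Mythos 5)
Your proof is correct, and the argument for the eigenvalue identification and the bound $\pm\kappa^\pm(d)\le\min\{2,\abs{d}^{-1}\}$ is equivalent to the paper's, though somewhat more computational: for the bound the paper simply observes that the eigenvalues of $\abs{X_d}^{-1/2}\sigma_1\abs{X_d}^{-1/2}$ are controlled in modulus by $\norm{\abs{X_d}^{-1/2}}^2\norm{\sigma_1}$, which is already estimated in \eqref{normXdiest:eq}, whereas you re-derive this from the explicit quadratic-formula expression for the roots of $p_d$. Where you genuinely diverge is in the continuity estimate. The paper works on the $p_d$ side: using that $p_d(\lambda)$ is decreasing in $d^2$ for each fixed $\lambda>0$ it traps $\chi^+_d=1/\kappa^+(d)$ between $\chi^+_0=\tfrac12$ and $\dscmt^2-\tfrac12$ to get $0\le\chi^+_d-\chi^+_0\le4d^2$, controls $\chi^-_d-\chi^-_0$ via $\chi^+_d+\chi^-_d=-\dscmt^{-1}$, and converts to the $\kappa$'s by writing $\abs{\kappa^\pm(d)-\kappa^\pm(0)}=\abs{\chi^\pm_d-\chi^\pm_0}/(\abs{\chi^\pm_d}\,\abs{\chi^\pm_0})$ with the lower bound $\abs{\chi^\pm_d}\ge\tfrac12$. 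Your argument instead factors the reciprocal polynomial $q_d$, evaluates at $\kappa^\pm(0)$, and uses the sign information $\kappa^+(d)>0>\kappa^-(d)$ to bound the denominator from below. Both are valid; yours yields the slightly sharper constants $40/3$ and $32/9$ in place of $16$, at the cost of an explicit computation of $q_d-q_0$, while the paper's bracketing-and-reciprocal route is a little shorter because the monotonicity of $p_d$ in $d^2$ makes the two-sided estimate on $\chi^+_d$ essentially immediate.
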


Let $x^\pm_d\in\C^2$ denote a normalised eigenvector of $\abs{X_d}^{-1/2}\sigma_1\abs{X_d}^{-1/2}$ corresponding to $\kappa^\pm(d)$.

\begin{proof}
We have $\det(\abs{X_d}\sigma_1)=-\abs{\det X_d}=\tfrac14-\dscmt^2$ while $2\dscmt\abs{X_d}+X_d=(2\dscmt^2-\tfrac12)I_2$ so
$\Trace(2\dscmt\abs{X_d}\sigma_1)=-\Trace(X_d\sigma_1)=-2$.
Thus $p_d$ is the characteristic polynomial of $\abs{X_d}\sigma_1$ and hence $\abs{X_d}^{1/2}\sigma_1\abs{X_d}^{1/2}$. 
The first part of the result follows as $\sigma_1^{-1}=\sigma_1$, while the second part can then be obtained from \eqref{normXdiest:eq} and the fact that $\norm{\sigma_1}=1$.

\smallskip

Let $\chi^\pm_d=1/\kappa^\pm(d)$ denote the roots of $p_d$; in particular $\abs{\chi^\pm_d}\ge1/2$. 
Now $p_d(\lambda)$ is decreasing in $d^2$ for fixed $\lambda>0$ 
so $\chi^+_d\ge\chi^+_0=1/2$.
Also 
\[
p_d\bigl(\dscmt^2-\tfrac12\bigr)
\ge\dscmt-\tfrac12(1+\dscmt^{-1})\ge0 
\]
(recall that $\dscmt\ge1$) so $\chi^+_d\le\dscmt^2-1/2$.
Thus $0\le\chi^+_d-\chi^+_0\le\dscmt^2-1=4d^2$.
On the other hand $\chi^+_d+\chi^-_d=-\dscmt^{-1}$ for any $d$ so
\[
(\chi^+_d-\chi^+_0)+(\chi^-_d-\chi^-_0)
=1-\dscmt^{-1}\in[0,\vtsp2d^2].
\]
It follows that $\abs{\chi^-_d-\chi^-_0}\le4d^2$. Combined we then get
\[
\abs{\kappa^\pm(d)-\kappa^\pm(0)}
=\frac{\abs{\chi^\pm_d-\chi^\pm_0}}{\abs{\chi^\pm_d}\,\abs{\chi^\pm_0}}
\le\frac{4d^2}{(\tfrac12)(\tfrac12)}=16d^2,
\]
completing the result.
\end{proof}

Set $\nu_n=\mu_n(k-1)$ for $n\in\Z$ (the eigenvalues of $\sD$). By Corollary \ref{muntrelmunk:cor} we have
\begin{equation}
\label{estnun:eq}
e^{-\ca}\abs{\mu_n(k)}\le\abs{\nu_n}\le e^{\ca}\abs{\mu_n(k)}.
\end{equation}
Choose an orthonormal basis $\{\xi_n:n\in\Z\}$ of $\spin$ with $\sD\xi_n=\nu_n\xi_n$. 
For each $n\in\Z$ set $\kappa^\pm_n=\kappa^\pm(\nu_n)$ and $u^\pm_n=\xi_n\otimes x^\pm_{\nu_n}\in\bspin$.
The definitions of $\oK_0$ and $X_d$ lead to $\oK_0u^\pm_n=\kappa^\pm_n u^\pm_n$
so, in particular, $\{u^+_n,u^-_n:n\in\Z\}$ is an eigenbasis for $\oK_0$.

Given $\epsilon,R>0$ set $\Zazm=\{n\in\Z:\abs{\mu_n(k)}\le\epsilon\}$ and $\Zbig=\{n\in\Z:\abs{\mu_n(k)}>R\}$.
Let $\PKazm$, $\PKbig$ and $\PKrem$ denote the (orthogonal) spectral projections of $\oK_0$ with
\[
\Ran\PKazm=\Span\{u^\pm_n\,:\,n\in\Zazm\},\quad
\Ran\PKbig=\Span\{u^+_n,u^-_n\,:\,n\in\Zbig\}
\]
and $\PKrem=\oI-\PKazm[+]-\PKazm[-]-\PKbig$. 
Clearly
\[
\dim\Ran\PKazm=\#\Zazm=\cfD{\epsilon}
\quad\text{and}\quad
\dim\Ran\PKrem=2\dfD{\epsilon,R}.
\]

\begin{lem}
\label{estK0:lem}
Let $\epsilon,R>0$. Then $\pm\oK_0\PKazm\ge0$ while $\bignorm{[\oK_0-\kappa^\pm(0)\vtsp\oI]\vtsp\PKazm}\le\Co{2,1}\epsilon^2$
and $\norm{\oK_0\PKbig}\le\Co{2,2}R^{-1}$ for some constants $\Co{2,1}$ and $\Co{2,2}$.
\end{lem}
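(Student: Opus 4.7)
The plan is to observe that all operators appearing in the claim are simultaneously diagonal in the eigenbasis $\{u^+_n, u^-_n : n\in\Z\}$ of $\oK_0$, so that each of the three claimed estimates reduces to a scalar bound on $\kappa^\pm(\nu_n)$ for $n$ in the appropriate index set. Since $\PKazm$ is the spectral projection of $\oK_0$ onto $\Span\{u^\pm_n : n\in\Zazm\}$ (and similarly for $\PKbig$), and these are precisely the decompositions of $\oK_0$ into diagonal blocks in that basis, the operator-theoretic estimates amount to taking the supremum of these scalar bounds over the relevant values of $n$.

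First, for the positivity claim, note that $p_d(0)=\tfrac14-\dscmt^2\le-\tfrac34<0$, so the two roots of $p_d$ have opposite signs, which forces $\kappa^+(d)>0$ and $\kappa^-(d)<0$ for every $d\in\R$. The operator $\pm\oK_0\PKazm[\pm]$ is diagonal with entries $\pm\kappa^\pm(\nu_n)\ge0$ on $\{u^\pm_n:n\in\Zazm\}$ and zero elsewhere, so $\pm\oK_0\PKazm[\pm]\ge0$.

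Second, $[\oK_0-\kappa^\pm(0)\oI]\PKazm[\pm]$ acts as multiplication by $\kappa^\pm(\nu_n)-\kappa^\pm(0)$ on $u^\pm_n$ for $n\in\Zazm$ and vanishes on the orthogonal complement. By the last part of Lemma \ref{behkappa:lem} this scalar is bounded by $16\nu_n^2$, and \eqref{estnun:eq} gives $\abs{\nu_n}\le e^{\ca}\abs{\mu_n(k)}\le e^{\ca}\epsilon$ for $n\in\Zazm$. The desired bound with $\Co{2,1}=16e^{2\ca}$ follows. For the third claim, $\oK_0\PKbig$ acts as multiplication by $\kappa^\pm(\nu_n)$ on $u^\pm_n$ for $n\in\Zbig$. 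Lemma \ref{behkappa:lem} gives $\abs{\kappa^\pm(\nu_n)}\le\abs{\nu_n}^{-1}$, while \eqref{estnun:eq} gives $\abs{\nu_n}\ge e^{-\ca}\abs{\mu_n(k)}>e^{-\ca}R$ for $n\in\Zbig$, yielding the bound with $\Co{2,2}=e^{\ca}$.

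There is no real obstacle — the argument is a direct scalar computation in a common eigenbasis, and the substance has already been compressed into Lemma \ref{behkappa:lem} (via the polynomial $p_d$) and into the comparison \eqref{estnun:eq} between $\abs{\nu_n}$ and $\abs{\mu_n(k)}$ provided by Corollary \ref{muntrelmunk:cor}. The only care needed is to match the $\pm$ labels consistently and to keep $\Co{2,1},\Co{2,2}$ independent of $k$, $\epsilon$, $R$ (they depend only on $\ca=2\pi\norm{\alpha}_{L^\infty}$).
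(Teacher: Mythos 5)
Your proof is correct and takes essentially the same route as the paper: diagonalise in the eigenbasis $\{u^\pm_n\}$ of $\oK_0$, then invoke Lemma \ref{behkappa:lem} together with \eqref{estnun:eq} to bound the scalar entries $\kappa^\pm(\nu_n)$, arriving at the same constants $\Co{2,1}=16e^{2\ca}$ and $\Co{2,2}=e^{\ca}$. The only difference is expository — you re-derive the sign of $\kappa^\pm(d)$ from $p_d(0)<0$, which the paper states just before Lemma \ref{behkappa:lem} and then uses without further comment.
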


\begin{proof}
We have $\pm\kappa^\pm_n>0$ for all $n$ while Lemma \ref{behkappa:lem} and \eqref{estnun:eq} give
\[
\abs{\kappa^\pm_n-\kappa^\pm(0)}\le16\,\nu_n^2\le16e^{2\ca}\epsilon^2
\]
for $n\in\Zazm$, and 
$\abs{\kappa^\pm_n}\le\abs{\nu_n}^{-1}\le e^{\ca}R^{-1}$ for $n\in\Zbig$.
The result follows (with $\Co{2,1}=16e^{2\ca}$ and $\Co{2,2}=e^{\ca}$).
\end{proof}

Next we consider $\oK_1$; we begin with estimates for $\oK_1$ restricted to certain spectral subspaces of $\oK_0$. 

\begin{lem}
\label{estK1:lem}
Suppose $\epsilon,R>0$ and $\pi_1,\pi_2\in\{+,-\}$. 
Then $\norm{\PKazm[\pi_1]\oK_1\PKazm[\pi_2]}\le\Co{3,1}\vtsp\epsilon\vtsp\cfD{\epsilon}$ and 
$\norm{\oK_1\PKbig}\le\Co{3,2}R^{-1/2}$ for some constants $\Co{3,1}$ and $\Co{3,2}$.
\end{lem}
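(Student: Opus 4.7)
The plan is to exploit the fact that both $\oK_0$ and $\abs{\oP}$ are essentially block-diagonal with respect to the eigendecomposition of $\sD$; indeed $\oP$ preserves each two-dimensional subspace $\Span\{\xi_n\}\otimes\C^2$ on which it acts as the matrix $X_{\nu_n}$, so $\abs{\oP}^{-1/2}(\xi_n\otimes e)=\xi_n\otimes\abs{X_{\nu_n}}^{-1/2}e$ for any $e\in\C^2$. Using $\oQ_1=2\sA\otimes\sigma_3$ and the self-adjointness of $\abs{\oP}^{-1/2}$, the matrix entries of $\oK_1$ in the eigenbasis $\{u^\pm_n\}$ of $\oK_0$ factor cleanly as
\[
\bigipd{u^{\pi_1}_m}{\oK_1 u^{\pi_2}_n}
=2\bigipd{\xi_m}{\sA\xi_n}\cdot\bigipd{\abs{X_{\nu_m}}^{-1/2}x^{\pi_1}_{\nu_m}}{\sigma_3\abs{X_{\nu_n}}^{-1/2}x^{\pi_2}_{\nu_n}}.
\]

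For the first estimate the crucial input is Proposition \ref{spinalign1:prop} applied with $\alpha'=\alpha$: this gives $\abs{\ipd{\xi_m}{\sA\xi_n}}\le\pi(\abs{\nu_m}+\abs{\nu_n})\norm{\alpha}_{L^\infty}$. When $m,n\in\Zazm$, the bound \eqref{estnun:eq} forces $\abs{\nu_j}\le e^\ca\epsilon$, producing an $O(\epsilon)$ factor; the scalar second factor in the display above is controlled by $2$ via \eqref{normXdiest:eq}. Hence every matrix entry of $\PKazm[\pi_1]\oK_1\PKazm[\pi_2]$ is $O(\epsilon)$, and since $\Ran\PKazm[\pi_1]$ and $\Ran\PKazm[\pi_2]$ each have dimension $\cfD{\epsilon}$, the Hilbert-Schmidt norm (and hence the operator norm) is bounded by $O(\epsilon\,\cfD{\epsilon})$, as claimed.

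The second estimate is a straightforward norm bound: writing $\oK_1\PKbig=\abs{\oP}^{-1/2}\oQ_1\bigl(\abs{\oP}^{-1/2}\PKbig\bigr)$, the block structure of $\abs{\oP}^{-1/2}$ together with $\abs{\nu_n}>e^{-\ca}R$ for $n\in\Zbig$ (from \eqref{estnun:eq}) and \eqref{normXdiest:eq} gives $\norm{\abs{\oP}^{-1/2}\PKbig}\le e^{\ca/2}R^{-1/2}$; combining with $\norm{\abs{\oP}^{-1/2}}\le\sqrt2$ and $\norm{\oQ_1}\le2\norm{\alpha}_{L^\infty}$ yields the $\Co{3,2}R^{-1/2}$ bound.

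The main obstacle is the first estimate: without Proposition \ref{spinalign1:prop} the naive bound $\abs{\ipd{\xi_m}{\sA\xi_n}}\le\norm{\alpha}_{L^\infty}$ would lose the decisive factor $\epsilon$, and the presence of the factor $\cfD{\epsilon}$ (which grows linearly in $\abs{k}$ by Theorem \ref{basicappzmS2:thm}) would then render the estimate useless when later applied with the exponentially small choices of $\epsilon$ required to close the asymptotic argument.
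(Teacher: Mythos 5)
Your proposal is correct and follows essentially the same route as the paper: the matrix-entry factorisation in the $\{u^\pm_n\}$ eigenbasis, Proposition~\ref{spinalign1:prop} combined with \eqref{estnun:eq} to extract the decisive $O(\epsilon)$ factor, the Hilbert--Schmidt bound over $\Zazm\times\Zazm$, and the blockwise bound $\norm{\abs{\oP}^{-1/2}\PKbig}\le e^{\ca/2}R^{-1/2}$ for the second estimate. You have also correctly identified that Proposition~\ref{spinalign1:prop} is the indispensable ingredient without which the first bound would be useless.
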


\begin{proof}
Since $\{u_n^\pm:n\in\Zazm\}$ is an orthonormal basis for $\Ran\PKazm$ we have
\begin{equation}
\label{HSestofNPK1P:eq}
\norm{\PKazm[\pi_1]\oK_1\PKazm[\pi_2]}^2\le\sum_{m,n\in\Zazm}\abs{\ipd{u^{\pi_1}_m}{\oK_1u^{\pi_2}_n}}^2.
\end{equation}
Now the definitions of $\oK_1$ and $\oQ_1$ give
\begin{align*}
\ipd{u^{\pi_1}_m}{\oK_1u^{\pi_2}_n}
&=\bigipd{\abs{\oP}^{-1/2}u^{\pi_1}_m}{\oQ_1\abs{\oP}^{-1/2}u^{\pi_2}_n}\\
&=2\ipd{\xi_m}{\sA\xi_n}\,
\bigipd{\abs{X_{\nu_m}}^{-1/2}x^{\pi_1}_{\nu_m}}{\sigma_3\abs{X_{\nu_n}}^{-1/2}x^{\pi_2}_{\nu_n}}.
\end{align*}
Note that $\norm{\sigma_3}=1$ so
$\abs{\ipd{\abs{X_{\nu_m}}^{-1/2}x^{\pi_1}_{\nu_m}}{\sigma_3\abs{X_{\nu_n}}^{-1/2}x^{\pi_2}_{\nu_n}}}\le2$
by \eqref{normXdiest:eq}.
On the other hand when $m,n\in\Zazm$ Proposition \ref{spinalign1:prop} and \eqref{estnun:eq} give
\[
\abs{\ipd{\xi_m}{\sA\xi_n}}=\abs{\ipd{\xi_m}{\sigma(\alpha)\xi_n}}
\le\pi(\abs{\nu_m}+\abs{\nu_n})\vtsp\norm{\alpha}_{L^\infty}
\le\ca e^{\ca}\epsilon.
\]
Therefore $\abs{\ipd{u^{\pi_1}_m}{\oK_1u^{\pi_2}_n}}\le\Co{3,1}\epsilon$ with $\Co{3,1}=4\ca e^{\ca}$.
Since $\#\Zazm=\cfD{\epsilon}$ the first part of the result now follows from \eqref{HSestofNPK1P:eq}.

Now let $u\in\Ran\PKbig$. Then $u=\sum_{n\in\Zbig}\xi_n\otimes z_n$ for some $z_n\in\C^2$, so
\[
\abs{\oP}^{-1/2}u=\sum_{n\in\Zbig}\xi_n\otimes\abs{X_{\nu_n}}^{-1/2}z_n.
\]
For $n\in\Zbig$ \eqref{normXdiest:eq} and \eqref{estnun:eq} lead to
\[
\norm{\abs{X_{\nu_n}}^{-1/2}z_n}^2
\le\abs{\nu_n}^{-1}\norm{z_n}^2
\le e^{\ca}R^{-1}\norm{z_n}^2.
\]
Since $\{\xi_n:n\in\Zbig\}$ is an orthonormal set (in $\spin$) it follows that
\[
\norm{\abs{\oP}^{-1/2}u}^2
=\sum_{n\in\Zbig}\norm{\abs{X_{\nu_n}}^{-1/2}z_n}^2
\le e^{\ca}R^{-1}\sum_{n\in\Zbig}\norm{z_n}^2
=e^{\ca}R^{-1}\norm{u}^2.
\]
Therefore $\norm{\abs{\oP}^{-1/2}\PKbig}\le e^{\ca/2}R^{-1/2}$. Since $\norm{\abs{\oP}^{-1/2}}\le\sqrt{2}$ and $\norm{\oQ_1}=2\norm{\sA}=2\norm{\alpha}_{L^\infty}$ the required estimate for $\norm{\oK_1\PKbig}$ follows with $\Co{3,2}=2\sqrt2 e^{\ca/2}\norm{\alpha}_{L^\infty}$.
\end{proof}

For $\epsilon,R>0$ set 
\begin{equation}
\label{defdeltaeR:eq}
\delta(\epsilon,R)=\Co{2,1}\epsilon^2+4\Co{3,1}\epsilon\vtsp\cfD{\epsilon}+\Co{2,2}R^{-1}+2\Co{3,2}R^{-1/2}.
\end{equation}
Let $\{\lambda_n^+:n\in\N\}$ and $\{\lambda_n^-:n\in\N\}$ denote the sets of positive and negative eigenvalues of $\oK=\oK_0+\oK_1$, enumerated to include multiplicities and ordered so that $\lambda_1^-\le\lambda_2^-\le\dots<0<\dots\le\lambda_2^+\le\lambda_1^+$. 

\begin{prop}
\label{minmaxappK:prop}
Suppose $\epsilon,R>0$. Then
\begin{equation}
\label{outestK1eval:eq}
\#\bigl\{n\in\N\vtsp:\vtsp\abs{\lambda_n^\pm}>\pm\kappa^\pm(0)+\delta(\epsilon,R)\bigr\}\le2\dfD{\epsilon,R}
\end{equation}
and
\begin{equation}
\label{inestK1eval:eq}
\#\bigl\{n\in\N:\abs{\lambda_n^\pm}>\delta(\epsilon,R)\bigr\}\le\cfD{\epsilon}+2\dfD{\epsilon,R}.
\end{equation}
\end{prop}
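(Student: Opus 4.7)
My plan is to derive both \eqref{outestK1eval:eq} and \eqref{inestK1eval:eq} from the Courant--Fischer min-max principle for the self-adjoint compact operator $\oK$:
\[
\lambda_{n+1}^+=\min_{\codim V=n}\sup_{v\in V,\,\norm{v}=1}\ipd{v}{\oK v},
\]
with the analogous statement for $-\oK$ controlling $\lambda_n^-$. Each of the four inequalities (two signs, two bounds) will follow if I can exhibit a subspace $V\subset\bspin$ of the appropriate codimension on which $\ipd{v}{\oK v}$ (or $\ipd{v}{-\oK v}$) is bounded above by the claimed right-hand side.

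The natural candidates are built from the spectral projections of $\oK_0$. For the outer bounds in \eqref{outestK1eval:eq} I would use
\[
V=\Ran\PKazm[+]\oplus\Ran\PKazm[-]\oplus\Ran\PKbig,
\]
which has codimension $\dim\Ran\PKrem=2\dfD{\epsilon,R}$. For the inner positive bound in \eqref{inestK1eval:eq} I would drop the $\Ran\PKazm[+]$ summand, and for the inner negative bound drop $\Ran\PKazm[-]$, each giving codimension $\cfD{\epsilon}+2\dfD{\epsilon,R}$. Because the three spectral projections of $\oK_0$ commute with $\oK_0$, the quadratic form $\ipd{v}{\oK_0v}$ splits diagonally, and Lemma~\ref{estK0:lem} handles each piece: writing $v=v_++v_-+v_b$ we get $\ipd{v_+}{\oK_0v_+}\le(\kappa^+(0)+\Co{2,1}\epsilon^2)\norm{v_+}^2$, $\ipd{v_-}{\oK_0v_-}\le 0$, and $\ipd{v_b}{\oK_0v_b}\le\Co{2,2}R^{-1}\norm{v_b}^2$ (with signs reversed when working with $-\oK$, and the appropriate $v_\pm$ summand absent in the inner-bound subspaces). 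For the $\oK_1$-piece I would estimate $\norm{P_V\oK_1P_V}$ via block decomposition: the four $(\PKazm[\pi_1],\PKazm[\pi_2])$-blocks contribute at most $4\Co{3,1}\epsilon\cfD{\epsilon}$ in total by Lemma~\ref{estK1:lem}, and the $\PKbig$-involving blocks contribute at most $2\Co{3,2}R^{-1/2}$. Summing matches $\delta(\epsilon,R)$ as defined in \eqref{defdeltaeR:eq}.

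The one subtlety, which governs the precise $\Co{3,2}R^{-1/2}$ coefficient in $\delta$, is that the self-adjoint off-diagonal piece $A=P\oK_1\PKbig+\PKbig\oK_1P$ (with $P=\PKazm[+]+\PKazm[-]$, or just one summand in the inner case) has operator norm at most $\Co{3,2}R^{-1/2}$ rather than the na\"\i ve $2\Co{3,2}R^{-1/2}$. I would verify this by splitting a test vector as $u=u_1+u_2$ along $\Ran P\oplus\Ran\PKbig$, noting that $Au=P\oK_1u_2+\PKbig\oK_1u_1$ has its two summands in orthogonal subspaces, and bounding each of them via Lemma~\ref{estK1:lem} together with self-adjointness of $\oK_1$ (which gives $\norm{\PKbig\oK_1P}=\norm{P\oK_1\PKbig}\le\Co{3,2}R^{-1/2}$). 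This orthogonality saving is the only step beyond routine block-matrix bookkeeping; with it in hand, all four claimed inequalities drop out of the min-max principle applied to the subspaces above.
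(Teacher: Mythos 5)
Your proposal is correct and follows essentially the same route as the paper: a variational/min-max argument over the spectral subspaces $\Ran\PKazm[+]$, $\Ran\PKazm[-]$, $\Ran\PKbig$, $\Ran\PKrem$ of $\oK_0$, with Lemmas \ref{estK0:lem} and \ref{estK1:lem} supplying exactly the block estimates you use. The orthogonality refinement you flag as ``the one subtlety'' is a valid observation but is not actually needed: the paper bounds $\ipd{u}{\oK_1u}$ directly by Cauchy--Schwarz as $\norm{(\PKazm[+]+\PKazm[-])\oK_1(\PKazm[+]+\PKazm[-])}+2\norm{\oK_1\PKbig}$, and since $\delta(\epsilon,R)$ in \eqref{defdeltaeR:eq} already carries the factor $2\Co{3,2}$, this cruder estimate suffices.
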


The basic argument is a variational one with Lemmas \ref{estK0:lem} and \ref{estK1:lem} providing the relevant information about $\oK_0$ and $\oK_1$ respectively.

\begin{proof}
Set $M=\dim\Ran\PKrem=2\dfD{\epsilon,R}$ and let $H\le\bspin$ with $\dim H=M+1$. Choose $u\in H$ with $\norm{u}=1$ and $\PKrem u=0$. Then $u=(\PKazm[+]+\PKazm[-]+\PKbig)u$ so
\[
\ipd{u}{\oK_0u}=\ipd{u}{\oK_0\PKazm[+]u}+\ipd{u}{\oK_0\PKazm[-]u}+\ipd{u}{\oK_0\PKbig u}
\le\norm{\oK_0\PKazm[+]}+\norm{\oK_0\PKbig}
\]
(since $\oK_0\PKazm[-]\le0$ from Lemma \ref{estK0:lem}) while
\begin{align*}
\ipd{u}{\oK_1u}
&=\ipd{u}{(\PKazm[+]+\PKazm[-])\oK_1(\PKazm[+]+\PKazm[-])u}+\ipd{u}{\oK_1\PKbig u}+\ipd{\oK_1\PKbig u}{(\PKazm[+]+\PKazm[-])u}\\
&\le\norm{(\PKazm[+]+\PKazm[-])\oK_1(\PKazm[+]+\PKazm[-])}+2\norm{\oK_1\PKbig}.
\end{align*}
Using a variational argument it follows that
\[
\lambda_{M+1}^+\le\norm{\oK_0\PKazm[+]}+\norm{\oK_0\PKbig}+\norm{(\PKazm[+]+\PKazm[-])\oK_1(\PKazm[+]+\PKazm[-])}+2\norm{\oK_1\PKbig}.
\]
Lemmas \ref{estK0:lem} and \ref{estK1:lem} then give $\lambda_{M+1}^+\le\kappa^+(0)+\delta(\epsilon,R)$. 
The case of the upper sign in \eqref{outestK1eval:eq} clearly follows. The lower sign can be obtained by a similar argument.

Now set $M=\dim\Ran(\PKazm[+]+\PKrem)=\cfD{\epsilon}+2\dfD{\epsilon,R}$. A slightly simpler version of the above argument leads to
\[
\lambda_{M+1}^+\le\norm{\oK_0\PKbig}+\norm{\PKazm[-]\oK_1\PKazm[-]}+2\norm{\oK_1\PKbig}
\le\Co{2,2}R^{-1}+\Co{3,1}\epsilon\vtsp\cfD{\epsilon}+2\Co{3,2}R^{-1/2}.
\]
Since the right hand side is clearly bounded above by $\delta(\epsilon,R)$ \eqref{inestK1eval:eq} now follows.
\end{proof}

\section{The upper bound}

\label{mainarg:sec}

The upper bound in Theorem \ref{typeIIest1:thm} follows from Theorem \ref{basicappzmS2:thm} if we can show that 
$\cfnk-\cfD[k,\alpha]{\epsilon_k}$ is bounded from above by $o(\abs{k})$ for suitably chosen $\epsilon_k$.
We firstly estimate this difference using Propositions \ref{evalLinfo:prop} and \ref{minmaxappK:prop} together with Weyl's inequality. 

\begin{lem}
\label{estN0genform:lem}
Suppose $0<\epsilon\le1/\Co{1}$ and $R>0$. Then
\[
\bigl[\tfrac23-\delta(\epsilon,R)\bigr]\bigl[\cfnk-\cfD[k,\alpha]{\epsilon}\bigr]
\le[\delta(\epsilon,R)+\Co{1}\epsilon]\vtsp\cfD[k,\alpha]{\epsilon}+2\norm{\oK}\vtsp\dfD[k,\alpha]{\epsilon,R}.
\]
\end{lem}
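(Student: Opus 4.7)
The plan is to interpret $N := 2\cfnk$ via Proposition \ref{evalLinfo:prop} as the number of eigenvalues of the compact (non-self-adjoint) operator $\oL$ (counted with multiplicity) whose reciprocals lie in $\gaI = [\tfrac12, \tfrac32]$; these are $N$ positive real eigenvalues $\widehat\lambda_1, \ldots, \widehat\lambda_N \in [\tfrac23, 2]$. Applying \eqref{azmLest:eq} with the lower sign, at least $\cfD[k,\alpha]{\epsilon}$ of them additionally satisfy $\widehat\lambda \ge (\gae{-})^{-1} = 2/(1 + \Co{1}\epsilon) \ge 2 - 2\Co{1}\epsilon$. Bounding these $\cfD[k,\alpha]{\epsilon}$ values below by $2 - 2\Co{1}\epsilon$ and the remaining $N - \cfD[k,\alpha]{\epsilon}$ below by $\tfrac23$ gives
\[
\sum_{i=1}^N \widehat\lambda_i \ge \tfrac23 N + \bigl(\tfrac43 - 2\Co{1}\epsilon\bigr)\cfD[k,\alpha]{\epsilon}.
\]
On the other hand, since $\oL = \oU\oK$ with $\oU$ unitary, the singular values of $\oL$ coincide with the moduli of the eigenvalues of the self-adjoint operator $\oK$; Weyl's sum inequality, combined with the observation that any $N$ eigenvalues of $\oL$ have total modulus at most that of the $N$ largest-by-modulus ones, yields the complementary bound $\sum_{i=1}^N \widehat\lambda_i \le \sum_{j=1}^N s_j(\oL)$.

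To control the right-hand side I translate Proposition \ref{minmaxappK:prop} into piecewise count estimates for $S(r) := \#\{j : s_j(\oL) > r\}$. Writing $s_j(\oL) = \abs{\lambda_n^\pm}$ in some order and summing both sign contributions, with $\kappa^+(0) = 2$ and $-\kappa^-(0) = \tfrac23$, I obtain $S(r) \le 2\cfD[k,\alpha]{\epsilon} + 4\dfD[k,\alpha]{\epsilon,R}$ for $r > \delta(\epsilon,R)$; $S(r) \le \cfD[k,\alpha]{\epsilon} + 4\dfD[k,\alpha]{\epsilon,R}$ for $r > \tfrac23 + \delta(\epsilon,R)$; and $S(r) \le 4\dfD[k,\alpha]{\epsilon,R}$ for $r > 2 + \delta(\epsilon,R)$. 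The layer-cake identity $\sum_{j=1}^N s_j(\oL) = \int_0^\infty \min(N, S(r))\,dr$, split over $[0,\delta]$, $(\delta, \tfrac23+\delta]$, $(\tfrac23+\delta, 2+\delta]$ and $(2+\delta, \norm{\oK}]$, and collecting terms, yields
\[
\sum_{j=1}^N s_j(\oL) \le N\delta(\epsilon,R) + \tfrac83\cfD[k,\alpha]{\epsilon} + 4\dfD[k,\alpha]{\epsilon,R}\norm{\oK}.
\]

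Chaining the two bounds, substituting $N = 2\cfnk$, dividing by $2$ and subtracting $(\tfrac23 - \delta(\epsilon,R))\cfD[k,\alpha]{\epsilon}$ from both sides produces the claimed inequality. The delicate point is the cancellation of the middle-range contribution: the layer $(\tfrac23 + \delta, 2 + \delta]$ contributes exactly $\tfrac43\cfD[k,\alpha]{\epsilon}$ to the upper bound on $\sum s_j(\oL)$, which is balanced by the $\tfrac43\cfD[k,\alpha]{\epsilon}$ improvement in the lower bound for $\sum \widehat\lambda_i$ coming from the approximate zero modes clustered near $\lambda = 2$; what remains is the $2\Co{1}\epsilon\cfD[k,\alpha]{\epsilon}$ residual together with the $\delta\cfD[k,\alpha]{\epsilon}$ produced by the final subtraction, yielding the desired $(\delta(\epsilon,R) + \Co{1}\epsilon)\cfD[k,\alpha]{\epsilon}$ coefficient. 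Without exploiting this clustering the coefficient of $\cfD[k,\alpha]{\epsilon}$ would be the looser $\tfrac23 + \delta(\epsilon,R)$, too large to drive the asymptotic matching needed in the sequel.
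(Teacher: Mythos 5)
Your proof is correct and follows essentially the same approach as the paper: express $2\cfnk$ via Proposition~\ref{evalLinfo:prop} as $\#\Lambda$ where $\Lambda=\{\lambda\in\spec(\oL):\lambda^{-1}\in\gaI\}$, bound $\sum_{\lambda\in\Lambda}\lambda$ from below using~\eqref{azmLest:eq} and from above by Weyl's inequality and the singular value distribution supplied by Proposition~\ref{minmaxappK:prop}, then chain the two and cancel the common $\tfrac83\cfD[k,\alpha]{\epsilon}$ contribution. The only stylistic difference is that you package the eigenvalue-count estimates of Proposition~\ref{minmaxappK:prop} through the layer-cake identity $\sum_j s_j=\int_0^\infty\min(N,S(r))\,dr$, whereas the paper performs the equivalent bucketed count directly; both lead to the same upper bound $N\delta(\epsilon,R)+\tfrac83\cfD[k,\alpha]{\epsilon}+4\norm{\oK}\dfD[k,\alpha]{\epsilon,R}$, and the rest of the algebra matches.
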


\begin{proof}
Put $N=2\cfnk$ and $M=N-2\cfD{\epsilon}$.
Let $\Lambda=\bigl\{\lambda\in\spec(\oL):\lambda^{-1}\in\gaI\bigr\}$ so $\#\Lambda=N$ by Proposition \ref{evalLinfo:prop}.
Also let $\mathrm{K}\subset\spec(\oK)$ denote the collection of the $N$ eigenvalues of $\oK$ with largest moduli. 
Since $\oL^*\oL=\oK^2$ the singular values of $\oL$ are precisely the moduli of the eigenvalues of $\oK$.
Weyl's inequality (\cite{W}) then gives
\begin{equation}
\label{WeylIneq:eq}
\sum_{\lambda\in\Lambda}\lambda\le\sum_{\lambda\in\mathrm{K}}\vtsp\abs{\lambda}.
\end{equation}

For any $\lambda\in\Lambda$ we have $\lambda^{-1}\in\gaI=[1/2,3/2]$ so $\lambda\ge2/3$. If $\lambda^{-1}\in\gaIe{-}$ then
\[
\lambda\ge\frac1{\gae{-}}=\frac2{1+\Co{1}\epsilon}\ge2\vtsp(1-\Co{1}\epsilon).
\]
From Proposition \ref{evalLinfo:prop} it follows that
\begin{equation}
\label{estSLeval:eq}
\sum_{\lambda\in\Lambda}\lambda\ge2\vtsp(1-\Co{1}\epsilon)\vtsp\cfD{\epsilon}+\tfrac23[N-\cfD{\epsilon}]
=2\bigl(\vtsp\tfrac43-\Co{1}\epsilon\bigr)\cfD{\epsilon}+\tfrac23 M.
\end{equation}

Write $\delta=\delta(\epsilon,R)$. 
Proposition \ref{minmaxappK:prop} shows that $\oK$ has at most $2\dfD{\epsilon,R}$ eigenvalues in 
each of the intervals $(-\infty,-2/3-\delta)$ and $(2+\delta,\infty)$, 
and at most $\cfD{\epsilon}+2\dfD{\epsilon,R}$ eigenvalues in each of the intervals $(-\infty,-\delta)$ and $(\delta,\infty)$.
Furthermore, the spectral radius of $\oK$ is $\norm{\oK}$ while
$\#\mathrm{K}-(2\cfD{\epsilon}+4\dfD{\epsilon,R})=M-4\dfD{\epsilon,R}\le M$.
Therefore
%Since the spectral radius of $\oK$ is $\norm{\oK}$ it follows that
\begin{equation}
\label{estSKeval:eq}
\sum_{\lambda\in\mathrm{K}}\vtsp\abs{\lambda}
\le4\norm{\oK}\vtsp\dfD{\epsilon,R}+\bigl(\tfrac23+\delta\bigr)\cfD{\epsilon}+(2+\delta)\vtsp\cfD{\epsilon}+\delta M.
\end{equation}
The result now follows when we combine \eqref{WeylIneq:eq}, \eqref{estSLeval:eq} and \eqref{estSKeval:eq}.
\end{proof}

\begin{rems}
Key to our argument is the identification of those eigenvalues and singular values of $\oL$ which arise from the Aharonov-Casher and approximate zero modes. These contribute $\tfrac83\cfD{\epsilon}$ to each side of \eqref{WeylIneq:eq}, the cancellation of which
allows the quantity $\cfnk-\cfD[k,\alpha]{\epsilon}$ to be estimated with sufficient precision.
\end{rems}

Since $\norm{\abs{\oP}^{-1/2}}\le\sqrt2$ straightforward bounds on $\oQ$ give
\begin{equation}
\label{normbndK:eq}
\norm{\oK}\le2\norm{\oQ_0+\oQ_1}=2(1+4\norm{\alpha}_{L^\infty}^2)^{1/2}.
\end{equation}

\begin{proof}[Proof of upper bound in Theorem \ref{typeIIest1:thm}]
Set $\epsilon_k=e^{-\abs{k}^{1/2}}$ and $R_k=\abs{k}^{1/4}$ for all $k\in\Z$.
As $\abs{k}\to\infty$ we clearly have $\epsilon_k=o(\abs{k}^{-1})$ and $R_k\to\infty$, while 
Theorem \ref{basicappzmS2:thm} gives $\cfD{\epsilon_k}=\flux{\abs{\beta}}\vtsp\abs{k}+o(\abs{k})$ and $\dfD{\epsilon_k,R_k}=o(\abs{k})$.
It follows that $\delta(\epsilon_k,R_k)=o(1)$ (recall \eqref{defdeltaeR:eq}) and so $\cfnk-\cfD[k,\alpha]{\epsilon_k}=o(\abs{k})$ by Lemma \ref{estN0genform:lem} and \eqref{normbndK:eq}. 
\end{proof}

\section{Approximate zero modes on $\sphere$}

\label{S2ctfn:sec}

Let $\sphere[+]$ (respectively $\sphere[-]$) denote the sphere with the south (respectively north) pole removed; if we view $\sphere$ as the unit sphere in $\R^3$ then $\sphere[\pm]=\sphere\setminus\{(0,0,\mp1)\}$. Let $\stproj{\pm}:\sphere[\pm]\to\R^2$ denote stereographic projection, given by
\[
\stproj{\pm}(x)=\frac1{1\pm x_3}(x_1,x_2),
\quad x=(x_1,x_2,x_3)\in\sphere[\pm].
\]
Set $\confwR(x)=2(1+\abs{x}^2)^{-1}$ for $x\in\R^2$, and $\confwS{\pm}=\confwR\circ\stproj{\pm}$.
It is straightforward to check that the map $\stproj{\pm}$ is an isometry if $\R^2$ is given the conformal metric $\confwR\vtsp\fipd{\R^2}{\cdot}{\cdot}$ (where $\fipd{\R^2}{\cdot}{\cdot}$ is the usual Euclidean metric on $\R^2$). 
Hence $\stproj{\pm}^*(\confwR^2\vtsp\volR)=\volS$ (where $\volR=dx_1\wedge dx_2$ is the usual volume form on $\R^2$).

For any $\delta\in[0,1]$ set $\sphere[\delta,\pm]=\sphere\cap\{\pm x_3<\delta\}$; in particular $\sphere[1,\pm]=\sphere[\pm]$ while $\sphere[0,+]$ and $\sphere[0,-]$ are the north and south hemispheres. It is easy to check that $\stproj{\pm}(\sphere[\delta,\pm])=\disc[r_\delta]$ where $r_\delta^2=(1+\delta)/(1-\delta)$,
while we have the bounds 
\begin{equation}
\label{confwRbnd:eq}
1-\delta<\confwR(x)\le 2,
\quad x\in\disc[r_\delta].
\end{equation}

\medskip

Using the isometry $\stproj{\pm}^{-1}$ we can pull-back the (restricted) spin$^c$ bundle $\spbun{k}$ from $\sphere[\pm]$ to get a spin$^c$ bundle on $\R^2$. 
Since $\R^2$ is contractible the latter is isomorphic to the trivial bundle $\R^2\times\C^2$, so sections of this bundle (spinors) can be identified with maps $\R^2\to\C^2$. 
For $\xi\in\Gamma(\spbun{k})$ with $\supp(\xi)\subset\sphere[\pm]$ let $\eta=\xi\circ\stproj{\pm}^{-1}$ denote the corresponding map in $\cont[0]{\infty}(\R^2,\C^2)$. Then
\begin{equation}
\label{normrelSR:eq}
\norm{\xi}_{L^2(\sphere)}^2
=\int_{\sphere[\pm]}\abs{\xi}_{\spbun{k}}^2\volS
=\int_{\R^2}\abs{\xi\circ\stproj{\pm}}^2\,\confwR^2\vtsp\volR
=\norm{\confwR\eta}_{L^2(\R^2)}^2.
\end{equation}
Using the isometry $\stproj{\pm}$ and the above identification of spin$^{c}$ bundles any Dirac operator on $\sphere$ can be restricted to $\sphere[\pm]$ and then considered as a Dirac operator on $\R^2$ with the conformal metric $\confwR\vtsp\fipd{\R^2}{\cdot}{\cdot}$. Conformal mapping properties of Dirac operators (see \cite[Section 1.4]{H} or \cite[Theorem 4.3]{ES}) mean the latter is simply related to a Dirac operator on $\R^2$ with the usual metric. 
Under the above identification of spin$^{c}$ bundles a Dirac operator on $\R^2$ becomes a Weyl-Dirac operator corresponding to a potential $A'=A'_1\d x_1+A'_2\d x_2$ on $\R^2$; that is, an operator given by the $2$-dimensional version of \eqref{WDR3:eq}. 
More precisely let $\alpha\in\formS{1}$ and consider the Dirac operator $\sDirac{k}{\alpha}$ on $\spbun{k}$. Then we can find $A_{\pm}\in\Omega^1(\R^2)$ so that
\begin{equation}
\label{DconfSR:eq}
\bigl(\confwS{\pm}^{3/2}\sDirac{k}{\alpha}\confwS{\pm}^{-1/2}\bigr)(\eta\circ\stproj{\pm})
=(\Dirac{\R^2,A_\pm}\eta)\circ\stproj{\pm}
\end{equation}
for all $\eta:\R^2\to\C^2$ (note that $\eta\circ\stproj{\pm}\in\Gamma(\spbun{k}_{\pm})$, where $\spbun{k}_{\pm}$ is the restriction of $\spbun{k}$ to $\sphere[\pm]$). 
Furthermore the magnetic field corresponding to $\Dirac{\R^2,A_\pm}$ is simply the pull-back of that corresponding to $\sDirac{k}{\alpha}$ under the map $\stproj{\pm}^{-1}$; 
if the latter is $\beta=f\volS$ then the former will be given by $\beta_\pm=dA_\pm=(f\circ\stproj{\pm}^{-1})\,\confwR^2\vtsp\volR$.
In particular for any open subset $U\subseteq\R^2$ we have
\begin{equation}
\label{BSRrel1:eq}
\int_{U}\beta_\pm=\int_{\stproj{\pm}^{-1}(U)}\beta.
\end{equation}

\smallskip

For $A'\in\Omega^1(\R^2)$ and $r>0$ let $\Pauli_{\disc[r],A'}$ denote the 
Pauli operator on $\disc[r]$ with magnetic potential $A'$ and Dirichlet boundary conditions;
this can be defined as the non-negative self-adjoint operator associated to the closure of the quadratic form given by $\eta\mapsto\norm{\Dirac{\R^2,A'}\eta}_{L^2(\R^2)}^2$ for $\eta\in\cont[0]{\infty}(\disc[r],\C^2)$.

\smallskip

For the next result let $\sDirac{k}{\alpha}$ denote a Dirac operator on $\spbun{k}$ and let $A_{\pm}$ denote the corresponding $1$-forms on $\R^2$ as discussed above.

\begin{prop}
\label{cntfncorrs:prop}
%For some constant $\Co{4}$ and any $\mu>0$ and $\delta\in(0,1]$ we have
There exists $\Co{4}>0$ so that
for any $\mu>0$ and $\delta\in(0,1]$ we have 
\begin{align}
&\#\{\lambda\in\spec(\sDirac{k}{\alpha}):\abs{\lambda}\le\mu\}\nonumber\\
\label{cntfncorrslow:eq}
&\qquad{}\ge\#\bigl\{\lambda\in\spec(\Pauli_{\disc,A_+}):\lambda\le\mu^2\bigr\}
+\#\bigl\{\lambda\in\spec(\Pauli_{\disc,A_-}):\lambda\le\mu^2\bigr\}
\end{align}
and
\begin{align}
&\#\bigl\{\lambda\in\spec(\sDirac{k}{\alpha}):\lambda^2\le\mu^2-\Co{4}\delta^{-2}\bigr\}\nonumber\\
\label{cntfncorrsup:eq}
&\ \ {}\le\#\bigl\{\lambda\in\spec(\Pauli_{\disc[r_\delta],A_+}):\lambda\le(4\mu)^2\bigr\}
+\#\bigl\{\lambda\in\spec(\Pauli_{\disc[r_\delta],A_-}):\lambda\le(4\mu)^2\bigr\}.
\end{align}
\end{prop}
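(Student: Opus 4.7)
The plan is to prove both inequalities by min-max arguments built around the conformal identification \eqref{DconfSR:eq}. The basic computation I need is that if $\eta \in \cont[0]{\infty}(\R^2,\C^2)$ and $\xi = \confwS{\pm}^{-1/2}(\eta\circ\stproj{\pm})$ is its conformal lift (extended by zero off $\sphere[\pm]$), then combining \eqref{DconfSR:eq} with $\stproj{\pm}^*(\confwR^2\volR) = \volS$ gives
\[
\bignorm{\sDirac{k}{\alpha}\xi}_{L^2(\sphere)}^2 = \int\confwR^{-1}\bigabs{\Dirac{\R^2,A_\pm}\eta}^2\volR,\quad
\bignorm{\xi}_{L^2(\sphere)}^2 = \int\confwR\bigabs{\eta}^2\volR.
\]
Since $\confwR\in(1-\delta, 2]$ on $\disc[r_\delta]$ by \eqref{confwRbnd:eq}, this controls the Rayleigh quotients of $\sDirac{k}{\alpha}^2$ and $\Pauli_{\disc[r_\delta],A_\pm}$ against each other by explicit constants.

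For the lower bound \eqref{cntfncorrslow:eq} I would take $V_\pm$ to be the span of eigenfunctions of $\Pauli_{\disc,A_\pm}$ with eigenvalues at most $\mu^2$ and lift each $\eta\in V_\pm$ as above with $\delta=0$. Since $\confwR\ge 1$ on $\disc$, the identities above give $\bignorm{\sDirac{k}{\alpha}\xi}^2 \le \bignorm{\Dirac{\R^2,A_\pm}\eta}^2 \le \mu^2\bignorm{\xi}^2$. Lifts from opposite hemispheres have disjoint supports, hence are mutually orthogonal, so the combined image is a subspace of $\Dom(\sDirac{k}{\alpha})$ of dimension $\dim V_+ + \dim V_-$ on which this quadratic form bound holds, and min-max for $\sDirac{k}{\alpha}^2$ (whose eigenvalues, counted with multiplicity, are the squares of those of $\sDirac{k}{\alpha}$) yields \eqref{cntfncorrslow:eq}.

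For the upper bound \eqref{cntfncorrsup:eq} I would localise via an IMS-type partition of unity. Choose real smooth cutoffs $\puS{+}, \puS{-}$ depending only on $x_3$ with $\puS{+}^2+\puS{-}^2=1$, $\supp\puS{\pm}$ compactly contained in $\sphere[\delta,\pm]$, and $\fnorm{\sphere}{d\puS{\pm}} \le C/\delta$. The product rule $\sDirac{k}{\alpha}(f\xi)=f\sDirac{k}{\alpha}\xi-i\sigma(df)\xi$ combined with the cancellation $\puS{+}\sigma(d\puS{+})+\puS{-}\sigma(d\puS{-}) = \tfrac12\sigma(d(\puS{+}^2+\puS{-}^2))=0$ kills the cross-term and yields the IMS identity
\[
\sum_\pm\bignorm{\sDirac{k}{\alpha}(\puS{\pm}\xi)}^2 = \bignorm{\sDirac{k}{\alpha}\xi}^2 + \int_\sphere\bigl(\fnorm{\sphere}{d\puS{+}}^2+\fnorm{\sphere}{d\puS{-}}^2\bigr)\abs{\xi}^2\volS \le \bignorm{\sDirac{k}{\alpha}\xi}^2 + \Co{4}\delta^{-2}\bignorm{\xi}^2
\]
for a suitable $\Co{4}$. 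Taking $W$ to be the spectral subspace of $\sDirac{k}{\alpha}^2$ for $[0,\mu^2-\Co{4}\delta^{-2}]$, the map $\xi\mapsto(\eta_+,\eta_-)$ with $\eta_\pm = \confwR^{1/2}((\puS{\pm}\xi)\circ\stproj{\pm}^{-1})$ sends $W$ injectively into the form domain of $\Pauli_{\disc[r_\delta],A_+}\oplus\Pauli_{\disc[r_\delta],A_-}$ (injectivity from $\puS{+}^2+\puS{-}^2=1$; membership in the form domain because $\puS{\pm}\xi$ has compact support inside $\sphere[\delta,\pm]$, whose $\stproj{\pm}$-image is $\disc[r_\delta]$). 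The identities from the first paragraph combined with the bounds on $\confwR$ then convert the IMS inequality into $\sum_\pm\bignorm{\Dirac{\R^2,A_\pm}\eta_\pm}^2 \le (4\mu)^2\sum_\pm\bignorm{\eta_\pm}^2$, and min-max produces \eqref{cntfncorrsup:eq}.

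The main technical point is the vanishing of the IMS cross-term: it is essential that $\puS{+}\sigma(d\puS{+})+\puS{-}\sigma(d\puS{-})=0$ as an operator on $\Gamma(\spbun{k})$, so that the localisation error is a genuine potential of order $\delta^{-2}$ rather than a first-order perturbation of $\sDirac{k}{\alpha}$. This rests on $\puS{\pm}$ being scalar (so commuting with Clifford multiplication) and on the squares summing to $1$. The remainder is careful bookkeeping to verify that the conformal weights produce the asserted constant $(4\mu)^2$ after absorbing the losses $\confwR^{-1}\le(1-\delta)^{-1}$ and $\confwR\le 2$.
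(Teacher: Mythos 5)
Your argument is correct and follows the paper's structure: conformal identification of $\sDirac{k}{\alpha}$ restricted to $\sphere[\pm]$ with $\Dirac{\R^2,A_\pm}$ via \eqref{DconfSR:eq}, lifting eigenfunctions of $\Pauli_{\disc,A_\pm}$ from the two hemispheres with disjoint supports for the lower bound, and localising with a quadratic partition $\puS{+}^2+\puS{-}^2=1$ for the upper bound; $\Gamma(\spbun{k})$ being a core for $\sDirac{k}{\alpha}$ handles the domain issues, exactly as in the paper. The one genuine deviation is in the treatment of the commutator term $-i\sigma(d\puS{\pm})\xi$. You invoke the exact IMS identity, whose cross-term vanishes because $\puS{+}\,d\puS{+}+\puS{-}\,d\puS{-}=\tfrac12\vtsp d(\puS{+}^2+\puS{-}^2)=0$, giving
\[
\sum_\pm\bignorm{\sDirac{k}{\alpha}(\puS{\pm}\xi)}^2
=\bignorm{\sDirac{k}{\alpha}\xi}^2+\sum_\pm\bignorm{\sigma(d\puS{\pm})\xi}^2
\le\bignorm{\sDirac{k}{\alpha}\xi}^2+\Co{4}\delta^{-2}\norm{\xi}^2.
\]
The paper instead applies the crude bound $\norm{a+b}^2\ge\tfrac12\norm{a}^2-\norm{b}^2$ separately to each hemisphere, starting from $\puS{\pm}\sDirac{k}{\alpha}\xi=\sDirac{k}{\alpha}(\puS{\pm}\xi)+i\sigma(d\puS{\pm})\xi$, and never exploits the cross-term cancellation at all; this costs an additional factor of $\tfrac12$. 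Your version is slightly sharper, but after absorbing the same conformal-weight losses $\confwR\in(1-\delta,2]$ both routes land comfortably inside the generously stated $(4\mu)^2$, so the distinction is cosmetic for the purposes of this proposition.
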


\begin{proof}
Let $\eta_\pm\in\cont[0]{\infty}(\disc,\C^2)$. Set $\xi_{\pm}=(\confwR^{-1/2}\eta_{\pm})\circ\stproj{\pm}^{-1}$ giving $\xi_\pm\in\Gamma(\spbun{k}_\pm)$ with $\supp(\xi_\pm)\subseteq\sphere[0,\pm]$. 
Extend $\xi_\pm$ by $0$ and set $\xi=\xi_++\xi_-\in\Gamma(\spbun{k})$. 
From \eqref{confwRbnd:eq} we have $\confwR\ge1$ on $\disc$. Together with \eqref{normrelSR:eq} and \eqref{DconfSR:eq} we then get
\[
\norm{\xi_{\pm}}_{L^2(\sphere[\pm])}^2=\bignorm{\confwR^{1/2}\eta_{\pm}}_{L^2(\disc)}^2
\ge\norm{\eta_{\pm}}_{L^2(\disc)}^2
\]
and
\[
\norm{\sDirac{k}{\alpha}\xi_{\pm}}_{L^2(\sphere[\pm])}^2=\bignorm{\confwR^{-1/2}\Dirac{\R^2,A_{\pm}}\eta_{\pm}}_{L^2(\disc)}^2
\le\norm{\Dirac{\disc,A_{\pm}}\eta_{\pm}}_{L^2(\disc)}^2.
\]
Since $\xi_+$ and $\xi_-$ have disjoint support it follows that
\[
\norm{\xi}_{L^2(\sphere)}^2
=\norm{\xi_{+}}_{L^2(\sphere[+])}^2+\norm{\xi_{-}}_{L^2(\sphere[-])}^2
\ge\norm{\eta_{+}}_{L^2(\disc)}^2+\norm{\eta_{-}}_{L^2(\disc)}^2
\]
and
\[
\norm{\sDirac{k}{\alpha}\xi}_{L^2(\sphere)}^2
=\norm{\sDirac{k}{\alpha}\xi_+}_{L^2(\sphere[+])}^2+\norm{\sDirac{k}{\alpha}\xi_-}_{L^2(\sphere[-])}^2
\le\norm{\Dirac{\disc,A_{+}}\eta_{+}}_{L^2(\disc)}^2+\norm{\Dirac{\disc,A_{-}}\eta_{-}}_{L^2(\disc)}^2.
\]
A standard variational argument then leads to \eqref{cntfncorrslow:eq}.

\medskip

Now choose non-negative functions $\puS{\pm}\in\cont[0]{\infty}(\sphere[\delta,\pm])$ so that $\puS{+}^2+\puS{-}^2=1$ and $\abs{d\puS{\pm}}\le\Co{4,0}\delta^{-1}$ on $\sphere$, where $\Co{4,0}$ is independent of $\delta$.
Let $\xi\in\Gamma(\spbun{k})$ and define compactly supported sections of $\spbun{k}_\pm$ by setting $\pxiS{\pm}=\puS{\pm}\xi$.
Also set $\petaR{\pm}=\confwR^{1/2}\pxiS{\pm}\circ\stproj{\pm}$ giving $\petaR{\pm}\in\cont[0]{\infty}(\disc[r_\delta],\C^2)$.
Then \eqref{normrelSR:eq}, (the upper bound in) \eqref{confwRbnd:eq} and \eqref{DconfSR:eq} give
\[
\norm{\pxiS{\pm}}_{L^2(\sphere[\pm])}^2
=\bignorm{\confwR^{1/2}\petaR{\pm}}_{L^2(\R^2)}^2
\le2\norm{\petaR{\pm}}_{L^2(\disc[r_\delta])}^2
\]
so
\[
\norm{\xi}_{L^2(\sphere)}^2
=\norm{\pxiS{+}}_{L^2(\sphere[+])}^2+\norm{\pxiS{-}}_{L^2(\sphere[-])}^2
\le2\bigl[\norm{\petaR{+}}_{L^2(\disc[r_\delta])}^2+\norm{\petaR{-}}_{L^2(\disc[r_\delta])}^2\bigr].
\]
Similarly
\[
\norm{\sDirac{k}{\alpha}\pxiS{\pm}}_{L^2(\sphere[\pm])}^2=\bignorm{\confwR^{-1/2}\Dirac{\R^2,A_{\pm}}\petaR{\pm}}_{L^2(\R^2)}^2
\ge\tfrac12\vtsp\norm{\Dirac{\disc[r_\delta],A_{\pm}}\petaR{\pm}}_{L^2(\disc[r_\delta])}^2
\]
while
\begin{align*}
&\norm{\sDirac{k}{\alpha}\xi}_{L^2(\sphere)}^2
=\norm{\puS{+}\sDirac{k}{\alpha}\xi}_{L^2(\sphere)}^2+\norm{\puS{-}\sDirac{k}{\alpha}\xi}_{L^2(\sphere)}^2\\
&\qquad{}=\bignorm{\sDirac{k}{\alpha}\pxiS{+}-i\sigma(d\puS{+})\xi}_{L^2(\sphere)}^2+\bignorm{\sDirac{k}{\alpha}\pxiS{-}-i\sigma(d\puS{-})\xi}_{L^2(\sphere)}^2\\
&\qquad{}\ge\tfrac12\bigl[\norm{\sDirac{k}{\alpha}\pxiS{+}}_{L^2(\sphere[+])}^2+\norm{\sDirac{k}{\alpha}\pxiS{-}}_{L^2(\sphere[-])}^2\bigr]
-2\Co{4,0}^2\delta^{-2}\norm{\xi}_{L^2(\sphere)}^2.
\end{align*}
Therefore
\[
\norm{\sDirac{k}{\alpha}\xi}_{L^2(\sphere)}^2+2\Co{4,0}^2\delta^{-2}\norm{\xi}_{L^2(\sphere)}^2
\ge\tfrac12\bigl[\norm{\Dirac{\disc[r_\delta],A_{+}}\petaR{+}}_{L^2(\disc[r_\delta])}^2+
\norm{\Dirac{\disc[r_\delta],A_{-}}\petaR{-}}_{L^2(\disc[r_\delta])}^2\bigr].
\]
A standard variational argument now gives \eqref{cntfncorrsup:eq} (with $\Co{4}=2\Co{4,0}^2$; note that $\Gamma(\spbun{k})$ is a core for $\sDirac{k}{\alpha}$).
\end{proof}

We can use \eqref{DconfSR:eq} to transfer results about approximate zero modes on $\R^2$ to $\sphere$; information about the former was obtained in \cite{E3}.

\begin{proof}[Proof of Theorem \ref{basicappzmS2:thm}]
For each $k\in\Z$ we have a Dirac operator $\sDirac{k}{k\alpha}$ on $\spbun{k}$ with magnetic $2$-form $k\bigl(\frac12\volS+d\alpha\bigr)$. Pulling this back to $\R^2$ using $\stproj{\pm}$ as discussed above, 
we can arrange so that the corresponding $1$-forms on $\R^2$ are simply $kA_{\pm}$ for fixed ($k$ independent) $1$-forms $A_{\pm}$. 
The corresponding field is $k\beta_{\pm}$ where $\beta_\pm=d A_{\pm}$. By \eqref{BSRrel1:eq} we have
\begin{equation}
\label{BSRrel2:eq}
\int_{\disc[r_\delta]}\abs{\beta_{\pm}}=\int_{\sphere[\delta,\pm]}\abs{\beta},
\qquad\delta\in[0,1].
\end{equation}

\medskip

From \cite[Theorem 1.2]{E3} and \eqref{BSRrel2:eq} we get
\[
\liminf_{\abs{k}\to\infty}\frac{1}{\abs{k}}\,\#\bigl\{\lambda\in\spec(\Pauli_{\disc,kA_{\pm}}):\lambda\le\epsilon_k^2\bigr\}
\ge\frac1{2\pi}\int_{\disc}\abs{\beta_{\pm}}
=\frac1{2\pi}\int_{\sphere[0,\pm]}\abs{\beta}.
\]
Combined with Proposition \ref{cntfncorrs:prop} we then have
\[
\liminf_{\abs{k}\to\infty}\frac{1}{\abs{k}}\,\cfD[k,\alpha]{\epsilon_k}
\ge\frac1{2\pi}\int_{\sphere[0,+]}\abs{\beta}\,+\,\frac1{2\pi}\int_{\sphere[0,-]}\abs{\beta}
=\flux{\abs{\beta}}.
\]

\medskip

Now let $\delta>0$ and set $\widetilde{R}_k=16(R_k^2+\Co{4}\delta^{-2})$ for $k\in\Z$. 
Then $\widetilde{R}_k=o(\abs{k})$ as $\abs{k}\to\infty$, so \cite[Theorem 1.1]{E3} and \eqref{BSRrel2:eq} give
\[
\limsup_{\abs{k}\to\infty}\frac{1}{\abs{k}}\,\#\bigl\{\lambda\in\spec(\Pauli_{\disc[r_\delta],kA_{\pm}}):\lambda\le\widetilde{R}_k\bigr\}
\le\frac1{2\pi}\int_{\disc[r_\delta]}\abs{\beta_{\pm}}
=\frac1{2\pi}\int_{\sphere[\delta,\pm]}\abs{\beta}.
\]
Combined with Proposition \ref{cntfncorrs:prop} we then have 
\[
\limsup_{\abs{k}\to\infty}\frac{1}{\abs{k}}\,\cfD[k,\alpha]{R_k}
\le\frac1{2\pi}\int_{\sphere[\delta,+]}\abs{\beta}\,+\,\frac1{2\pi}\int_{\sphere[\delta,-]}\abs{\beta}
=\flux{\abs{\beta}}+O(\delta)
\]
as $\delta\to0^+$ (note that $\beta$ is bounded while $\abs{\sphere[\delta,+]\cap\sphere[\delta,-]}=O(\delta)$). 
Taking $\delta\to0^+$ leads to the stated upper bound for $\cfD[k,\alpha]{R_k}$.
\end{proof}

\section{Spin-field estimates on $\sphere$}

\label{S2spfldest:sec}

For any $n$ let $d:\formS{n}\to\formS{n-1}$ and $\delta:\formS{n}\to\formS{n-1}$ denote the exterior derivative and its adjoint with respect to the Hodge $*$ operator. We have $*:\formS{n}\to\formS{2-n}$ with $**=(-1)^n$ and $\delta=-*d\,*$. Also $*\volS=1$.

The expression $d\delta+\delta d$ defines the Laplace-de Rham operator on $n$-forms.
For $n=0$ this reduces to $\delta d=-\Delta$, the negative of the Laplace-Beltrami operator on (scalar) functions.
The Green's function for the latter is given in terms of $\log(1-x.y)$ (where the dot product is defined by viewing $\sphere$ as the unit sphere in $\R^3$);
more precisely for any $f\in\cont{\infty}(\sphere)$ with $\int_{\sphere} f\volS=0$ we have
\begin{equation}
\label{ssintrep:eq}
f(x)=\frac1{4\pi}\int_{\sphere}\log(1-x.y)\,\Delta f(y)\,\volS(y)
\end{equation}
for all $x\in\sphere$ (see \cite[Theorem 4.15]{FS}). From this we can obtain a related integral representation for $1$-forms. Firstly for any $y\in\R^3$ let $\iof{y}\in\formS{1}$ denote the exterior derivative of $x\mapsto x.y$. 

\begin{prop}
\label{s1fintrep:prop}
For any $\omega\in\formS{1}$ and $x\in\sphere$ we have
\[
\omega(x)
=\frac1{4\pi}\int_{\sphere}\frac{\iof{y}(x)}{1-x.y}\,\delta\omega(y)\,\volS(y)
-\frac1{4\pi}\int_{\sphere}\frac{(*\iof{y})(x)}{1-x.y}\,d\omega(y).
\]
\end{prop}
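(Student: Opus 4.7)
The plan is to prove the representation by combining the Hodge decomposition on $\sphere$ with the scalar integral formula \eqref{ssintrep:eq}. Since $H^1(\sphere)=0$, I decompose $\omega=df+\delta\psi$ for some $f\in C^\infty(\sphere)$ and $\psi\in\formS{2}$, writing $\psi=g\volS$ with $g\in C^\infty(\sphere)$. Adjusting $f$ and $g$ by constants (which changes neither $df$ nor $\delta\psi$) I may assume $\int_{\sphere}f\volS=\int_{\sphere}g\volS=0$. A short calculation using $\delta=-{*d*}$ on $\sphere$ gives $\delta(g\volS)=-{*dg}$ and hence
\[
d\omega=d\delta(g\volS)=-d{*dg}=-(\Delta g)\,\volS,
\]
while $\delta\omega=\delta df=-\Delta f$. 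Thus $\Delta f=-\delta\omega$ and $\Delta g=-{*d\omega}$; the right-hand sides have zero integral over $\sphere$ (by Stokes applied to $d{*\omega}$ and $d\omega$ respectively), so \eqref{ssintrep:eq} applies to both $f$ and $g$.

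The second step is differentiation under the integral. The only $x$-dependent factor in the scalar representation is $\log(1-x.y)$, and as a $1$-form in $x$,
\[
d_x\log(1-x.y)=-\frac{d_x(x.y)}{1-x.y}=-\frac{\iof{y}(x)}{1-x.y}.
\]
Applying $d_x$ to \eqref{ssintrep:eq} for $f$ and substituting $\Delta f=-\delta\omega$ gives
\[
df(x)=\frac{1}{4\pi}\int_{\sphere}\frac{\iof{y}(x)}{1-x.y}\,\delta\omega(y)\,\volS(y),
\]
which is the first term of the claim. For the coclosed piece, I apply $d_x$ to the integral representation of $g$, then act by $-{*}$ in the $x$ variable; since $\delta\psi=-{*dg}$, and using $({*d\omega})\volS=d\omega$ to absorb the Hodge star already present on $d\omega$, this produces
\[
\delta\psi(x)=-\frac{1}{4\pi}\int_{\sphere}\frac{(*\iof{y})(x)}{1-x.y}\,d\omega(y).
\]
Summing $df(x)+\delta\psi(x)$ yields the stated identity.

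The main technical point to verify is the near-diagonal behaviour of the kernel. On $\sphere$ one has $1-x.y=\tfrac12|x-y|^2$, while $\iof{y}(x)$ is (at the point $x$) the tangential component of $y$, so $|\iof{y}(x)|=|y-(x.y)x|\sim|x-y|$ as $y\to x$. The integrands are therefore of order $|x-y|^{-1}$, which is integrable on the two-sphere and permits the differentiation under the integral sign needed above. Beyond this, the argument reduces to bookkeeping of Hodge-star signs in the identities $\delta(g\volS)=-{*dg}$ and $d\omega=-(\Delta g)\volS$, together with a single application of \eqref{ssintrep:eq}.
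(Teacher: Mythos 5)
Your proposal is correct and takes essentially the same approach as the paper's proof: Hodge decompose $\omega=df+\delta\nu$ (you write $\nu=g\volS$; the paper writes $g=*\nu$, which is equivalent), differentiate the scalar Green's representation \eqref{ssintrep:eq} under the integral sign to get the $df$ piece, and apply the Hodge star (using $\delta\nu=-*dg$ and $(*d\omega)\volS=d\omega$) to obtain the coexact piece. Your remark on the $\abs{x-y}^{-1}$ integrability of the differentiated kernel is a sensible justification for differentiating under the integral that the paper leaves implicit.
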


\begin{proof}
Suppose $f\in\cont{\infty}(\sphere)$ satisfies $\int_{\sphere}f\volS=0$. Taking the exterior derivative of \eqref{ssintrep:eq} with respect to $x$ gives
\begin{equation}
\label{sdsintrep:eq}
df(x)=\frac1{4\pi}\int_{\sphere}\frac{\iof{y}(x)}{1-x.y}\,\delta df(y)\,\volS(y).
\end{equation}
Now suppose $\nu\in\formS{2}$ with $\int_{\sphere}\nu=0$. Set $g=*\nu\in\cont{\infty}(\sphere)$ so $\delta\nu=-*dg$ and $(\delta dg)\volS=d\delta\nu$. Applying the Hodge $*$ to \eqref{sdsintrep:eq} then leads to
\begin{equation}
\label{sdeltasintrep:eq}
\delta\nu(x)=-\frac1{4\pi}\int_{\sphere}\frac{(*\iof{y})(x)}{1-x.y}\,d\delta\nu(y).
\end{equation}
Finally suppose $\omega\in\formS{1}$. Since $H^1(\sphere)=0$ the Hodge decomposition theorem gives $f\in\cont{\infty}(\sphere)$ and $\nu\in\formS{2}$ such that $\omega=df+\delta\nu$. Since $d1=0=\delta\volS$ we may assume $\int_{\sphere}f\,\volS=0=\int_{\sphere}\nu$. The result now follows from \eqref{sdsintrep:eq} and \eqref{sdeltasintrep:eq}.
\end{proof}

For any $x,y\in\sphere$ it is easy to check $\abs{\iof{y}(x)}_{\sphere}=\abs{(*\iof{y})(x)}_{\sphere}=1-(x.y)^2$. A straightforward calculation then gives
\[
\int_{\sphere}\lrabs{\frac{\iof{y}(x)}{1-x.y}}_{\sphere}\!\volS(x)=2\pi^2=\int_{\sphere}\lrabs{\frac{(*\iof{y})(x)}{1-x.y}}_{\sphere}\!\volS(x).
\]
Coupled with Proposition \ref{s1fintrep:prop} we immediately get the following estimate for $1$-forms.

\begin{cor}
\label{1formL1est:cor}
For any $\omega\in\formS{1}$ we have $\norm{\omega}_{L^1}\le\tfrac12\pi\vtsp(\norm{\delta\omega}_{L^1}+\norm{d\omega}_{L^1})$.
\end{cor}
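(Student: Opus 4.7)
The plan is to derive the $L^1$ estimate directly from the integral representation in Proposition \ref{s1fintrep:prop}, using pointwise norm bounds, Fubini, and the stated values of the kernel integrals. Since the hard analytical work (the Green's function identity \eqref{ssintrep:eq}, its differentiated forms \eqref{sdsintrep:eq} and \eqref{sdeltasintrep:eq}, and the Hodge decomposition of $\omega$) is already packaged into Proposition \ref{s1fintrep:prop}, the corollary is essentially a bookkeeping exercise.

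Concretely, I would take the pointwise norm $\fnorm{\sphere}{\omega(x)}$ of both sides of the identity in Proposition \ref{s1fintrep:prop} and apply the triangle inequality under the integral sign, obtaining
\[
\fnorm{\sphere}{\omega(x)}
\le\frac1{4\pi}\int_{\sphere}\lrfnorm{\sphere}{\frac{\iof{y}(x)}{1-x.y}}\abs{\delta\omega(y)}\volS(y)
+\frac1{4\pi}\int_{\sphere}\lrfnorm{\sphere}{\frac{(*\iof{y})(x)}{1-x.y}}\fnorm{\sphere}{d\omega(y)}\volS(y),
\]
where in the second integral I have rewritten $d\omega=(*d\omega)\volS$ as a scalar times the volume form so that the kernel can be estimated in absolute value. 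Integrating this inequality over $x\in\sphere$ and applying Fubini then yields
\[
\norm{\omega}_{L^1}
\le\frac1{4\pi}\int_{\sphere}\abs{\delta\omega(y)}\,I_1(y)\,\volS(y)
+\frac1{4\pi}\int_{\sphere}\fnorm{\sphere}{d\omega(y)}\,I_2(y)\,\volS(y),
\]
where $I_1(y)$ and $I_2(y)$ denote the $x$-integrals of the kernels $\fnorm{\sphere}{\iof{y}(x)/(1-x.y)}$ and $\fnorm{\sphere}{(*\iof{y})(x)/(1-x.y)}$ respectively.

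The final step is to invoke the kernel evaluations displayed just before the corollary, namely $I_1(y)=I_2(y)=2\pi^2$ for every $y\in\sphere$ (these are independent of $y$ by the rotational invariance of the sphere, and can be confirmed by the substitution $u=x.y$ turning each into $2\pi\int_{-1}^{1}\sqrt{(1+u)/(1-u)}\,du=2\pi^2$). Substituting these values gives
\[
\norm{\omega}_{L^1}
\le\frac{2\pi^2}{4\pi}\bigl(\norm{\delta\omega}_{L^1}+\norm{d\omega}_{L^1}\bigr)
=\tfrac12\pi\vtsp\bigl(\norm{\delta\omega}_{L^1}+\norm{d\omega}_{L^1}\bigr),
\]
which is the claimed bound. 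The only point that requires any thought is justifying Fubini, which is immediate since the kernels and $\omega$ are (piecewise) smooth and the kernels have only mild integrable singularities at $x=y$; there is no serious obstacle to the argument.
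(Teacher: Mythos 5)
Your proof is correct and is precisely the argument the paper has in mind: take pointwise norms in the integral representation of Proposition \ref{s1fintrep:prop}, integrate over $x$, apply Fubini, and substitute the two kernel integrals $\int_{\sphere}\bigl\lvert\iof{y}(x)/(1-x.y)\bigr\rvert_{\sphere}\volS(x)=\int_{\sphere}\bigl\lvert(*\iof{y})(x)/(1-x.y)\bigr\rvert_{\sphere}\volS(x)=2\pi^2$ that the paper records in the display immediately preceding the corollary. The bookkeeping, including the factor $\tfrac{2\pi^2}{4\pi}=\tfrac{\pi}{2}$ and the evaluation $2\pi\int_{-1}^1\sqrt{(1+u)/(1-u)}\,du=2\pi^2$, is all in order.
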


When needed $\{\vb{1},\vb{2}\}$ denotes an orthonormal frame (of local vector fields) while $\{\fb{1},\fb{2}\}$ denotes the corresponding orthonormal dual frame (of local $1$-forms). 
We assume $\{\vb{1},\vb{2}\}$ is positively oriented so $\volS=\fb{1}\wedge\fb{2}$. Also $*\fb{1}=\fb{2}$ and $*\fb{2}=-\fb{1}$. For any $\omega\in\formS{1}$ we have the local expression
\begin{equation}
\label{divconn:eq}
\delta\omega=-\trace\LCc\omega=-\bigl[(\LCc[\vb{1}]\omega)(\vb{1})+(\LCc[\vb{2}]\omega)(\vb{2})\bigr],
\end{equation}
where $\LCc$ denotes the Levi-Civita connection (on $1$-forms; see \cite[Lemma 4.8]{GHL}).

\medskip

For any spinors $\xi,\eta\in\Gamma(\spbun{k})$ let $\spf{\xi}{\eta}\in\formS{1}$ be the unique $1$-form satisfying
\[
\fipd{\sphere}{\spf{\xi}{\eta}}{\rho}=\fipd{\spbun{k}}{\xi}{\sigma(\rho)\eta}
\]
for all $\rho\in\formS{1}$. In terms of a local orthonormal frame we can write
\[
\spf{\xi}{\eta}=\fipd{\spbun{k}}{\xi}{\sigma(\fb{1})\eta}\fb{1}+\fipd{\spbun{k}}{\xi}{\sigma(\fb{2})\eta}\fb{2}.
\]

\begin{lem}
\label{covderivspf:lem}
Let $\spc$ be a spin$^c$ connection on $\spbun{k}$. 
If $\xi,\eta\in\Gamma(\spbun{k})$ and $X\in\Gamma(T\sphere)$ then
$\LCc[X]\spf{\xi}{\eta}=\spf{\spc[X]\xi}{\eta}+\spf{\xi}{\spc[X]\eta}$.
\end{lem}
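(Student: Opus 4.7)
The plan is to verify the identity by testing both sides against an arbitrary $1$-form $\rho \in \formS{1}$ and exploiting the compatibilities of the spin$^c$ connection $\spc$ (with the hermitian metric on $\spbun{k}$ and with Clifford multiplication) together with the fact that $\LCc$ is itself metric compatible on $T^*\sphere$. Since $\spf{\xi}{\eta}$ is characterised by the relation $\fipd{\sphere}{\spf{\xi}{\eta}}{\rho}=\fipd{\spbun{k}}{\xi}{\sigma(\rho)\eta}$, it will suffice to show that
\[
\fipd{\sphere}{\LCc[X]\spf{\xi}{\eta}}{\rho}
=\fipd{\spbun{k}}{\spc[X]\xi}{\sigma(\rho)\eta}+\fipd{\spbun{k}}{\xi}{\sigma(\rho)\spc[X]\eta}
\]
for every $\rho$.

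First I would rewrite the left hand side using metric compatibility of $\LCc$ as
\[
\fipd{\sphere}{\LCc[X]\spf{\xi}{\eta}}{\rho}
=X\fipd{\sphere}{\spf{\xi}{\eta}}{\rho}-\fipd{\sphere}{\spf{\xi}{\eta}}{\LCc[X]\rho},
\]
and then replace each inner product on $T^*\sphere$ by the corresponding expression on $\spbun{k}$ via the defining property of $\spf{\cdot}{\cdot}$; this turns the right hand side into $X\fipd{\spbun{k}}{\xi}{\sigma(\rho)\eta}-\fipd{\spbun{k}}{\xi}{\sigma(\LCc[X]\rho)\eta}$. Next I would expand the first term using the hermitian compatibility of $\spc$,
\[
X\fipd{\spbun{k}}{\xi}{\sigma(\rho)\eta}
=\fipd{\spbun{k}}{\spc[X]\xi}{\sigma(\rho)\eta}+\fipd{\spbun{k}}{\xi}{\spc[X](\sigma(\rho)\eta)},
\]
and commute $\spc[X]$ past $\sigma(\rho)$ using the Clifford-multiplication compatibility $[\spc[X],\sigma(\rho)]=\sigma(\LCc[X]\rho)$. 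The term $\fipd{\spbun{k}}{\xi}{\sigma(\LCc[X]\rho)\eta}$ then cancels exactly against the corresponding term introduced from the metric compatibility of $\LCc$, leaving precisely the desired expression.

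Finally, translating back via the definition of $\spf{\cdot}{\cdot}$, the result reads $\fipd{\sphere}{\LCc[X]\spf{\xi}{\eta}}{\rho}=\fipd{\sphere}{\spf{\spc[X]\xi}{\eta}+\spf{\xi}{\spc[X]\eta}}{\rho}$; non-degeneracy of the Riemannian metric on $T^*\sphere$ and the arbitrariness of $\rho$ give the lemma. There is no real obstacle here: the statement is the standard compatibility of the "squaring map" $\xi\otimes\eta\mapsto\spf{\xi}{\eta}$ with the two connections, and the proof is bookkeeping. The only thing to be careful with is the sign/order convention in the hermitian inner product (ensuring one does not mistakenly pick up a complex conjugate when moving $\spc[X]$ from one slot to the other), but since $\sigma(\rho)$ appears inside the inner product on the $\eta$-side only, this is not an issue.
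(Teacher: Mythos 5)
Your proof is correct and follows essentially the same route as the paper: pair $\spf{\xi}{\eta}$ against an arbitrary $1$-form $\rho$, expand $X\fipd{\sphere}{\spf{\xi}{\eta}}{\rho}$ and $X\fipd{\spbun{k}}{\xi}{\sigma(\rho)\eta}$ via metric compatibility of $\LCc$ and hermitian compatibility of $\spc$ respectively, and invoke the Clifford compatibility $[\spc[X],\sigma(\rho)]=\sigma(\LCc[X]\rho)$ to cancel the $\LCc[X]\rho$ terms. Your write-up is in fact somewhat cleaner than the paper's (which contains a couple of typographical slips in the displayed computation, e.g.\ $\fipd{\spbun{k}}{\xi}{\spc[X]\sigma(\rho)\eta}$ where $\fipd{\spbun{k}}{\spc[X]\xi}{\sigma(\rho)\eta}$ is intended).
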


\begin{proof}
We have $X\fipd{\sphere}{\spf{\xi}{\eta}}{\rho}
=\fipd{\sphere}{\LCc[X]\spf{\xi}{\eta}}{\rho}+\fipd{\sphere}{\spf{\xi}{\eta}}{\LCc[X]\rho}$ while
\begin{align*}
X\fipd{\spbun{k}}{\xi}{\sigma(\rho)\eta}
&=\fipd{\spbun{k}}{\xi}{\spc[X]\sigma(\rho)\eta}+\fipd{\spbun{k}}{\xi}{\spc[X](\sigma(\rho)\eta)}\\
&=\fipd{\spbun{k}}{\xi}{\spc[X]\sigma(\rho)\eta}+\fipd{\spbun{k}}{\xi}{\sigma(\rho)\spc[X]\sigma(\rho)\eta}
+\fipd{\spbun{k}}{\xi}{\sigma(\LCc[X]\rho)\eta}.
\end{align*}
The result now follows from the definition of $\spf{\xi}{\eta}$.
\end{proof}

Recall that Clifford multiplication extends naturally to $2$-forms; in particular $\sigma(\volS)=\sigma(\fb{1})\sigma(\fb{2})$ while for any $1$-form $\rho$ 
\begin{equation}
\label{sjsvcomm2:eq}
\sigma(\rho)\sigma(\volS)=-\sigma(*\rho).
\end{equation}

\begin{prop}
\label{spfdivcurl:prop}
Let $\Dirac{}$ be a Dirac operator on $\spbun{k}$.
If $\xi,\eta\in\Gamma(\spbun{k})$ then
\begin{equation}
\label{spfdiv:eq}
\delta\spf{\xi}{\eta}=i\fipd{\spbun{k}}{\Dirac{}\xi}{\eta}-i\fipd{\spbun{k}}{\xi}{\Dirac{}\eta}
\end{equation}
and
\begin{equation}
\label{spfcurl:eq}
d\spf{\xi}{\eta}=-i\bigl[\fipd{\spbun{k}}{\Dirac{}\xi}{\sigma(\volS)\eta}+\fipd{\spbun{k}}{\xi}{\sigma(\volS)\Dirac{}\eta}\bigr]\volS.
\end{equation}
\end{prop}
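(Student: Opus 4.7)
My plan is to prove both identities by direct local computation in an orthonormal frame, using Lemma \ref{covderivspf:lem} to transfer the Levi-Civita derivative of $\spf{\xi}{\eta}$ onto the spin$^c$ covariant derivatives of the spinors, and then recognising the resulting sums as $\Dirac{}$ applied to $\xi$ or $\eta$. Throughout I will use that $\sum_j \sigma(\fb{j})\spc[\vb{j}]=i\Dirac{}$, that $\sigma(\fb{j})$ is self-adjoint (since $\fb{j}$ is real and $\sigma$ is unitary), and that the hermitian form is antilinear in the first slot.

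For \eqref{spfdiv:eq} I would start from \eqref{divconn:eq} and expand
\[
\delta\spf{\xi}{\eta}
=-\sum_{j=1}^2 (\LCc[\vb{j}]\spf{\xi}{\eta})(\vb{j})
=-\sum_{j=1}^2\bigl[\fipd{\spbun{k}}{\spc[\vb{j}]\xi}{\sigma(\fb{j})\eta}+\fipd{\spbun{k}}{\xi}{\sigma(\fb{j})\spc[\vb{j}]\eta}\bigr].
\]
Pulling $\sigma(\fb{j})$ into the first slot of the first term (by self-adjointness) and summing, the two groups of terms collapse to $\fipd{\spbun{k}}{i\Dirac{}\xi}{\eta}$ and $\fipd{\spbun{k}}{\xi}{i\Dirac{}\eta}$; the antilinearity of $\fipd{\spbun{k}}{\cdot}{\cdot}$ in the first argument then yields exactly the right-hand side of \eqref{spfdiv:eq}.

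For \eqref{spfcurl:eq} I would invoke the standard formula on an oriented Riemannian $2$-manifold with torsion-free connection,
\[
d\omega=\bigl[(\LCc[\vb{1}]\omega)(\vb{2})-(\LCc[\vb{2}]\omega)(\vb{1})\bigr]\volS,
\]
apply it to $\omega=\spf{\xi}{\eta}$ using Lemma \ref{covderivspf:lem}, and obtain four inner-product terms. The key algebraic manoeuvre is to use \eqref{sjsvcomm2:eq} to rewrite $\sigma(\fb{2})=-\sigma(\fb{1})\sigma(\volS)$ and $\sigma(\fb{1})=\sigma(\fb{2})\sigma(\volS)$, after which \eqref{spcsvcomm:eq} lets me commute $\sigma(\volS)$ past each $\spc[\vb{j}]$. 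The two terms containing $\spc[\vb{j}]\eta$ then re-assemble into $\pm i\Dirac{}\sigma(\volS)\eta$, and the two terms containing $\spc[\vb{j}]\xi$ (after moving $\sigma(\fb{j})$ into the first slot) re-assemble into $\pm i\Dirac{}\sigma(\volS)\xi$. Finally, \eqref{Diracsvolacomm:eq} converts $\Dirac{}\sigma(\volS)$ into $-\sigma(\volS)\Dirac{}$, and transferring $\sigma(\volS)$ between the slots of the hermitian form puts the result into the form given in \eqref{spfcurl:eq}.

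The main obstacle is purely bookkeeping: one must simultaneously track the factor of $i$ in the definition of $\Dirac{}$, the minus signs coming from \eqref{sjsvcomm2:eq} and \eqref{Diracsvolacomm:eq}, the antilinearity of the hermitian inner product, and the asymmetry between the $d$ and $\delta$ local expressions. No deeper geometric input is required beyond Lemma \ref{covderivspf:lem} and the compatibility properties \eqref{sjsvcomm1:eq}--\eqref{sjsvcomm2:eq} and \eqref{spcsvcomm:eq}--\eqref{Diracsvolacomm:eq} already established.
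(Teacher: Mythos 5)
Your argument for \eqref{spfdiv:eq} is the same as the paper's: expand $\delta\spf{\xi}{\eta}$ via \eqref{divconn:eq}, apply Lemma \ref{covderivspf:lem}, move $\sigma(\fb{j})$ across the Hermitian form by self-adjointness, and recognise $\sum_j\sigma(\fb{j})\spc[\vb{j}]=i\Dirac{}$.

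For \eqref{spfcurl:eq}, however, your route is genuinely different from the paper's. You propose a second direct computation, starting from the torsion-free identity $d\omega=\bigl[(\LCc[\vb{1}]\omega)(\vb{2})-(\LCc[\vb{2}]\omega)(\vb{1})\bigr]\volS$, then applying Lemma \ref{covderivspf:lem} and the algebraic relations \eqref{sjsvcomm2:eq}, \eqref{spcsvcomm:eq}, and \eqref{Diracsvolacomm:eq} to re-assemble the four resulting terms into the stated form. The paper instead avoids a second covariant-derivative expansion entirely: it observes from \eqref{sjsvcomm2:eq} that $*\spf{\xi}{\eta}=\spf{\xi}{\sigma(\volS)\eta}$, and then simply substitutes $\sigma(\volS)\eta$ for $\eta$ in the already-proved \eqref{spfdiv:eq}, using \eqref{Diracsvolacomm:eq} and the Hodge relation between $d$ and $\delta$ to convert. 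The paper's approach buys a purely algebraic reduction of the second identity to the first and so involves less connection bookkeeping; your approach is more uniform (both identities come from the same template: expand $\delta$ or $d$ of a $1$-form in a frame, apply Lemma \ref{covderivspf:lem}, reorganise) but, as you note, requires careful tracking of signs as $\sigma(\volS)$ is moved around. Both are sound; one piece of insurance worth adding to your write-up is that $\sigma(\volS)$ is self-adjoint (it is unitary with $\sigma(\volS)^2=I$), since you transfer it between slots of the Hermitian form when finishing.
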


\begin{proof}
Let $\spc$ denote the spin$^c$ connection defining $\Dirac{}$. 
By \eqref{divconn:eq} and Lemma \ref{covderivspf:lem} 
\begin{align*}
\delta\spf{\xi}{\eta}
&=-\LCc[\vb{1}]\spf{\xi}{\eta}(\vb{1})-\LCc[\vb{2}]\spf{\xi}{\eta}(\vb{2})\\
&=-\spf{\spc[\vb{1}]\xi}{\eta}(\vb{1})-\spf{\spc[\vb{2}]\xi}{\eta}(\vb{2})
-\spf{\xi}{\spc[\vb{1}]\eta}(\vb{1})-\spf{\xi}{\spc[\vb{2}]\eta}(\vb{2})\\
% &=-\ipd{\spc[\vb{1}]\xi}{\sigma(\fb{1})\eta}_{\spbun{k}}-\ipd{\spc[\vb{2}]\xi}{\sigma(\fb{2})\eta}_{\spbun{k}}
% -\ipd{\xi}{\sigma(\fb{1})\spc[\vb{1}]\eta}_{\spbun{k}}-\ipd{\xi}{\sigma(\fb{2})\spc[\vb{2}]\eta}_{\spbun{k}}\\
&=-\bigipd{\bigl[\sigma(\fb{1})\spc[\vb{1}]+\sigma(\fb{2})\spc[\vb{2}]\bigr]\xi}{\eta}_{\spbun{k}}
-\bigipd{\xi}{\bigl[\sigma(\fb{1})\spc[\vb{1}]+\sigma(\fb{2})\spc[\vb{2}]\bigr]\eta}_{\spbun{k}}\\
&=-\ipd{i\Dirac{}\xi}{\eta}_{\spbun{k}}-\ipd{\xi}{i\Dirac{}\eta}_{\spbun{k}}.
\end{align*}
On the other hand working in a local orthonormal frame and applying \eqref{sjsvcomm2:eq} gives
\begin{align*}
*\spf{\xi}{\eta}
&=\ipd{\xi}{\sigma(-{*\fb{2}})\eta}_{\spbun{k}}\,{*\fb{1}}+\ipd{\xi}{\sigma(*\fb{1})\eta}_{\spbun{k}}\,{*\fb{2}}\\
&=-\ipd{\xi}{\sigma(\fb{2})\sigma(\volS)\eta}_{\spbun{k}}\fb{2}+\ipd{\xi}{\sigma(\fb{1})\sigma(\volS)\eta}_{\spbun{k}}(-\fb{1})
=\spf{\xi}{\sigma(\volS)\eta}.
\end{align*}
Together with \eqref{Diracsvolacomm:eq} and \eqref{spfdiv:eq} we get
\[
\delta{*\spf{\xi}{\eta}}
=-\delta\spf{\xi}{\sigma(\volS)\eta}
=i\ipd{\Dirac{}\xi}{\sigma(\volS)\eta}_{\spbun{k}}-i\ipd{\xi}{-\sigma(\volS)\Dirac{}\eta}_{\spbun{k}}.
\]
However $d=-*\delta*$ and $*1=\volS$ so \eqref{spfcurl:eq} follows.
\end{proof}

\begin{proof}[Proof of Proposition \ref{spinalign1:prop}]
Define a vector field $X'$ on $\sphere$ by $\alpha'=\fipd{\sphere}{X'}{\cdot}$.
Then $\fnorm{\sphere}{X'}=\fnorm{\sphere}{\alpha'}$ while
$\fipd{\spbun{k}}{\xi_1}{\sigma(\alpha')\xi_2}=\spf{\xi_1}{\xi_2}(X')$.
Hence
\begin{align*}
&\abs{\ipd{\xi_1}{\sigma(\alpha')\xi_2}}
\le\int_{\sphere}\bigabs{\fipd{\spbun{k}}{\xi_1}{\sigma(\alpha')\xi_2}}\,\volS
\le\int_{\sphere}\fnorm{\sphere}{X'}\,\fnorm{\sphere}{\spf{\xi_1}{\xi_2}}\,\volS\\
&\qquad{}\le\norm{\alpha'}_{L^\infty(\sphere)}\,\norm{\spf{\xi_1}{\xi_2}}_{L^1(\sphere)}
\le\frac{\pi}2\,\norm{\alpha'}_{L^\infty(\sphere)}\bigl[\norm{\delta\spf{\xi_1}{\xi_2}}_{L^1(\sphere)}+\norm{d\spf{\xi_1}{\xi_2}}_{L^1(\sphere)}\bigr]
\end{align*}
by Corollary \ref{1formL1est:cor}. 
On the other hand Proposition \ref{spfdivcurl:prop} leads to
\begin{align*}
\abs{\delta\spf{\xi_1}{\xi_2}}_{\sphere},\,\abs{d\spf{\xi_1}{\xi_2}}_{\sphere}
&\le\bigfnorm{\spbun{k}}{\sDirac{k}{t\alpha}\xi_1}\,\fnorm{\spbun{k}}{\xi_2}+\fnorm{\spbun{k}}{\xi_1}\,\bigfnorm{\spbun{k}}{\sDirac{k}{t\alpha}\xi_2}\\
&=(\abs{\lambda_1}+\abs{\lambda_2})\,\fnorm{\spbun{k}}{\xi_1}\,\fnorm{\spbun{k}}{\xi_2}
\end{align*}
(note that $\sigma(\volS)$ is a unitary operator in the fibres of $\spbun{k}$).
However
\[
2\int_{\sphere}\fnorm{\spbun{k}}{\xi_1}\,\fnorm{\spbun{k}}{\xi_2}\,\volS
\le\int_{\sphere}\bigl[\fnorm{\spbun{k}}{\xi_1}^2+\fnorm{\spbun{k}}{\xi_2}^2\bigr]\vtsp\volS
=\norm{\xi_1}_{L^2(\sphere)}^2+\norm{\xi_2}_{L^2(\sphere)}^2
=2.
\]
The result follows.
\end{proof}

\subsubsection*{Acknowledgements}
The author wishes to thank I.\ Sorrell and D.\ Vassiliev for several useful discussions. 
This research was supported by EPSRC under grant EP/E037410/1.
The author also acknowledges the hospitality of the Isaac Newton Institute for Mathematical Sciences in
Cambridge, where this work was completed during the programme Periodic and Ergodic Spectral
Problems.

\bigskip
\bigskip

\noindent
Daniel M.~Elton\\[6pt]
Department of Mathematics and Statistics\\
Fylde College\\
Lancaster University\\
Lancaster LA1 4YF\\
United Kingdom\\[6pt]
E-mail: d.m.elton@lancaster.ac.uk

\end{document}